\newcommand{\R}{\mathbb R}
\newcommand{\cl}{\textrm{cl}}
\newcommand{\C}{\mathcal{C}}
\newcommand{\Ss}{\mathcal{S}}
\newcommand{\M}{\mathcal{M}}
\DeclareMathOperator{\lfs}{lfs}
\DeclareMathOperator{\Neigh}{Neigh}
\DeclareMathOperator{\closest}{closest}
\DeclareMathOperator{\clComp}{clComp}
\DeclareMathOperator*{\argmin}{arg\,min}
\DeclareMathOperator{\CompNeigh}{CompNeigh}
\DeclareMathOperator{\reach}{reach}
\newtheorem*{propmain}{\cref{Prop:main}}
\newtheorem*{propfinish}{\cref{Prop:finish}}
\newtheorem*{abdisc}{\cref{Lem:abDisc}}
\newtheorem*{Xab}{\cref{Lem:Xab}}
\def\centerarc[#1](#2)(#3:#4:#5)
\title{Tighter Bounds for Reconstruction from $\epsilon$-samples} 
\author{H{\aa}vard {Bakke Bjerkevik}}{TU Graz, Austria}{bjerkevik@tugraz.at}{https://orcid.org/0000-0001-9778-0354}{}
\authorrunning{H.\,B. Bjerkevik} 
\keywords{Curve reconstruction, surface reconstruction, $\epsilon$-sampling} 
\begin{document}

\maketitle

\begin{abstract}
We show that reconstructing a curve in $\R^d$ for $d\geq 2$ from a $0.66$-sample is always possible using an algorithm similar to the classical {\sc NN-Crust} algorithm. Previously, this was only known to be possible for $0.47$-samples in $\R^2$ and $\frac{1}{3}$-samples in $\R^d$ for $d\geq 3$.
In addition, we show that there is not always a unique way to reconstruct a curve from a $0.72$-sample; this was previously only known for $1$-samples. We also extend this non-uniqueness result to hypersurfaces in all higher dimensions.
\end{abstract}

\section{Introduction}
The main problem considered in this paper is that of \emph{curve reconstruction}. Given a (finite) set of points $\Ss$ in $\R^d$, we assume that this is a subset of a union $\C$ of closed curves, and we want to reconstruct $\C$ knowing only $\Ss$.
Reconstructing $\C$ exactly from a finite set of points is unfeasible, so we restrict the problem to finding the graph $G_\C(\Ss)$ on $\Ss$ induced by $\C$: there is an edge in $G_\C(\Ss)$ between two points in $\Ss$ if you can walk from one to the other along $\C$ without meeting another point of $\Ss$.

To do this, one needs an assumption on $\Ss$ and $\C$. Some work on curve reconstruction and similar problems uses global assumptions for instance related to the maximum curvature \cite{attali1998r, bernardini1997sampling, de1994computational, niyogi2008finding, stelldinger2008topologically, stelldinger2009provably}.
A weakness of this approach is that it may force you to sample the whole curve densely even if just a small portion of it has large curvature.
An influential paper by Amenta, Bern and Eppstein \cite{amenta1998crust} introduced the {\sc Crust} algorithm along with a local sampling condition allowing the sampling density to vary depending on the local distance to the \emph{medial axis} of $\C$.
To be precise, they guarantee correct reconstruction for any \emph{$\epsilon$-sampled} curve in the plane whenever $\epsilon<0.252$.
The condition that a curve is $\epsilon$-sampled is weaker the larger $\epsilon$ is, so we would like to guarantee correct reconstruction for $\epsilon$-sampled curves for as large an $\epsilon$ as possible.

There followed a number of papers seeking to improve the sampling conditions of \cite{amenta1998crust}: Dey and Kumar \cite{dey1999simple} introduced {\sc NN-Crust} (NN = nearest neighbor), which allows curves in higher-dimensional space, and prove that correct reconstruction is guaranteed for $\epsilon < \frac{1}{3}$; Lenz \cite{lenz2006sample} defines a family of algorithms of which {\sc NN-Crust} is a special case and conjectures that $\epsilon\leq 0.48$ is sufficient for correctness in another special case;
and Ohrhallinger et al. \cite{ohrhallinger2016curve} introduce {\sc HNN-Crust}, proving correct reconstruction for $\epsilon < 0.47$, and also for $\rho < 0.9$, where $\rho$ is a \emph{reach}-based parameter that is related to (but different from) the parameter $\epsilon$. It is shown in \cite[Observation 6]{amenta1998crust} that correct reconstruction cannot be guaranteed for $\epsilon\geq 1$.

In addition, there have been several papers improving on \cite{amenta1998crust} in other ways: Gold \cite{gold1999crust} simplified the {\sc Crust} algorithm; Dey et al. \cite{dey2000curve} gave an algorithm allowing open curves; and Dey and Wenger \cite{dey2002fast} considered curves with corners.
Finally, we mention that \cite{althaus2001traveling} ties the $\epsilon$-sampling condition to a completely different approach to curve reconstruction by showing that a solution of the traveling salesman problem on the sample points gives a correct reconstruction from an $\epsilon$-sample for $\epsilon<0.1$.
For further references, we refer to the recent survey of Ohrhallinger et al. \cite{ohrhallinger20212d} on curve reconstruction in the plane.

Moving up to higher dimensions, one can consider the problem of \emph{submanifold reconstruction} \cite{abdelkader2018sampling,chazal2008smooth, dey2006curve, dey2008critical, niyogi2008finding}.
Instead of working with samples of a curve, one assumes that the points are sampled from a submanifold in $\R^d$ for $d\geq 3$; the case of surfaces in $\R^3$ is of particular interest.
While this is not the main focus of the paper, we note that this problem is important from a practical point of view; see for instance \cite{berger2017survey} for a survey covering the literature related to 3D scannings with imperfections.
So far, the results using $\epsilon$-sampling have been much weaker for surface reconstruction than for curve reconstruction. For $d=3$, correct surface reconstruction is only known to be possible to guarantee for $\epsilon\leq 0.06$ \cite{amenta2000simple}.

\subsection{Our contributions}

The question we study is: For which $\epsilon$ is it possible to guarantee correct curve reconstruction using an $\epsilon$-sample? Despite the popularity of $\epsilon$-sampling as a sampling condition in the literature and the body of work aiming to weaken sampling conditions, there is still a large gap between the $\epsilon$ for which we know that reconstruction is always possible and the $\epsilon$ for which we know that it is not always possible:
For any $\epsilon\in (0.47,1)$, it is as far as the author knows an open question if it is possible to guarantee correct reconstruction of a curve (or union of curves) in $\R^2$ using an $\epsilon$-sample.
For curves in $\R^d$, $d\geq 3$, the same is true for $\epsilon\in (\frac{1}{3},1)$. We improve this situation drastically in both ends. First we describe algorithms that guarantee correct reconstruction for $\epsilon=0.66$ for all $d\geq 2$. Algorithm \ref{Alg:main_alg} runs in $O(n^2)$ for any fixed $d$, and Algorithm \ref{Alg:main_alg_planar} runs in $O(n\log n)$ for $d=2$.
While we have not implemented our algorithms, we believe that the speed of Algorithm \ref{Alg:main_alg_planar} in practice is comparable to that of the algorithms in \cite{dey1999simple} and \cite{ohrhallinger2016curve} because of their similarities.

Secondly, we give an example demonstrating that one cannot in general guarantee correct reconstruction using $0.72$-samples for any $d\geq 2$.
Thus, the interval of $\epsilon$ for which it is unknown if an $\epsilon$-sample is enough for reconstruction is reduced from $(0.47,1)$ (or $(\frac{1}{3},1)$ for $d\geq 3$) to $(0.66,0.72)$.

By a straightforward generalization, we use our example to prove that a $0.72$-sample is not in general enough to guarantee correct reconstruction of a manifold of any dimension. We do not show any positive results in higher dimensions, but we hope that since we do not put any restriction on the ambient dimension of the set of samples, our ideas can be useful also for reconstruction of higher-dimensional manifolds.

A serious alternative to the $\epsilon$-sampling condition is the $\rho$-sampling condition of \cite{ohrhallinger2016curve}. The authors of \cite{ohrhallinger2016curve} argue that $\epsilon$-sampling with $\epsilon\leq 0.47$ requires more sample points than what $\rho$-sampling does.
With our new bounds on $\epsilon$, the situation changes somewhat. An in-depth discussion of the relationship between $\epsilon$-sampling and $\rho$-sampling is beyond the scope of this paper (as is the question of whether the two sampling conditions can be combined in a way that exploits the advantages of both of them), but we study some instructive examples in \cref{Sec:rho-sampling}.
To summarize, $\rho$-sampling seems to do better for curves with slowly changing curvature, while $\epsilon$ does better in some examples with rapidly changing curvature. Both our upper and lower bounds for $\epsilon$ help us understand the relative strengths of $\epsilon$- and $\rho$-sampling.

We begin by introducing necessary definitions and notation in \cref{Sec:defs}, before we prove the main theorem in \cref{Sec:proof}. In \cref{Sec:counterexample}, we show that correct reconstruction from $0.72$-samples is not always possible, and we finish off by generalizing the example to higher dimensions in \cref{Sec:hypersurface}.

\section{Definitions and notation}
\label{Sec:defs}

Throughout most of the paper, we work with a finite, disconnected union $\C$ of closed curves in $\R^d$ for some fixed $d\geq 2$, and a finite subset $\Ss$ of $\C$. We will call the elements of $\Ss$ \emph{sample points}.
By a closed curve, we mean the image of an injective map from the circle.
Sometimes it will be convenient to fix an orientation of (a connected component of) $\C$.
The notation $a\to b$ means that we have chosen an orientation of a connected component of $\C$ containing $a,b\in \Ss$ and that by starting at $a$ and moving along $\C$ following this orientation, the next element of $\Ss$ one encounters is $b$. We use the shorthand $a\to b\to c$ when we mean $a\to b$ and $b\to c$.
For $p,q$ in the same connected component of $\C$, we define $[p,q]$ as $\{p\}$ if $p=q$, and as the image of any injective path from $p$ to $q$ that is consistent with the orientation of $\C$ if $p\neq q$.
We define $[a,b)$, $(a,b]$ and $(a,b)$ similarly depending on whether $a$ and/or $b$ are included or not. By a \emph{midpoint of $[a,b]$} we mean a point $p\in [a,b]$ with $d(p,a)=d(p,b)$, where $d(x,y)$ denotes Euclidean distance.

If $(a\to) b\to c$ or $c\to b (\to a)$, we say that ($a$,) $b$ and $c$ are \emph{consecutive}. We define $G_\C(\Ss)$ as the graph on $\Ss$ with an edge between $a$ and $b$ if and only if $a$ and $b$ are consecutive.

For $X\subset \R^d$, let $d(x,X)\coloneqq\inf_{y\in X} d(x,y)$.
The \emph{medial axis} $\M$ \cite{Blum} is the set of points in $\R^d$ that do not have a unique closest point in $\C$. For $p\in \C$, the \emph{local feature size} $\lfs(p)$ is defined as $d(p,\M)$.
For $\epsilon>0$, we say that $\Ss\subset \C$ is an \emph{$\epsilon$-sample} (of $\C$) if for all $p\in \C$, $d(p,\Ss) < \epsilon \lfs(p)$. Note that being an $\epsilon$-sample is a stronger condition the smaller $\epsilon$ is.
Throughout the paper we will assume that $\Ss$ is an $\epsilon$-sample, but our assumptions on $\epsilon$ will vary.

We define $\cl\colon \R^d\setminus \M \to \C$ by letting $\cl(x)$ be the point in $\C$ closest to $x$; i.e., $\cl(x) = \arg\min_{p\in \C} d(x,p)$.
It follows immediately from the definition of $\M$ that $\cl$ is well-defined. We prove that $\cl$ is continuous in \cref{Lem:clCont}.

We use the notation $B_x(r)$ for the closed ball with radius $r$ centered at $x\in \R^d$. For $x,y\in\R^d$, the closed line segment from $x$ to $y$ is denoted by $\overline{xy}$.

We often restrict our attention to a plane $\Pi\subset \R^d$, which we identify with $\R^2$. This way, we can associate canonical coordinates $(x,y)$ to each point $p\in \Pi$.

\section{Proof that $0.66$-samples allow reconstruction}
\label{Sec:proof}

This section is devoted to giving a proof of the main theorem:
\begin{theorem}
\label{Thm:main}
Let $\C$ be a union of closed curves in $\R^d$ for some $d\geq 2$, and let $\Ss$ be a $0.66$-sample of $\C$ containing $n$ points. Given $\Ss$ as input, {\sc NN-compatible} and {\sc Compatible-crust} both compute $G_\C(\Ss)$. The former runs in $O(n^2)$, and for $d=2$, the latter runs in $O(n\log n)$.
\end{theorem}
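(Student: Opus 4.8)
The plan is to prove correctness of the two algorithms and then bound their running times, with the bulk of the work being the correctness of the edge set they produce. I would first set up the geometric toolkit that controls how sample points can be distributed along $\C$ when $\Ss$ is a $0.66$-sample. The key consequence of the $\epsilon$-sampling condition, used repeatedly in this area, is that consecutive sample points are close relative to the local feature size, so locally $\C$ looks almost straight: between consecutive points $a\to b$ the arc $[a,b]$ stays inside a thin lens, and angles between relevant line segments (such as $\overline{ab}$ and $\overline{bc}$ for $a\to b\to c$, or between $\overline{ab}$ and the segment to a nearest neighbor) are bounded away from the bad range. I would state these as a sequence of lemmas — continuity of $\cl$ (already promised as \cref{Lem:clCont}), an upper bound on $d(a,b)$ for consecutive $a,b$ in terms of $\lfs$, a lower bound forcing non-consecutive points to be comparatively far, and an angle bound — and calibrate the constants so that everything works precisely at $\epsilon = 0.66$.

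Next I would analyze the combinatorial rule that {\sc NN-compatible} and {\sc Compatible-crust} use to select edges. Presumably each point $p$ is joined to its nearest neighbor, and then a second edge is added subject to a ``compatibility'' angle condition (mirroring the halfedge/second-neighbor step of {\sc NN-Crust} and {\sc HNN-Crust}). The correctness argument then has two halves: (i) every edge of $G_\C(\Ss)$ is selected — i.e. if $a\to b$ then the algorithm connects $a$ and $b$; and (ii) no spurious edge is selected — if $a$ and $b$ are not consecutive, the algorithm does not join them. For (i), I would show that a consecutive neighbor is either the nearest neighbor or wins the compatibility competition, using the angle and distance bounds to rule out any closer-or-more-compatible non-neighbor. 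For (ii), I would show that any non-consecutive pair is separated enough, or subtends too large an angle with the genuine neighbors, to be excluded by the selection rule. Both halves reduce to a finite case analysis on the local configuration of at most a handful of sample points around a given vertex; this is where I would expect to invoke a dedicated lemma bounding configurations of a small number of points (the excerpt hints at such a ``$8$ points'' lemma), essentially an extremal-geometry computation pinning down the worst case.

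The main obstacle, I expect, is tightness: pushing the constant all the way to $0.66$ means the geometric inequalities are no longer slack, so the angle/distance estimates must be extracted carefully rather than through crude triangle-inequality bounds — in particular one must track how the arc $[a,b]$ can bend while respecting the medial-axis constraint (via the reach/$\lfs$ relationship and the structure of $\M$), and then feed these sharp bounds into the case analysis so the ``good'' neighbor always beats every ``bad'' candidate by a strict margin. Once correctness is in hand, the complexity claims are comparatively routine: {\sc NN-compatible} examines $O(n)$ candidate pairs per point for $O(n^2)$ total, while for $d=2$ {\sc Compatible-crust} computes the relevant proximity structure (a Delaunay triangulation or nearest-neighbor graph of the planar point set) in $O(n\log n)$ and then does $O(n)$ additional work, since each vertex only consults a bounded-degree local neighborhood in that structure. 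I would close by assembling these pieces into the statement of \cref{Thm:main}.
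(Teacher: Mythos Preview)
Your high-level skeleton is right --- geometric lemmas controlling local shape, then a per-vertex argument that the two edges chosen are the two correct neighbors, then routine complexity bounds --- and it matches the paper's organization. But two concrete points deserve correction. First, the ``8 points'' lemma you cite is \cref{Lemma:8_points}, which belongs to the \emph{counterexample} in \cref{Sec:counterexample}, not to the proof of \cref{Thm:main}; it plays no role here. Second, the \textsc{Compatible-crust} running time does not rest on bounded degree of Delaunay vertices (that is false) but on the total edge count being $O(n)$ in the plane, together with the separate fact (\cref{Cor:Delaunay}) that every edge of $G_\C(\Ss)$ is Delaunay, so restricting to Delaunay neighbors loses nothing.

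The more substantive gap is that your plan treats ``compatibility'' as a generic angle condition in the spirit of \textsc{NN-Crust}/\textsc{HNN-Crust}, and hopes that careful but standard angle/distance estimates will reach $0.66$. The paper's point is precisely that they do not: the new ingredient is a \emph{specific} compatibility test built from the set $X(a,b)$ (\cref{Def:Xab}) and its associated lens $U(a,b)$, chosen so that it is the optimal local criterion --- a triple can be consecutive iff it is compatible in this sense. The heavy lifting is then \cref{Prop:main} (and its cousin \cref{Lem:abDisc}), which says that for $a\to b\to c$ the arc $(b,c]$ must avoid certain balls of radius $d_p/\epsilon$; these are proved not by curvature/straightness estimates but by a connectedness argument for $\cl$ along carefully chosen segments. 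From these one extracts the quantitative angle bounds (\cref{Cor:141}, \cref{Cor:102}, \cref{Lem:117}) that feed the final per-vertex argument (\cref{Prop:closest}, \cref{Prop:finish}). Without committing to this specific compatibility criterion and the $\cl$-connectedness technique, your generic lens/angle approach would stall well below $0.66$, which is exactly where the prior literature stalled.
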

The algorithms are rather simple, and are similar to the previous \textsc{Crust}-type algorithms.
To be specific, {\sc Compatible-crust} borrows the idea from \cite{amenta1998crust} of only selecting edges from the Delaunay triangulation\footnote{For an introduction to Delaunay triangulations in the plane, see \cite[Chapter 9]{dutch}.},
and both algorithms use the idea from \cite{dey1999simple} of including an edge between each sample point and its nearest neighbor (called ``closest'' in the algorithms) in addition to the nearest neighbor satisfying some condition related to the angle between the resulting two edges (called ``clComp'' in the algorithms).
The new ingredient in our algorithm is that we require triples of consecutive points to be \emph{compatible} (see \cref{Fig:compatible}), which is a different criterion than those used in previous algorithms.
We define this compatibility property in \cref{Subsec:compatible}.
This criterion has the advantage over criteria used in previous papers in that it is the optimal local criterion for when a triple of points can be consecutive:
If a triple is not compatible, it cannot be consecutive, while if it is compatible, there is a curve passing through the three points that does not violate the sampling condition locally.
It will be clear from the definition that checking if a triple $(a,b,c)\in \Ss^3$ is compatible can be done in constant time.
The separation into two algorithms is done to optimize the running time: For $d=2$, computing the Delaunay triangulation saves us time, while for $d\geq 3$, a more straightforward approach is at least as efficient in the worst case.

\begin{algorithm}
\internallinenumbers
\DontPrintSemicolon
\KwIn{$0.66$-sample $\Ss\subset \R^d$ of $\C$ for $d\geq 2$}
\KwOut{$G_\C(\Ss)$}
Initialize $G\gets \{\}$\;
\ForEach{$x\in \Ss$}{
$\closest \gets \argmin_{y\in \Ss\setminus\{x\}}\{d(x,y)\}$\;
$\CompNeigh\gets \{y\in \Ss\mid (\closest,x,y)\text{ is compatible}\}$\;
$\clComp \gets \argmin_{y\in \CompNeigh}\{d(x,y)\}$\;
$G\gets G\cup \{\{x,\closest\},\{x,\clComp\}\}$\;
}
\Return{G}\;
\caption{{\sc NN-compatible}}
\label{Alg:main_alg}
\end{algorithm}

In {\sc NN-compatible}, we run through the for-loop $n$ times. Each line in the loop can be executed in $O(n)$, which gives a total running time of $O(n^2)$.

\begin{algorithm}
\internallinenumbers
\DontPrintSemicolon
\KwIn{$0.66$-sample $\Ss\subset \R^d$ of $\C$ for $d\geq 2$}
\KwOut{$G_\C(\Ss)$}
Compute the $1$-skeleton $D_1(\Ss)$ of a Delaunay triangulation of $\Ss$.\;
Initialize $G\gets \{\}$\;
\ForEach{$x\in \Ss$}{
$\Neigh \gets$ the set of vertices in $D_1(\Ss)$ adjacent to $x$\;
$\closest \gets \argmin_{y\in \Neigh}\{d(x,y)\}$\;
$\CompNeigh\gets \{y\in \Neigh\mid (\closest,x,y)\text{ is compatible}\}$\;
$\clComp \gets \argmin_{y\in \CompNeigh}\{d(x,y)\}$\;
$G\gets G\cup \{\{x,\closest\},\{x,\clComp\}\}$\;
}
\Return{G}\;
\caption{{\sc Compatible-crust}}
\label{Alg:main_alg_planar}
\end{algorithm}

Computing a Delaunay triangulation in the plane can be done in $O(n\log n)$ \cite[Theorem 9.12]{dutch}.
The total number of edges in $D_1(\Ss)$ is $O(n)$, so the sum of the sizes of all the $\Neigh$ over all $x\in \Ss$ is $O(n)$. Thus, the total running time of the for-loop is $O(n)$. This gives a running time for {\sc Compatible-crust} of $O(n\log n+n) = O(n\log n)$ for $d=2$.
For $d\geq 3$, the Delaunay triangulation may have a size as large as $\Theta(n^{\lceil d/2 \rceil})$ \cite[Chapter 27.1]{toth2017handbook}, in which case {\sc Compatible-crust} does not do better than {\sc NN-compatible} for $d\in \{3,4\}$ and does worse for $d\geq 5$.

It remains to be proved that the algorithms output $G_\C(\Ss)$. Since {\sc Compatible-crust} restricts itself to the set of edges of the Delaunay triangulation, we need to know that this set contains the edges of $G_\C(\Ss)$.
In the planar case, this is proved in \cite[Lemma 11]{amenta1998crust}. We extend the result to higher ambient dimensions in \cref{Cor:Delaunay}.

Finally, we need to prove that the closest and ``closest compatible'' neighbors to a sample point are indeed the adjacent vertices in $G_\C(\Ss)$.
As the proof is rather long and technical, we devote a full section to it, which we split into three subsections: In \cref{Subsec:basic}, we prove a sequence of lemmas about the local behavior of $\Ss$ and $\C$. Then, in \cref{Subsec:restrictions}, we prove lower bounds on the angle between certain triples of points on $\C$; in particular, \cref{Lem:abDisc} implies that consecutive triples of points have to be compatible.
Lastly, in \cref{Subsec:compatible}, we use the results from the first two subsections to prove that the edges constructed by the algorithms are indeed exactly the edges in $G_\C(\Ss)$.

\subsection{Basic observations about $\Ss$ and $\C$}
\label{Subsec:basic}

Recall that $\Ss$ is assumed to be an $\epsilon$-sample of $\C$. In this subsection, we assume $\epsilon\leq 1$. Later, we will restrict $\epsilon$ to smaller values and state our assumptions on $\epsilon$ explicitly in each case.

For $p\in \C$, define $d_p = d(p,\Ss)$.
By definition of $\cl$ and $\epsilon$-sample, $\cl$ is defined in $B_p\left(\frac{d_p}{\epsilon}\right)$. Since we assume $\epsilon\leq 1$, $\cl$ is in particular defined in $B_p(d_p)$. We will use the following lemma throughout the paper without referring to it explicitly.
\begin{lemma}
\label{Lem:clCont}
$\cl$ is continuous.
\end{lemma}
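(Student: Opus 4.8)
The plan is to show continuity of $\cl$ at an arbitrary point $x \in \R^d \setminus \M$ by a standard compactness-plus-uniqueness argument. Fix $x$ and let $p = \cl(x)$ be its unique closest point in $\C$. Take any sequence $x_n \to x$ in $\R^d \setminus \M$; I want to show $\cl(x_n) \to p$. Write $p_n = \cl(x_n)$. Since each $x_n$ is eventually within, say, distance $d(x,\C) + 1$ of $x$, the points $p_n$ are closest points to the $x_n$ and hence stay in a bounded region of $\C$; because $\C$ is a finite union of closed curves (images of the circle), it is compact, so $\{p_n\}$ has a convergent subsequence with limit $p^* \in \C$.

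The key step is then to identify $p^* = p$. By continuity of the distance function and the fact that $p_n$ minimizes $d(x_n, \cdot)$ over $\C$, for every $q \in \C$ we have $d(x_n, p_n) \le d(x_n, q)$; passing to the limit along the subsequence gives $d(x, p^*) \le d(x, q)$ for all $q \in \C$, so $p^*$ is a closest point of $\C$ to $x$. Since $x \notin \M$, the closest point is unique, forcing $p^* = p$. A standard subsequence argument (every subsequence of $(p_n)$ has a further subsequence converging to $p$) then shows the whole sequence $p_n \to p$, which establishes sequential continuity of $\cl$ at $x$, and hence continuity since we are in a metric space.

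I expect the only mild subtlety — not really an obstacle — to be making sure the $p_n$ genuinely lie in a compact set: this uses that $\C$ is compact (finite union of images of the circle under injective, hence continuous, maps) together with the boundedness of $(x_n)$, so $d(x_n, p_n) = d(x_n, \C)$ is bounded and the $p_n$ are confined to a closed bounded subset of $\C$. Everything else is the textbook proof that a projection onto a closed set with unique nearest points is continuous; no assumption on $\epsilon$ beyond what guarantees $\cl$ is defined on a neighborhood of $x$ is needed, and in fact the statement as phrased is about $\cl$ on all of $\R^d \setminus \M$, so the argument above applies verbatim there.
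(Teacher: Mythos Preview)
Your proof is correct and is essentially the same argument as the paper's: both use compactness of $\C$ to extract an accumulation point of $\cl(x_n)$, identify it with $\cl(x)$ via the uniqueness of the closest point (since $x\notin\M$), and conclude convergence of the full sequence. Your discussion of bounding the $p_n$ is unnecessary since $\C$ itself is compact, but this does no harm.
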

\begin{proof}
Let $x\in \R^d\setminus \M$, and let $x_1,x_2,\dots$ be a sequence of points in $\R^d\setminus \M$ that converges to $x$. To show that $\cl$ is continuous, it is enough to show that the image of the sequence under $\cl$ converges to $\cl(x)$.
Let $y$ be an accumulation point in $\C$ of the sequence $\cl(x_1), \cl(x_2), \dots$, which exists by compactness of $\C$. Then $d(x,y)\leq d(x,y')$ for any $y'\in \C$, so $y=\cl(x)$. Thus, $\cl(x)$ is the only accumulation point of $\cl(x_1), \cl(x_2), \dots$, so by compactness of $\C$, the sequence converges to $\cl(x)$.
\end{proof}

\begin{lemma}
\label{Lem:I}
Let $x\in\R^d$ and $q\in\C$ be such that $\overline{xq}$ does not intersect the medial axis. Let $p=\cl(x)$. Then the interior of $B_x(d(x,q))$ contains either $[p,q)$ or $(q,p]$.
\end{lemma}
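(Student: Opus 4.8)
The plan is to produce the required arc from $p$ to $q$ as the trace of $\cl$ along the segment $\overline{xq}$, and to control distances to $x$ by a triangle inequality. For $t\in[0,1]$ put $x_t\coloneqq(1-t)x+tq$. Since $\overline{xq}$ avoids $\M$, each $x_t$ lies off the medial axis, so $\gamma(t)\coloneqq\cl(x_t)$ is well defined; by \cref{Lem:clCont} the map $\gamma$ is continuous, and $\gamma(0)=\cl(x)=p$, $\gamma(1)=\cl(q)=q$ (the closest point of $\C$ to a point of $\C$ is that point itself). Hence $\gamma$ is a path in $\C$ from $p$ to $q$, so $p$ and $q$ lie in a common connected component $S$ of $\C$, which is a simple closed curve. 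If $p=q$ the assertion is vacuous since then $[p,q)=(q,p]=\emptyset$, so from now on I assume $p\ne q$ (in particular $q\ne x$ and $d(x,q)>0$).

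The crucial estimate is: for every $t\in[0,1)$, either $\gamma(t)=q$ or $d(x,\gamma(t))<d(x,q)$. Indeed,
\[
d(x,\gamma(t))\le d(x,x_t)+d(x_t,\gamma(t))=d(x,x_t)+d(x_t,\C)\le d(x,x_t)+d(x_t,q)=d(x,q),
\]
where the last equality holds because $x_t\in\overline{xq}$. If equality holds throughout, then $q$ realizes $d(x_t,\C)$; but $\gamma(t)=\cl(x_t)$ is the \emph{unique} closest point of $\C$ to $x_t$ (as $x_t\notin\M$), forcing $\gamma(t)=q$. Together with $\gamma(1)=q$, this shows $K\coloneqq\gamma([0,1])\subseteq\{q\}\cup U$, where $U\coloneqq\{y\in\C : d(x,y)<d(x,q)\}=\C\cap\operatorname{int}B_x(d(x,q))$.

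To finish, observe that $K$ is a compact connected subset of the circle $S$, hence a closed arc or all of $S$; since it contains the two distinct points $p$ and $q$, in either case $K$ contains one of the two closed arcs of $S$ joining $p$ and $q$. Orienting the component $S$ so that this arc is $[p,q]$, we get
\[
[p,q)=[p,q]\setminus\{q\}\subseteq K\setminus\{q\}\subseteq U\subseteq\operatorname{int}B_x(d(x,q));
\]
the opposite orientation gives $(q,p]$ in place of $[p,q)$. Either way the claim follows.

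The one point that needs care is the step upgrading the triangle-inequality bound from ``$\le$'' to ``$<$'': it rests on $x_t\notin\M$, so that $\cl(x_t)$ is the \emph{unique} nearest point of $\C$, and I should check that nothing breaks if the equality case $\gamma(t)=q$ actually occurs for some $t<1$. It may occur, but it is harmless: it only reinserts the point $q$ (already in $K$) into the image, and the inclusion $K\subseteq\{q\}\cup U$ — all that the final step uses — is unaffected. Everything else reduces to continuity of $\cl$ (\cref{Lem:clCont}) and the elementary fact that connected subsets of a circle are arcs.
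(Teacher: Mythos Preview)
Your proof is correct and follows essentially the same approach as the paper's: both arguments observe that $\cl(\overline{xq})$ is a connected subset of $\C$ containing $p$ and $q$ and hence contains one of the two arcs between them, and both use the triangle inequality together with uniqueness of the closest point (since $\overline{xq}\cap\M=\emptyset$) to show that every point of this image other than $q$ lies in the open ball. Your write-up is more explicit about edge cases (the possibility $\gamma(t)=q$ for $t<1$, the $p=q$ case, and the ``connected subsets of a circle are arcs'' step), but the underlying argument is identical.
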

\begin{proof}
By continuity of $\cl$ and connectedness of $\overline{xq}$, $\cl(\overline{xq})$ must contain either $[p,q]$ or $[q,p]$. Suppose the former. Then for any $z\in [p,q)$, $z=\cl(i)$ for some $i\in \overline{xq}$. Thus, \[d(x,q)=d(x,i)+d(i,q)>d(x,i)+d(i,z)\geq d(x,z).\] The statement follows, and the argument for $[q,p]$ is exactly the same.
\end{proof}

\begin{lemma}
\label{Lem:d_p_closest}
Let $a\to b$ and $p\in (a,b)$. Then $d_p=\min\{d(p,a),d(p,b)\}$, and $d_p<d(p,s)$ for all $s\in \Ss\setminus \{a,b\}$.
\end{lemma}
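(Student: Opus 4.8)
The plan is to locate $p\in(a,b)$ relative to the sample points by exploiting that $a\to b$ means no sample point lies strictly between $a$ and $b$ along $\C$, and then to rule out every $s\in\Ss\setminus\{a,b\}$ by a continuity/connectedness argument via $\cl$. First I would show the easy inequality $d_p\leq\min\{d(p,a),d(p,b)\}$, which is immediate since $a,b\in\Ss$. The real content is the reverse: that some closest sample point to $p$ is $a$ or $b$, and that every \emph{other} sample point is strictly farther. I would argue by contradiction: suppose $s\in\Ss\setminus\{a,b\}$ satisfies $d(p,s)\leq d(p,a)$ and $d(p,s)\leq d(p,b)$ (if no such $s$ exists, then $a$ or $b$ realizes $d_p$ and all other points are strictly farther, which is the claim).

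Next I would invoke \cref{Lem:I} with $x$ a point very close to $p$ on one side and $q=s$; more cleanly, since $p\in\C$ itself, one wants a statement about $\cl$ along the segment $\overline{ps}$. Here I need $\overline{ps}$ to miss the medial axis. Since $\epsilon\leq 1$ and $\cl$ is defined on $B_p(d_p/\epsilon)\supseteq B_p(d_p)$, and $d(p,s)\geq d_p$ is \emph{not} automatically $\leq d_p$... so this is the subtle point: if $d(p,s)$ is strictly larger than $d_p$ the segment $\overline{ps}$ need not avoid $\M$. So instead I would apply \cref{Lem:I} with $x=p$ treated as its own closest point only when $\overline{ps}$ avoids $\M$, i.e. when $d(p,s)\leq d_p/\epsilon$; under the contradiction hypothesis $d(p,s)\leq\min\{d(p,a),d(p,b)\}$, and since $a$ or $b$ is within $d_p\leq d_p/\epsilon$ of $p$, we get $d(p,s)\leq d_p\leq d_p/\epsilon$, so $\overline{ps}$ does avoid $\M$ and $\cl(\overline{ps})$ contains $[p,s]$ or $[s,p]$. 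In particular $\C$ near $p$, followed along one direction, reaches $s$ while staying inside $B_p(d(p,s))\subseteq B_p(d_p)$... but that direction from $p$ must first pass through $a$ or $b$ (whichever comes next along $\C$), since $p\in(a,b)$ and $a\to b$. That forces $a$ (or $b$) to lie in the \emph{open} ball $B_p(d(p,s))^\circ$, giving $d(p,a)<d(p,s)$ (or $d(p,b)<d(p,s)$), contradicting $d(p,s)\leq\min\{d(p,a),d(p,b)\}$ — unless $s$ itself equals $a$ or $b$, contradicting $s\in\Ss\setminus\{a,b\}$.

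Thus no such $s$ exists, so $d_p=\min\{d(p,a),d(p,b)\}$ and $d_p<d(p,s)$ for all $s\in\Ss\setminus\{a,b\}$. The step I expect to be the main obstacle is handling the interval issue in \cref{Lem:I} carefully: one direction of walking from $p$ along $\C$ goes toward $a$, the other toward $b$, and I must match the side ($[p,s)$ versus $(s,p]$) that \cref{Lem:I} hands me with whichever of $a,b$ lies on that side, to conclude that this point is strictly inside the open ball. A minor additional care is the degenerate possibility that the path $\cl(\overline{ps})$ from $p$ reaches $s$ \emph{before} reaching $a$ or $b$ — but that cannot happen, precisely because $a\to b$ guarantees $a$ and $b$ are the first sample points encountered from $p$ in the two directions, and $s\neq a,b$.
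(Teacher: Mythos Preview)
Your approach is the paper's: assume some $s\in\Ss\setminus\{a,b\}$ is at least as close to $p$ as both $a$ and $b$, apply \cref{Lem:I} with $x=p$, $q=s$, and conclude that one of $a,b$ lies in the \emph{open} ball of radius $d(p,s)$ about $p$, a contradiction. The matching of ``which side of $p$'' with ``which of $a,b$'' that you flag as the main obstacle is exactly right and causes no trouble.

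There is one genuine slip, however. To justify that $\overline{ps}$ misses the medial axis you write ``since $a$ or $b$ is within $d_p\leq d_p/\epsilon$ of $p$, we get $d(p,s)\leq d_p$''. But ``$a$ or $b$ is within $d_p$ of $p$'' is precisely the conclusion of the lemma, so invoking it here is circular; and from the bare hypothesis $d(p,s)\leq\min\{d(p,a),d(p,b)\}$ you cannot deduce $d(p,s)\leq d_p$ (some third sample point could be the actual minimizer). The fix is immediate and is what the paper does: simply take $s$ to be a point of $\Ss$ \emph{minimizing} the distance to $p$. Then $d(p,s)=d_p$ by definition, so $\overline{ps}\subset B_p(d_p)\subset B_p(d_p/\epsilon)$ avoids $\M$, and the rest of your argument goes through unchanged.
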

\begin{proof}
Suppose $s\notin \{a,b\}$ is a point in $\Ss$ minimizing the distance to $p$, so $d_p=d(p,s)$. Then $B_p(d(p,s))=B_p(d_p)$ and thus $\overline{ps}$ does not intersect the medial axis.
Since $\cl(p)=p$, \cref{Lem:I} (with $x=p$ and $q=s$) shows that the interior of $B_p(d_p)$ contains either $a$ or $b$, which is a contradiction, as then either $d(p,a)$ or $d(p,b)$ would be smaller than $d(p,s)$. Thus, $d_p$ is equal to either $d(p,a)$ or $d(p,b)$.
\end{proof}
As a step in proving the correctness of {\sc Compatible-crust}, we need to show that for $a\to b$, there is an edge between $a$ and $b$ in the Delaunay triangulation of $\Ss$.
Since we do not assume that $\Ss$ is in general position, we do not know that there is a unique Delaunay triangulation of $\Ss$.
Still, we know that if there is a closed ball $B$ such that $B\cap \Ss=\{a,b\}$, then any Delaunay triangulation of $\Ss$ has an edge between $a$ and $b$. In the special case of curves in the plane, the following was proved in \cite[Lemma 11]{amenta1998crust}.
\begin{corollary}
\label{Cor:Delaunay}
Let $a\to b$. Then there is an edge between $a$ and $b$ in any Delaunay triangulation of $\Ss$.
\end{corollary}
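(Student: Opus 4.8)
The plan is to exhibit a single closed ball $B$ with $B\cap\Ss=\{a,b\}$; by the remark preceding the statement, this forces an edge between $a$ and $b$ in every Delaunay triangulation of $\Ss$. The natural candidate is a ball centered at a midpoint $p$ of $[a,b]$. First I would pick such a $p\in(a,b)$ (with $d(p,a)=d(p,b)$, which exists by the intermediate value theorem applied to $d(\cdot,a)-d(\cdot,b)$ along $[a,b]$, noting that $a\to b$ gives no sample point strictly between them). By \cref{Lem:d_p_closest}, $d_p=\min\{d(p,a),d(p,b)\}=d(p,a)=d(p,b)$, and $d_p<d(p,s)$ for every $s\in\Ss\setminus\{a,b\}$. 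Hence the closed ball $B_p(d_p)$ contains $a$ and $b$ on its boundary and contains no other sample point at all, so $B_p(d_p)\cap\Ss=\{a,b\}$.

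The slightly delicate point is ensuring that such a midpoint $p$ actually lies in the open arc $(a,b)$ rather than degenerating. Since $a\to b$, the arc $[a,b]$ meets $\Ss$ only in $a$ and $b$; the continuous function $f(x)=d(x,a)-d(x,b)$ along a parametrization of $[a,b]$ satisfies $f(a)<0$ and $f(b)>0$, so it vanishes at some interior point $p$, and this $p\neq a,b$ because $f$ does not vanish there. This is the only step requiring a small argument; the rest is a direct invocation of \cref{Lem:d_p_closest}. I would also remark that $B_p(d_p)$ is a legitimate witness precisely because its intersection with $\Ss$ is exactly the two points $a,b$, which is the standard criterion guaranteeing a Delaunay edge even when $\Ss$ is not in general position.

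The main obstacle is essentially bookkeeping rather than mathematics: making sure the midpoint is interior to the arc so that \cref{Lem:d_p_closest} applies verbatim, and stating cleanly that a ball whose boundary passes through two points and whose interior (indeed whose closure, minus those two points) avoids all other sample points certifies a Delaunay edge in any triangulation. No estimate on $\epsilon$ beyond $\epsilon\leq 1$ is needed, since \cref{Lem:d_p_closest} was already proved under that assumption.
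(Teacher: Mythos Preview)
Your argument is correct and is essentially the same as the paper's: pick a midpoint $p$ of $[a,b]$, apply \cref{Lem:d_p_closest} to conclude $B_p(d_p)\cap\Ss=\{a,b\}$, and invoke the Delaunay-edge criterion. The only difference is that you spell out the intermediate-value-theorem justification for the existence of an interior midpoint, which the paper leaves implicit.
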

\begin{proof}
Let $p$ be a midpoint on $[a,b]$. By \cref{Lem:d_p_closest}, $B_p(d_p)\cap \Ss = \{a,b\}$, so there is an edge between $a$ and $b$ in the Delaunay triangulation of $\Ss$.
\end{proof}

For $x,y\in \R^d$, let $E(x,y)$ be the set of points in $\R^d$ that are equidistant from $x$ and $y$.

\begin{lemma}
\label{lem:unique2}
Let $b\in \Ss$, let $a\neq b$ be in the same connected component of $\C$ as $b$, let $p$ be either the midpoint on $[a,b]$ or equal to $a$, and assume $d_p=d(p,b)$.
Then for every $x\in B_p\left(\frac{d_p}{\epsilon}\right)\cap E(a,b)$,
\begin{itemize}
\item[(i)] $\cl(x) \in (a,b)$,
\item[(ii)] $(a,b)\subset B_x(d(x,b))$.
\end{itemize}
\end{lemma}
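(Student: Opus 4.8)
The plan is to derive both parts from three facts: $\cl$ is continuous (\cref{Lem:clCont}) and defined on the closed ball $B\coloneqq B_p\left(\frac{d_p}{\epsilon}\right)$, this ball is convex and disjoint from $\M$ (the $\epsilon$-sampling condition gives $\lfs(p)=d(p,\M)>\frac{d_p}{\epsilon}$), and $\cl(x)$ is the \emph{unique} closest point of $\C$ to $x$. First I would note that $a,b\in B$: indeed $d(p,b)=d_p\leq\frac{d_p}{\epsilon}$, while $d(p,a)$ equals $d_p$ when $p$ is the midpoint on $[a,b]$ (then $d(p,a)=d(p,b)$) and $0$ when $p=a$. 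Since $B$ is convex and misses $\M$, the segments $\overline{xa}$ and $\overline{xb}$ lie in $B$, hence miss $\M$, so \cref{Lem:I} applies to the point $x$ with $q\in\{a,b\}$.

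For (i), the first observation is that $\cl(y)\notin\{a,b\}$ for every $y\in E(a,b)$: if $\cl(y)=a$ then, $a$ being the unique nearest point of $\C$ to $y$, we would get $d(y,a)<d(y,b)$, contradicting $y\in E(a,b)$; similarly $\cl(y)\neq b$. Now $B\cap E(a,b)$, being the intersection of a ball with an affine subspace, is convex and hence connected, and $\cl$ maps it continuously into $\C\setminus\{a,b\}$, whose connected components are exactly the two open arcs $(a,b)$ and $(b,a)$. So it is enough to produce a single point $y_0\in B\cap E(a,b)$ with $\cl(y_0)\in(a,b)$; then the connected image $\cl(B\cap E(a,b))$ lies in $(a,b)$, which gives (i) since $x\in B\cap E(a,b)$. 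When $p$ is the midpoint on $[a,b]$, take $y_0=p$: it is the center of $B$, it lies in $E(a,b)$ by definition of midpoint, and $\cl(p)=p\in(a,b)$. When $p=a$, the extra hypothesis $d_p=d(a,b)$ together with \cref{Lem:I} (applied with $x=a,\ q=b$; the segment $\overline{ab}$ has length $d_a<\epsilon\lfs(a)$, so it misses $\M$) shows that $(a,b)$ lies in the interior of $B_a(d_a)\subseteq B$, so its midpoint $m_0$ lies in $(a,b)\cap E(a,b)\cap B$ and we take $y_0=m_0$.

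For (ii), I would feed (i) back into \cref{Lem:I}. Applying it with the given $x$ and $q=b$ shows that the interior of $B_x(d(x,b))$ contains $[\cl(x),b)$ or $(b,\cl(x)]$; but $d(x,a)=d(x,b)$, so $a$ lies on the boundary, not in the interior, of $B_x(d(x,b))$, and therefore the arc in question cannot contain $a$ --- which, since $\cl(x)\in(a,b)$ by (i), forces it to be $[\cl(x),b)$. Symmetrically, \cref{Lem:I} with $q=a$ (and $b$ not in the interior of $B_x(d(x,a))=B_x(d(x,b))$) yields $(a,\cl(x)]\subset B_x(d(x,b))$. Taking the union, $(a,b)=(a,\cl(x)]\cup[\cl(x),b)\subset B_x(d(x,b))$, which is (ii).

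I expect the most delicate point to be the case $p=a$ of (i): unlike the midpoint case, $a$ itself is not in $E(a,b)$, so one genuinely needs the hypothesis $d_p=d(a,b)$ and \cref{Lem:I} to identify which of the two arcs bounded by $a$ and $b$ is short enough to supply a base point in $B\cap E(a,b)$. The rest is a routine combination of connectedness, uniqueness of nearest points, and \cref{Lem:I}; the only other thing to watch is keeping the half-open versus closed arc bookkeeping from \cref{Lem:I} consistent throughout.
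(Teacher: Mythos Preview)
Your proof is correct and takes essentially the same approach as the paper's: both use the midpoint on $[a,b]$ as a base point in $B\cap E(a,b)$ (invoking \cref{Lem:I} to place it in $B$ when $p=a$) and then deduce (i) by connectedness, and both obtain (ii) by applying \cref{Lem:I} with $q=a$ and with $q=b$, using that $a$ and $b$ lie on the boundary of $B_x(d(x,b))$ to pick out the correct arc. Your write-up is a bit more explicit (e.g.\ checking $a,b\in B$ and spelling out why the ``wrong'' arc in (ii) is excluded), but the underlying argument is the same.
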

\begin{proof}
(i): Let $B= B_p\left(\frac{d_p}{\epsilon}\right)$, and let $m$ be the midpoint on $[a,b]$. If $p=a$, then by \cref{Lem:I}, $m\in B$. Trivially, $m\in B$ also holds if $p=m$.
Since $\Ss$ is an $\epsilon$-sample, $B$ does not intersect the medial axis, so $\cl:B\to \C$ is well-defined. Clearly, $\cl(m) = m$, and $a,b\notin \cl(B\cap E(a,b))$, as $d(a,x)=d(b,x)$ for every $x\in E(a,b)$. Since $\cl$ is continuous and $B\cap E(a,b)$ connected, we get that $\cl(B\cap E(a,b))\subset (a,b)$.

(ii): Since $\overline{xb}\subset B_p\left(\frac{d_p}{\epsilon}\right)$, \cref{Lem:I} tells us that $[\cl(x),b)$ is in the interior of $B_x(d(x,b))$ (since $a\in (b,\cl(x)]$ is not in the interior of $B_x(d(x,b))=B_x(d(x,a))$), and so must $(a,\cl(x)]$ by a symmetric argument.
\end{proof}

\begin{definition}
\label{Def:Xab}
For $a\neq b\in \R^d$, let $X(a,b)$ be the set of $x$ such that $d(x,a) = d(x,b) = \frac{d(a,b)}{\epsilon\sqrt{4-\epsilon^2}}$, and let $U(a,b) = \bigcap_{x\in X(a,b)}B_x\left(\frac{d(a,b)}{\epsilon\sqrt{4-\epsilon^2}}\right)$, which is equal to $\bigcap_{x\in X(a,b)}B_x\left(d(x,a)\right)$.
\end{definition}

\begin{figure}
\centering
\begin{tikzpicture}[scale=2.5]
\coordinate (a) at (0,-1){};
\node[left] at (a){$a$};
\coordinate (b) at (0,0){};
\node[left] at (b){$b$};
\coordinate (x) at (-0.628,-0.5){};
\node[right] at (x){$x$};
\coordinate (y) at (0.628,-0.5){};
\node[left] at (y){$y$};
\coordinate (p) at (-0.05,-0.5){};
\node[right] at (p){$p$};
\node[right] at (0.57,-1){$B_p\left(\frac{d_p}{\epsilon} \right)$};
\foreach \X in {a,b,x,y,p}
\draw[color=black,fill=black] (\X) circle (.02);
\centerarc[black](x)(-80:80:0.8025);
\centerarc[black](y)(100:260:0.8025);
\draw[color=black] (p) circle (0.761);
\begin{scope}
\clip (0,-.5) circle (0.5);
\draw[fill, opacity=.2, even odd rule] (x) circle (0.8025) (y) circle (0.8025) (0,-.5) circle (0.6);
\end{scope}
\draw[thick] (0.1,0.4) to [out=250,in=80] (b) to [out=260,in=90] (p) to [out=270,in=85] (a) to [out=265,in=80] (-0.1,-1.4);
\node[right] at (0.1,0.4){$\C$};
\end{tikzpicture}
\caption{The planar case with $X(a,b) = \{x,y\}$. The shaded area is $U(a,b)$ and contains $[a,b]$ by \cref{Lem:Xab} (iii). By \cref{Lem:Xab} (ii), $B_p\left(\frac{d_p}{\epsilon} \right)$ contains $X(a,b)$, where $p$ is the midpoint on $[a,b]$. \label{Fig:U}}
\end{figure}
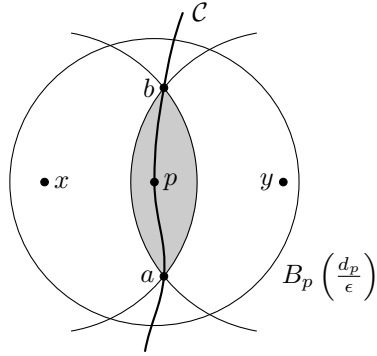

\begin{lemma}
\label{Lem:Xab}
Let $a\to b$.
\begin{itemize}
\item[(i)] Let $p'\in E(a,b)\cap \partial U(a,b)$, and let $x$ be the point in $X(a,b)$ maximizing the distance to $p'$. Then $d(p',a)=\epsilon d(p',x)$, $d(p',x)=d(a,x)$ and $2\angle axp' = \angle axb$.
\item[(ii)] Let $p$ be the midpoint of $[a,b]$. Then $X(a,b)\subset B_p\left(\frac{d_p}{\epsilon} \right)$.
\item[(iii)] $(a,b)\subset U(a,b)$.
\end{itemize}
\end{lemma}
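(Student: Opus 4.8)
Throughout write $\delta=d(a,b)$, let $m$ be the midpoint of the segment $\overline{ab}$, and let $r=\frac{\delta}{\epsilon\sqrt{4-\epsilon^2}}$ be the common radius in \cref{Def:Xab}. Since each point of $X(a,b)$ is equidistant from $a$ and $b$, $X(a,b)$ is the $(d-2)$-sphere lying in the hyperplane $E(a,b)$, centered at $m$, of radius $\rho\coloneqq\sqrt{r^2-(\delta/2)^2}$; a one-line rewriting of $r$ gives $\rho=\tfrac{2-\epsilon^2}{2}\,r$, equivalently $r-\rho=\tfrac{\epsilon^2}{2}\,r$, and this identity is essentially the only algebra I will use. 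Since $\rho<r$, the point $m$ lies in the interior of $U(a,b)$, and for any $z\in E(a,b)$ the farthest point of $X(a,b)$ from $z$ lies at distance $d(z,m)+\rho$; in particular $U(a,b)\cap E(a,b)$ is the ball in $E(a,b)$ of radius $r-\rho$ around $m$. For (i): from $p'\in E(a,b)\cap\partial U(a,b)$ we get $d(p',m)=r-\rho$ and, with $x$ the (unique, since $p'\neq m$) farthest point of $X(a,b)$ from $p'$, $d(p',x)=d(p',m)+\rho=r=d(a,x)$, the second identity. Moreover $x$ lies on the ray from $m$ opposite to $p'$, so $x,m,p'$ are collinear with $m$ between them; hence $\angle axp'=\angle axm$, and since $\triangle axb$ is isosceles in $x$ with $m$ the midpoint of the base, $\overline{xm}$ bisects $\angle axb$, giving $2\angle axp'=\angle axb$. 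Finally, Pythagoras in $\triangle amp'$ gives $d(p',a)^2=d(p',m)^2+(\delta/2)^2=(r-\rho)^2+(r^2-\rho^2)=2r(r-\rho)=\epsilon^2 r^2$, so $d(p',a)=\epsilon r=\epsilon\,d(p',x)$.

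For (ii): by \cref{Lem:d_p_closest} and the fact that $p$ is a midpoint, $d_p=d(p,a)=d(p,b)$, so $p\in E(a,b)$, and writing $t\coloneqq d(p,m)$ we have $t^2=d_p^2-(\delta/2)^2$. Since $p\in E(a,b)$, the farthest point of $X(a,b)$ from $p$ is at distance $t+\rho$, so it suffices to prove $t+\rho\le d_p/\epsilon$. If $t\le\epsilon d_p/2$, then using $\delta=2\sqrt{d_p^2-t^2}$ one checks that $t\mapsto t+\rho=t+\frac{(2-\epsilon^2)\sqrt{d_p^2-t^2}}{\epsilon\sqrt{4-\epsilon^2}}$ is nondecreasing on $[0,\epsilon d_p/2]$ (its derivative there is nonnegative, as one sees after squaring) and equals $d_p/\epsilon$ at $t=\epsilon d_p/2$, so $t+\rho\le d_p/\epsilon$. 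If $t>\epsilon d_p/2$, suppose for contradiction that $t+\rho>d_p/\epsilon$; pick $x\in X(a,b)$ at distance $t+\rho$ from $p$ and let $x'$ be the point of $\overline{px}$ with $d(p,x')=d_p/\epsilon$. Then $x'\in E(a,b)$ (as $p,x\in E(a,b)$) and $x'\in B_p(d_p/\epsilon)$, so \cref{lem:unique2}(ii), applied with $x'$ in the role of its $x$, gives $(a,b)\subset B_{x'}(d(x',b))$; since $p\in(a,b)$ this yields $d(x',p)\le d(x',b)=d(x',a)$. Working in the plane through $a,b,p$ (which also contains $m$ and $x$), with $m$ at the origin, $\overline{ab}$ along one axis and $\overline{mp}$ along the other, one computes $d(x',a)^2=(\delta/2)^2+(t-d_p/\epsilon)^2$ and $d(x',p)=d_p/\epsilon$; substituting into $d(x',p)\le d(x',a)$ and using $(\delta/2)^2+t^2=d_p^2$ reduces to $t\le\epsilon d_p/2$, contradicting the case assumption. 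Hence $t+\rho\le d_p/\epsilon$ in all cases.

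For (iii): fix $x\in X(a,b)$; it is enough to prove $(a,b)\subset B_x(r)$. Let $p$ be the midpoint of $[a,b]$. By (ii), $x\in B_p(d_p/\epsilon)$; also $a,b\in B_p(d_p/\epsilon)$ since $d(p,a)=d(p,b)=d_p\le d_p/\epsilon$, and $B_p(d_p/\epsilon)$ is convex and disjoint from $\M$ because $d_p/\epsilon<\lfs(p)$, so the segments $\overline{xa}$ and $\overline{xb}$ avoid $\M$. Since $x\in E(a,b)$, \cref{lem:unique2}(i) gives $p_x\coloneqq\cl(x)\in(a,b)$. Applying \cref{Lem:I} to $x$ with $q=a$ and with $q=b$, the interior of $B_x(r)$ (which equals $B_x(d(x,a))=B_x(d(x,b))$) contains $[p_x,a)$ or $(a,p_x]$, and $[p_x,b)$ or $(b,p_x]$. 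But $a,b\in\partial B_x(r)$ are not interior points, while $[p_x,a)$ contains $b$ and $(b,p_x]$ contains $a$ (because $a\to b$ and $p_x\in(a,b)$); hence the interior of $B_x(r)$ contains $(a,p_x]$ and $[p_x,b)$, whose union is $(a,b)$. Intersecting over all $x\in X(a,b)$ gives $(a,b)\subset U(a,b)$.

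The crux is part (ii): the naive bound---that the farthest point of $X(a,b)$ from the midpoint $p$ automatically lies within $d_p/\epsilon$---is false for some ratios $\delta/d_p$ permitted by $\delta\le 2d_p$ alone, so the dichotomy on $t=d(p,m)$ is essential, and it is precisely in the case $t>\epsilon d_p/2$ that the sampling condition (via \cref{lem:unique2}) enters rather than pure metric geometry. Parts (i) and (iii) are then routine: (i) is a coordinate computation driven by the identity $r-\rho=\tfrac{\epsilon^2}{2}r$, and (iii) merely packages \cref{Lem:I} and \cref{lem:unique2} once (ii) has placed $X(a,b)$ inside $B_p(d_p/\epsilon)$.
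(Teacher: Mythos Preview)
Your proof is correct and follows the same strategy as the paper: the same computation in (i), the identical case split in (ii) (your threshold $t=\epsilon d_p/2$ is exactly the paper's condition ``$p$ at $p'$'', and \cref{lem:unique2}(ii) is invoked at the same place), and (iii) deduced from (ii). The only differences are cosmetic: you handle the case $t\le\epsilon d_p/2$ of (ii) by a monotonicity argument where the paper uses a parallel-lines/similar-triangles step, and in (iii) you re-derive the content of \cref{lem:unique2}(ii) via \cref{Lem:I} rather than citing it directly.
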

See \cref{Fig:U} for an illustration of (ii) and (iii).
We prove the lemma in \cref{Subsec:proof_Xab}.

\subsection{Restrictions of angles between points on $\C$}
\label{Subsec:restrictions}

With help from the results of the previous subsection, we now prove results that essentially limit the curvature of $\C$ locally.

\begin{proposition}
\label{Prop:main}
Let $\epsilon\leq 0.765$, and let $a\to b\to c$ with $p\in (a,b)$ and $d(p,b)\leq d(p,a)$. Then for any $x$ such that $d(x,p)=d(x,b)=\frac{d_p}{\epsilon}$, $(b,c]\cap B_x\left(\frac{d_p}{\epsilon}\right) =\emptyset$.
\end{proposition}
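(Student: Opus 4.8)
The plan is to pin down $\cl(x)$ inside the arc $(p,b)$ using \cref{lem:unique2}, and then to derive a contradiction from \cref{Lem:I}. Since $p\in(a,b)$, $a\to b$, and $d(p,b)\le d(p,a)$, \cref{Lem:d_p_closest} gives $d_p=\min\{d(p,a),d(p,b)\}=d(p,b)$. Hence \cref{lem:unique2} applies with the roles of both $a$ and $p$ in that lemma played by our $p$, and the role of $b$ played by our $b$ (its hypothesis ``$d_p=d(p,b)$'' being exactly what we have just checked). Since $d(x,p)=d(x,b)=\frac{d_p}{\epsilon}$, the point $x$ lies in $B_p\left(\frac{d_p}{\epsilon}\right)\cap E(p,b)$, so the lemma yields $\cl(x)\in(p,b)$ and $(p,b)\subset B_x(d(x,b))=B_x\left(\frac{d_p}{\epsilon}\right)$; moreover, as in the proof of that lemma, $(p,b)$ lies in the interior of $B_x\left(\frac{d_p}{\epsilon}\right)$ and $d(x,\cl(x))<\frac{d_p}{\epsilon}$.

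Now suppose, for contradiction, that some $q\in(b,c]$ satisfies $d(x,q)\le\frac{d_p}{\epsilon}$, and assume for the moment that $\overline{xq}$ avoids $\M$. Applying \cref{Lem:I} to our $x$ and $q$ (its ``$p$'' being $\cl(x)$), the interior of $B_x(d(x,q))$, which is contained in $B_x\left(\frac{d_p}{\epsilon}\right)$, contains either $[\cl(x),q)$ or $(q,\cl(x)]$. Traversing $\C$ forward from $\cl(x)\in(p,b)$ one meets $b$ before reaching $q\in(b,c]$, so $b\in[\cl(x),q)$; traversing forward from $q$ one meets $p$ before returning to $\cl(x)$, so $p\in(q,\cl(x)]$. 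In either case one of $b,p$ lies in the interior of $B_x(d(x,q))$, whence $d(x,b)<\frac{d_p}{\epsilon}$ or $d(x,p)<\frac{d_p}{\epsilon}$, contradicting $d(x,p)=d(x,b)=\frac{d_p}{\epsilon}$.

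What remains — and what I expect to be the only delicate part of the argument, and the place where the hypothesis $\epsilon\le0.765$ enters — is to guarantee that \cref{Lem:I} can be applied, i.e.\ to rule out $\overline{xq}$ meeting $\M$. If $\C$ meets $B_x\left(\frac{d_p}{\epsilon}\right)$ at points of $(b,c]$ arbitrarily close to $b$, there is nothing to do: choosing such a $q$ with $d(b,q)$ small enough that $d(p,q)\le d_p+d(b,q)<\frac{d_p}{\epsilon}$ places $\overline{xq}$ inside $B_p\left(\frac{d_p}{\epsilon}\right)$, which is disjoint from $\M$. The hard case is that $\C$ leaves $B_x\left(\frac{d_p}{\epsilon}\right)$ immediately past $b$ and re-enters it only at a point $q_*\in(b,c]$ far from $b$. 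I would then take $z$ to be the first point of $\M$ on $\overline{xq_*}$, note that the one-sided limit of $\cl$ at $z$ along $[x,z)$ exists (the image of $\cl$ on $[x,z)$ is connected and all of its limit points at $z$ are closest points of $z$), and check — by the computation in the proof of \cref{Lem:I} — that the arc $\cl([x,z])$, and both closest points of $z$, lie in $B_x\left(\frac{d_p}{\epsilon}\right)$. Ruling out such a re-entry, presumably with the help of \cref{Lem:Xab} and a quantitative estimate that fails for larger $\epsilon$, is the crux of the proof.
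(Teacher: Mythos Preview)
Your easy case is fine and matches the paper's: when some $q\in(b,c]$ close to $b$ lies in $B_x\!\left(\frac{d_p}{\epsilon}\right)$, the segment $\overline{xq}$ sits in $B_p\!\left(\frac{d_p}{\epsilon}\right)$, $\cl$ is defined on it, and the connectedness argument via \cref{Lem:I} (or directly) gives the contradiction. You also correctly identify that the hard case --- the curve exits $B_x\!\left(\frac{d_p}{\epsilon}\right)$ at $b$ and re-enters only later --- is where $\epsilon\le 0.765$ must be used.

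However, the hard case is not proved; you say so yourself (``presumably\dots is the crux of the proof''). Your sketch of taking the first medial-axis point $z$ on $\overline{xq_*}$ and studying limits of $\cl$ there does not lead to a contradiction: knowing that the closest points of $z$ lie in $B_x\!\left(\frac{d_p}{\epsilon}\right)$ tells you nothing useful, and along $\overline{xq_*}$ you have no inequality forcing $b,p\notin\cl(\overline{xq_*})$ once $d(x,q_*)=\frac{d_p}{\epsilon}$ exactly. The argument simply stalls.

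The paper's device is quite different and is the missing idea. Having chosen $c'\in(b,c]$ with $d(x,c')=\frac{d_p}{\epsilon}$ (in the plane through $p,b,c'$), one lets $m$ be the midpoint of $\overline{bc'}$, $n$ the foot of the perpendicular from $b$ to $\overline{px}$, and constructs a specific point $y\in\overline{xm}$ so that the broken path $\overline{xy}\cup\overline{yp'}$ (with $p'$ the midpoint of $[b,c']$) avoids $\M$: $\overline{xy}\subset B_p\!\left(\frac{d_p}{\epsilon}\right)$ and $\overline{yp'}\subset B_{p'}\!\left(\frac{d_{p'}}{\epsilon}\right)$. The bound $\epsilon\le 0.765$ enters precisely in showing $y\in B_p\!\left(\frac{d_p}{\epsilon}\right)$, via an elementary but nontrivial circle-geometry argument that boils down to $d(x,n)\ge d(n,b)$, which one checks from the isosceles-triangle computation of \cref{Lem:Xab}(i). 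Along this path every point is equidistant from $b$ and $c'$, so $b,c'\notin\cl(\overline{xy}\cup\overline{yp'})$; but $\cl(x)\in(p,b)$ and $\cl(p')=p'\in(b,c')$, contradicting continuity. None of this is visible in your proposal.
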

The rough idea of the proof is to assume there is a $c'\in (b,c]\cap B_x\left(\frac{d_p}{\epsilon}\right)$ and consider a line segment $\overline{xm}$, where $x$ satisfies the conditions in the lemma and $m$ is the midpoint on $\overline{bc'}$.
One can show that $\cl$ is defined on $\overline{xm}$, that $\cl(x)\in (p,b)$, and that $\cl(m)\in (b,c')$ and derive that $\cl(\overline{xm})$ is disconnected, which is a contradiction by continuity of $\cl$. We give the full details in \cref{Subsec:proof_main}.

\begin{corollary}
\label{Cor:141}
Let $\epsilon\leq 0.66$, let $a\to b\to c$, and let $p$ be the midpoint of $[a,b]$ and $q\in (b,c]$. Then
\[
\angle pbq> 70.73^\circ + \arccos\left(0.33\frac{d(q,b)}{d(p,b)}\right).
\]
In particular, if $d(p,b)\geq d(q,b)$, then $\angle pbq>141^\circ$.
\end{corollary}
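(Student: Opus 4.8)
The plan is to apply \cref{Prop:main} to the single point $q$ and convert the resulting exclusion into a lower bound on $\angle pbq$ by an elementary trigonometric computation, choosing the witness point $x$ to make the bound as strong as possible.

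Write $r=d(p,b)$, $\ell=d(q,b)$ and $\theta=\angle pbq\in[0,\pi]$. Since $p$ is the midpoint of $[a,b]$ we have $d(p,a)=d(p,b)=r$, so $d(p,b)\le d(p,a)$ and, by \cref{Lem:d_p_closest}, $d_p=r$; as $\epsilon\le0.66<0.765$, \cref{Prop:main} applies, and with $\rho:=d_p/\epsilon=r/\epsilon$ it gives $(b,c]\cap B_x(\rho)=\emptyset$, hence $d(x,q)>\rho$, for every $x$ with $d(x,p)=d(x,b)=\rho$. Now place $b$ at the origin with $p=re_1$, and rotate so that $q=\ell(\cos\theta\,e_1+\sin\theta\,e_2)$ with $\sin\theta\ge0$; the admissible $x$ form the $(d-2)$-sphere $\{x : x_1=r/2,\ |x-\tfrac r2 e_1|=h\}$, where $h=\sqrt{\rho^2-r^2/4}=\tfrac{r}{2\epsilon}\sqrt{4-\epsilon^2}$, and the one minimizing $d(\cdot,q)$ (equivalently, maximizing the $e_2$-coordinate) is $x^*=\tfrac r2 e_1+h\,e_2$. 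Expanding $d(x^*,q)^2=\rho^2+\ell^2-r\ell\cos\theta-2h\ell\sin\theta$, the inequality $d(x^*,q)>\rho$ becomes, after dividing by $\ell>0$,
\[
\ell \;>\; r\cos\theta+2h\sin\theta \;=\; \frac{2r}{\epsilon}\Bigl(\tfrac{\epsilon}{2}\cos\theta+\tfrac{\sqrt{4-\epsilon^2}}{2}\sin\theta\Bigr)\;=\;\frac{2r}{\epsilon}\cos(\theta-\phi),
\]
where $\phi:=\arccos(\epsilon/2)\in(0,\pi/2)$, using $\cos\phi=\epsilon/2$, $\sin\phi=\sqrt{4-\epsilon^2}/2$. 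Equivalently, $\cos(\theta-\phi)<\tfrac{\epsilon\ell}{2r}$.

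From this the ``in particular'' statement drops out: if $d(p,b)\ge d(q,b)$, i.e.\ $\ell\le r$, then $\cos(\theta-\phi)<\epsilon/2=\cos\phi$, and since $\theta\in[0,\pi]$ and $\phi<\pi/2$ the only option is $\theta-\phi>\phi$ (the alternative $\theta-\phi<-\phi$ forces $\theta<0$), so $\angle pbq>2\phi=2\arccos(\epsilon/2)\ge2\arccos(0.33)>141^\circ$. For the general bound, as soon as one also knows $\theta\ge\phi$, monotonicity of $\cos$ on $[0,\pi]$ turns $\cos(\theta-\phi)<\tfrac{\epsilon\ell}{2r}$ into $\theta-\phi>\arccos\bigl(\tfrac{\epsilon\ell}{2r}\bigr)\ge\arccos\bigl(0.33\,\tfrac{\ell}{r}\bigr)$ (as $\epsilon/2\le0.33$), while $\phi=\arccos(\epsilon/2)\ge\arccos(0.33)>70.73^\circ$ (as $\epsilon\le0.66$ and $\cos70.73^\circ>0.33$); adding the two inequalities gives the claim.

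The one thing still needed, and the step I expect to be the real obstacle, is exactly $\angle pbq\ge\arccos(\epsilon/2)$ — that the curve cannot double back at $b$ too sharply, even with $q$ far from $b$ (for $d(p,b)\ge d(q,b)$ this came for free above). I would try to get it as in the proof of \cref{Prop:main}: $\cl$ is well defined on $B_p(d_p/\epsilon)$, and one shows, as there, that $\cl(x^*)\in(p,b)$; moreover $\cl(x^*)\ne b$ since $d(x^*,p)=d(x^*,b)=\rho$ and the closest point is unique. Then the image under $\cl$ of an appropriate segment running from $x^*$ toward $q$ — a connected arc of $\C$ from $\cl(x^*)\in(p,b)$ to $q\in(b,c]$ — would be forced to contain $b$, a contradiction. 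Verifying that such a segment stays inside $B_p(d_p/\epsilon)$, and dealing separately with the case when $q$ is too far from $p$ for that, is the delicate point I would have to nail down.
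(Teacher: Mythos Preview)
Your derivation is the paper's argument recast in coordinates: the paper also restricts to the plane through $p,b,q$, takes the disc $D=B_x(d_p/\epsilon)$ with centre $x$ on the $q$-side, and writes $\angle pbq>\angle pbq'=\angle pbx+\angle xbq'=\arccos(\epsilon/2)+\arccos\bigl(\tfrac{\epsilon\,d(q,b)}{2\,d(p,b)}\bigr)$, which is exactly your $\theta>\phi+\arccos(\epsilon\ell/(2r))$.

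The step you isolate as unfinished --- ruling out the other branch $\theta<\phi-\arccos(\epsilon\ell/(2r))$ of the inequality $\cos(\theta-\phi)<\epsilon\ell/(2r)$ --- is a real gap, and the paper's proof has it too: the sentence ``under this condition, we have $\angle pbq>\angle pbq'$'' is precisely this branch selection, asserted without argument. When $\ell\le r$ the bad branch is empty (then $\arccos(\epsilon\ell/(2r))\ge\arccos(\epsilon/2)=\phi$, so $\theta<\phi-\arccos(\cdot)$ would force $\theta<0$), which is why the ``in particular'' clause is complete in both versions. For $\ell>r$, however, the single-point exclusion $q\notin B_{x^*}(\rho)$ genuinely does not suffice: a point on the ray from $b$ through $p$, beyond $p$, lies outside every $B_x(\rho)$ yet has $\theta=0$. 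What closes the gap is the full conclusion of \cref{Prop:main}, that the \emph{whole arc} $(b,q]$ avoids each $B_x(\rho)$. Then $\lvert\theta_t-\phi\rvert>\arccos(\epsilon\ell_t/(2r))$ holds for every $q_t$ on the arc; near $b$ the correct sign is forced (as $\ell_t\to 0$ the wrong branch would require $\theta_t<\phi-90^\circ+o(1)<0$), and connectedness of the parameter interval prevents a later switch. This is essentially the continuity/$\cl$ argument you sketch, and it does go through.
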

\begin{proof}
We restrict our attention to a plane containing $p$, $b$ and $q$ and assume without loss of generality that $p=(0,-1)$, $b=(0,0)$ and that $q$ is not to the left of the $y$-axis.
By \cref{Prop:main}, $q$ cannot be in the disc $D$ with radius $\frac{1}{\epsilon}$ with $p$ and $b$ on the boundary and center $x$ to the right of the $y$-axis.
Under this condition, we have $\angle pbq>\angle pbq'$, where $q'$ is on the boundary of $D$ above the $x$-axis and $d(q',b)=d(q,b)$.
As illustrated in \cref{Fig:141a}, $\cos \angle pbx = \frac{1/2}{1/\epsilon} \leq 0.33$. Similarly, $\cos \angle xbq' = \frac{d(q',b)/2}{1/\epsilon} \leq 0.33d(q',b)$.
Since $\arccos$ is decreasing, we get
\begin{align*}
\angle pbq &> \angle pbq'\\
&= \angle pbx + \angle xbq'\\
&\geq \arccos(0.33)+\arccos(0.33d(q',b))\\
&> 70.73^\circ+\arccos(0.33d(q',b)).
\end{align*}
If we do not assume $d(p,b)=1$, we have to replace $d(q',b)$ with $\frac{d(q',b)}{d(p,b)}$ in the last expression. Since $d(q',b)=d(q,b)$, this yields the wanted inequality. If $d(p,b)\geq d(q,b)$, then this lower bound is weakest when $d(q,b)=d(p,b)$. In this case the right-hand side is $> 141.46^\circ$.
\end{proof}

\begin{figure}[t!]
\centering
\begin{subfigure}[t]{0.3\textwidth}
\centering
\begin{tikzpicture}[scale=2.5]
\coordinate (p) at (0,-1){};
\node[left] at (p){$p$};
\coordinate (b) at (0,0){};
\node[left] at (b){$b$};
\coordinate (x) at (1.43,-0.5){};
\node[right] at (x){$x$};
\coordinate (q) at (0.538,0.725){};
\node[above] at (q){$q'$};
\foreach \X in {p,b,x,q}
\draw[color=black,fill=black] (\X) circle (.03);
\draw (b) to (p) to (x) to (q) to (b) to (x) to (0,-0.5);
\draw (0,-0.42) to (0.08,-0.42) to (0.08,-0.5);
\centerarc[black](x)(110:210:1.5152);
\centerarc[black](b)(270:340.73:0.12);
\centerarc[black](b)(-19.27:53.4:0.1);
\node at (0.11,-0.17){$\phi$};
\node at (0.17,0.05){$\psi$};
\node at (0.75,-0.14){$\epsilon^{-1}$};
\end{tikzpicture}
\caption{If $d(p,b)=1$, then $\cos \phi = \frac{1/2}{1/\epsilon}$. \label{Fig:141a}}
\end{subfigure}\hspace{2cm}
\begin{subfigure}[t]{0.3\textwidth}
\centering
\begin{tikzpicture}[scale=4]
\coordinate (a) at (0,-1){};
\node[left] at (a){$a$};
\coordinate (b) at (0,0){};
\node[left] at (b){$b$};
\coordinate (x) at (0.628,-0.5){};
\node[right] at (x){$x$};
\coordinate (c) at (0.859,0.2685){};
\node[above] at (c){$c'$};
\foreach \X in {a,b,x,c}
\draw[color=black,fill=black] (\X) circle (.018);
\draw (b) to (a) to (x) to (c) to (b) to (x) to (0,-0.5);
\draw (0,-0.44) to (0.06,-0.44) to (0.06,-0.5);
\centerarc[black](x)(60:230:0.8025);
\centerarc[black](b)(270:321.46:0.12);
\centerarc[black](b)(17.35:-38.54:0.1);
\node at (0.07,-0.15){$\phi$};
\node at (0.15,-0.02){$\psi$};
\node at (0.45,-0.18){$\frac{1}{\epsilon\sqrt{4-\epsilon^2}}$};
\end{tikzpicture}
\caption{If $d(a,b)=1$, then $\cos \phi = \frac{\epsilon\sqrt{4-\epsilon^2}}{2}$. \label{Fig:141b}}
\end{subfigure}
\caption{\label{Fig:141}}
\end{figure}

\begin{lemma}
\label{Lem:abDisc}
Let $\epsilon\leq 0.765$, and let $a\to b\to c$. Then $(b,c]\cap B_x(d(x,a))=\emptyset$ for all $x\in X(a,b)$.
\end{lemma}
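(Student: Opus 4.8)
The plan is to argue by contradiction, imitating the mechanism behind \cref{Prop:main}. Fix $x\in X(a,b)$, write $r := d(x,a)=d(x,b)=\frac{d(a,b)}{\epsilon\sqrt{4-\epsilon^2}}$, and suppose there is a point $c'\in(b,c]$ with $c'\in B_x(r)$; the aim is to produce a segment on which $\cl$ cannot be continuous. First I would pin down $\cl(x)$: taking $p$ to be a midpoint of $[a,b]$, we have $d_p=d(p,b)$ by \cref{Lem:d_p_closest}, $x\in E(a,b)$ since $d(x,a)=d(x,b)$, and $x\in X(a,b)\subset B_p(d_p/\epsilon)$ by \cref{Lem:Xab}(ii), so \cref{lem:unique2}(i) gives $\cl(x)\in(a,b)$.

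Next I would let $m$ be the midpoint of the segment $\overline{bc'}$ (so $d(m,b)=d(m,c')=\tfrac12 d(b,c')$) and consider $\overline{xm}$. The key sub-claims are that $\cl$ is defined on all of $\overline{xm}$ and that $\cl(m)\in(b,c')$. Granting these, $\cl(\overline{xm})$ is --- by \cref{Lem:clCont} and connectedness of $\overline{xm}$ --- a connected subset of the component of $\C$ through $a,b,c$ meeting both $(a,b)$ and $(b,c')$; hence it contains one of the two arcs between $\cl(x)$ and $\cl(m)$, and since one of those arcs passes through $b$ and the other through $a$, there is a $z\in\overline{xm}$ with $\cl(z)\in\{a,b\}$. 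Such a $z$ satisfies $d(z,\cl(z))\le d(z,c')$ and $d(z,\cl(z))\le d(z,w)$, where $w$ is whichever of $a,b$ is not $\cl(z)$. Writing $z$ as a point of $\overline{xm}$ and substituting $d(x,a)=d(x,b)=r$, $d(x,c')\le r$, $d(m,b)=d(m,c')=\tfrac12 d(b,c')$, these inequalities should be mutually contradictory when $\epsilon\le 0.765$; this is the distance/angle bookkeeping underlying \cref{Cor:141} and \cref{Fig:141b}, where the value $r=\frac{d(a,b)}{\epsilon\sqrt{4-\epsilon^2}}$ is chosen precisely so that $\cos\angle abx=\tfrac{\epsilon\sqrt{4-\epsilon^2}}{2}$.

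The main obstacle is verifying the two sub-claims about $\overline{xm}$. For $\overline{xm}\cap\M=\emptyset$, I would look for a point $q\in\C$ with $\overline{xm}\subset B_q(d_q/\epsilon)$ --- the midpoint of one of the arcs $[a,b]$, $[b,c']$ or $[a,c']$ being the natural candidate --- which suffices because $\cl$ is defined on $B_q(d_q/\epsilon)$ for every $q\in\C$; this reduces to bounding $d(x,q)$ and $d(m,q)$ in terms of $d(a,b)$ and $d(b,c')$, and I expect $\epsilon\le 0.765$ to be exactly what makes such a bound hold. For $\cl(m)\in(b,c')$ I would rerun the reasoning of \cref{lem:unique2}(i): $m$ is equidistant from $b$ and $c'$, so once $\cl(m)$ is known to exist it equals neither $b$ nor $c'$, and it cannot lie outside $(b,c')$ on the curve, since deforming $\cl$ continuously from $\cl(x)\in(a,b)$ along $\overline{xm}$ cannot reach such a point without first hitting $b$. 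Since the proof of \cref{Prop:main} already manipulates the very same segment $\overline{xm}$, I would expect most of this to transfer with little change.
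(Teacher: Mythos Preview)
Your high-level plan matches the paper's: assume some $c'\in(b,c]$ lies in $B_x(d(x,a))$, use \cref{Lem:Xab}(ii) together with \cref{lem:unique2}(i) to get $\cl(x)\in(a,b)$, and then look for a path from $x$ towards the $[b,c']$ side along which $\cl$ would have to be discontinuous. The gaps are in the two sub-claims you flag as ``the main obstacle'', and they are genuine.

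First, your plan to show $\overline{xm}\cap\M=\emptyset$ by exhibiting a single $q\in\C$ with $\overline{xm}\subset B_q(d_q/\epsilon)$ is exactly the step that does \emph{not} go through; this is precisely why the proof of \cref{Prop:main} is as elaborate as it is. There one does not cover $\overline{xm}$ by one ball: instead an auxiliary point $y\in\overline{xm}$ is constructed (via a specific similar-triangles ratio using the foot $n$ of the perpendicular from $b$ to the $x$-side) so that $y\in B_p(d_p/\epsilon)\cap B_{p'}(d_{p'}/\epsilon)$, where $p$ and $p'$ are the arc midpoints of $[a,b]$ and $[b,c']$. The path that avoids $\M$ is then the broken line $\overline{xy}\cup\overline{yp'}$, covered by \emph{two} balls, and the contradiction comes from the fact that every point of this path lies in $E(b,c')$. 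The paper's proof of the present lemma reuses exactly this device, with $n$ taken to be the midpoint of the chord $\overline{ab}$, and checks (as in Claim~1 of \cref{Prop:main}) that $y\in B_p(d_p/\epsilon)$; the constant $0.765$ enters there, not in any single-ball estimate. Your hope that one of the arc midpoints alone suffices would, if true, make the $y$-construction in \cref{Prop:main} superfluous.

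Second, your argument for $\cl(m)\in(b,c')$ is circular: you deduce it from continuity of $\cl$ along $\overline{xm}$, but that continuity is what you are trying to establish. In the paper the endpoint on the $[b,c']$ side is the \emph{arc} midpoint $p'$, for which $\cl(p')=p'$ is trivial; the chord midpoint $m$ is only used as a waypoint.

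Finally, the paper separates two cases you conflate: either one can choose $c'$ on the boundary, i.e.\ $d(x,c')=d(x,b)$ (and then the $y$-machinery from \cref{Prop:main} applies verbatim), or every $c'\in(b,c]$ lies strictly inside some $B_x(d(x,a))$, in which case one picks $c'$ close enough to $b$ to lie in $B_p(d_p/\epsilon)$ and runs the disconnection argument on the much simpler segment $\overline{xc'}$, where $c'$ is strictly closer than both $a$ and $b$ at every point. Without this split, the ``distance/angle bookkeeping'' you allude to for ruling out $\cl(z)=a$ does not obviously close.
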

This proof is similar to that of \cref{Prop:main}; see \cref{Subsec:proof_abdisc} for the details.

\begin{definition}
We call a triple $(a,b,c)$ of sample points \emph{compatible} if $c\notin B_x(d(x,b))$ for all $x\in X(a,b)$ and $a\notin B_y(d(y,b))$ for all $y\in X(b,c)$.
\end{definition}
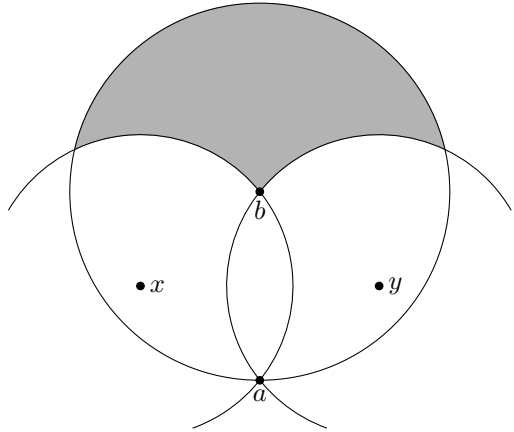
\begin{figure}
\centering
\begin{tikzpicture}[scale=2.5]
\coordinate (a) at (0,-1){};
\node[below] at (a){$a$};
\coordinate (b) at (0,0){};
\node[below] at (b){$b$};
\coordinate (x) at (-0.628,-0.5){};
\node[right] at (x){$x$};
\coordinate (y) at (0.628,-0.5){};
\node[right] at (y){$y$};
\foreach \X in {a,b,x,y}
\draw[color=black,fill=black] (\X) circle (.02);
\draw[color=black] (b) circle (1);
\centerarc[black](x)(-70:150:0.8025);
\centerarc[black](y)(30:250:0.8025);
\begin{scope}
\clip (b) to (0.975,0.224) to (0.975,1) to (-0.975,1) to (-0.975,0.224);
\draw[fill, opacity=.3, even odd rule] (b) circle (1) (x) circle (0.8025) (y) circle (0.8025);
\end{scope}
\end{tikzpicture}
\caption{The planar case with $X(a,b) = \{x,y\}$. If $d(b,c)\leq d(a,b)$, then $(a,b,c)$ is compatible if and only if $c$ is in the shaded area. \label{Fig:compatible}}
\end{figure}
See \cref{Fig:compatible}.
\cref{Lem:abDisc} then implies that if $a\to b\to c$, then $(a,b,c)$ is compatible.
\begin{lemma}
\label{Cor:102}
Let $\epsilon\leq 0.66$ and suppose $(a,b,c)$ is compatible. Then
\[
\angle abc> 51.45^\circ + \arccos\left(0.6231\frac{d(c,b)}{d(a,b)}\right).
\]
In particular, $\angle abc>102.9^\circ$.
\end{lemma}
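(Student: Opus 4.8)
The plan is to follow the proof of \cref{Cor:141} almost verbatim, using the first half of the definition of compatibility, namely that $c\notin B_x(d(x,b))$ for all $x\in X(a,b)$, in place of \cref{Prop:main}. Since both the compatibility condition and $\angle abc$ are symmetric under swapping $a$ and $c$, I may assume $d(c,b)\le d(a,b)$; the case $d(c,b)>d(a,b)$ then follows by applying the statement to $(c,b,a)$ together with the fact that $\arccos$ is decreasing. Restricting to a plane $\Pi$ through $a,b,c$, I choose coordinates so that $a=(0,-1)$, $b=(0,0)$, and $c$ lies in the closed right half-plane.

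Put $R=\frac{1}{\epsilon\sqrt{4-\epsilon^2}}$ and let $x$ be the point of $X(a,b)\cap\Pi$ with positive first coordinate, so that $x=(\sqrt{R^2-1/4},\,-1/2)$ and $\partial B_x(R)$ passes through $a$ and $b$; compatibility gives $c\notin B_x(R)$. As in \cref{Cor:141}, the first step is the plane-geometry claim $\angle abc\ge\angle abc'$, where $c'$ is the point of $\partial B_x(R)$ above the $x$-axis with $d(c',b)=d(c,b)$ (this is the configuration of \cref{Fig:141b}). The circle around $b$ of radius $d(c,b)$ meets $\partial B_x(R)$ transversally in exactly two points (since $0<d(c,b)\le d(a,b)<2R$), and I will argue that on the semicircle in the closed right half-plane exactly the sub-arc lying more counterclockwise than $c'$ is outside $B_x(R)$: the top point $(0,d(c,b))$ is strictly outside $B_x(R)$ (its squared distance to $x$ is $R^2+d(c,b)^2+d(c,b)$), while the bottom point $(0,-d(c,b))$ lies in $B_x(R)$ (it equals $a$, or else lies on the open segment between $a$ and $b$, hence in the interior of the convex set $B_x(R)$); an intermediate-value argument then shows the right-half-plane semicircle contains exactly one of the two intersection points, namely $c'$, with the other lying strictly to the left of the $y$-axis, and a short computation using $R>1/\sqrt2$ shows $c'$ has positive second coordinate. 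Since $c$ is in the right half-plane, at distance $d(c,b)$ from $b$, and outside $B_x(R)$, it lies on the upper sub-arc, so $\angle abc\ge\angle abc'$.

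It remains to bound $\angle abc'$. As $c'$ is above the $x$-axis, the direction of $x$ from $b$ lies between those of $a$ and $c'$, so $\angle abc'=\angle abx+\angle xbc'$. Both triangles $abx$ and $xbc'$ are isosceles with apex $x$ (because $d(x,a)=d(x,b)=d(x,c')=R$), so the altitude from $x$ bisects the opposite side, giving, with $d(a,b)=1$, $\cos\angle abx=\frac{1/2}{R}=\frac{\epsilon\sqrt{4-\epsilon^2}}{2}$ and $\cos\angle xbc'=\frac{d(c,b)/2}{R}=\frac{\epsilon\sqrt{4-\epsilon^2}}{2}\,d(c,b)$. Since $\epsilon\sqrt{4-\epsilon^2}$ is increasing on $(0,\sqrt2)$ and $\epsilon\le0.66$, we have $\frac{\epsilon\sqrt{4-\epsilon^2}}{2}\le0.6231$, so monotonicity of $\arccos$ yields $\angle abx\ge\arccos(0.6231)>51.45^\circ$ and $\angle xbc'\ge\arccos(0.6231\,d(c,b))$. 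Adding these, and replacing $d(c,b)$ by $d(c,b)/d(a,b)$ to remove the normalization, gives the stated inequality; since the right-hand side is decreasing in $d(c,b)/d(a,b)$, under $d(c,b)\le d(a,b)$ it is minimized at $d(c,b)=d(a,b)$, where it equals $51.45^\circ+\arccos(0.6231)>102.9^\circ$.

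The step I expect to require the most care is the geometric inequality $\angle abc\ge\angle abc'$ and the accompanying fact that $c'$ lies above the $x$-axis, which legitimizes the splitting $\angle abc'=\angle abx+\angle xbc'$. This is precisely the monotonicity claim left implicit in the proof of \cref{Cor:141} (there phrased as the strict inequality $\angle pbq>\angle pbq'$), and making it rigorous hinges on the reduction $d(c,b)\le d(a,b)$ — which puts $(0,-d(c,b))$ in the interior of $B_x(R)$, so the ``inside'' portion of the relevant arc is the one adjacent to $a$ — together with the elementary bound $R=\frac{1}{\epsilon\sqrt{4-\epsilon^2}}\ge\frac{1}{0.66\sqrt{4-0.66^2}}>\frac{1}{\sqrt2}$. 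Everything else is routine angle-chasing in isosceles triangles and monotonicity of $\arccos$.
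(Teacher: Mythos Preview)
Your proof is correct and follows essentially the same route as the paper's: restrict to the plane through $a,b,c$, normalize coordinates, use compatibility to exclude $c$ from $B_x(R)$ for the right-hand $x\in X(a,b)$, replace $c$ by the boundary point $c'$ at the same distance from $b$, and split $\angle abc'=\angle abx+\angle xbc'$ via the two isosceles triangles. The only difference is organizational: you impose $d(c,b)\le d(a,b)$ up front and recover the general inequality by the symmetry of compatibility, whereas the paper invokes that symmetry only for the ``in particular'' clause; your reduction is what makes the check that $c'$ lies above the $x$-axis (via $R>1/\sqrt2$) go through cleanly.
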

\begin{proof}
We use an argument very similar to that in the proof of \cref{Cor:141}. 
We restrict our attention to the plane spanned by $a$, $b$ and $c$ and assume without loss of generality that $a=(0,-1)$, $b=(0,0)$ and that $c$ is not to the left of the $y$-axis. By definition of compatibility, $c\notin B_x(d(x,a))$, where $x$ is the element of $X(a,b)$ to the right of the $y$-axis.
Under this condition, we have $\angle abc>\angle abc'$, where $c'$ satisfies $d(c,b)=d(c',b)$ and is on the boundary of $B_x(d(x,a))$ above the $x$-axis.
We already determined the geometry of $\triangle axb$ in \cref{Fig:pythagoras}. Using this, we find that $\cos \angle abx = \frac{\epsilon\sqrt{4-\epsilon^2}}{2}$, as illustrated in \cref{Fig:141b}. Similar considerations show that $\cos \angle xbc' = \frac{\epsilon\sqrt{4-\epsilon^2}d(c',b)}{2d(a,b)}$.
We have $d(c',b)=d(c,b)$ by assumption, and $\frac{\epsilon\sqrt{4-\epsilon^2}}{2}\leq \frac{0.66\sqrt{4-0.66^2}}{2} < 0.6231$. Since $\arccos$ is decreasing, this yields
\begin{align*}
\angle abc &> \angle abc'\\
&= \angle abx+\angle xbc'\\
&> \arccos(0.6231) + \arccos\left(0.6231\frac{d(c,b)}{d(a,b)}\right)
\end{align*}
and then the wanted inequality follows from $\arccos(0.6231)> 51.45^\circ$.

$(a,b,c)$ is compatible if and only $(c,b,a)$ is, so the inequality holds also if we switch $a$ and $b$. Thus, we can assume $d(a,b)\geq d(c,b)$. Under this assumption, the right-hand side is smallest when $d(c,b)=d(a,b)$. Thus,
\[
\angle abc> 2\arccos(0.6231)>102.9^\circ.\qedhere
\]
\end{proof}

\subsection{The closest compatible neighbors are the correct neighbors}
\label{Subsec:compatible}

In the runtime analysis of our algorithms, we stated that checking if a triple $(a,b,c)$ of points is compatible can be done in constant time.
Since we only need to consider the geometry of three fixed points, this is clear; to be precise, by arguments similar to those in the proof of \cref{Cor:102}, what we need to check is if
\[\angle abc > \arccos\left(\frac{0.66\sqrt{4-0.66^2}}{2}\right) + \arccos\left(\frac{0.66\sqrt{4-0.66^2}d(c,b)}{2d(a,b)}\right)\]
and the same with $a$ and $c$ switching places.

Recall that our algorithms construct edges from $b\in\Ss$ to $a$ and $c$, where $a$ is the closest point in $\Ss$ to $b$, and $c$ is the closest point in $\Ss$ to $b$ such that $(a,b,c)$ is compatible. ({\sc Compatible-crust} is restricted to the Delaunay neighbors, which by \cref{Cor:Delaunay} is not a problem.)
Since $b$ has exactly two adjacent vertices in $G_\C(\Ss)$, it is sufficient to prove that $a$, $b$ and $c$ are consecutive. This is exactly the statement of \cref{Prop:finish} below, which therefore finishes the proof of \cref{Thm:main}.

\begin{lemma}
\label{Lem:117}
Let $\epsilon\leq 0.66$ and $a\to b$, let $p$ be the midpoint on $[a,b]$, and let $c$ be a point on $\C\setminus [a,b]$ with $d(b,c)\leq d(a,b)$. Then $\angle pbc>117.3^\circ$.
\end{lemma}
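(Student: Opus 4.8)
The plan is to restrict to a plane, reduce the angle bound to a pair of geometric claims about $c$ avoiding the discs $B_x(d_p/\epsilon)$, dispatch the easy (trigonometric) half, and then prove the disc‑avoidance by a continuity argument for $\cl$ in the spirit of the proofs of \cref{Prop:main} and \cref{Lem:abDisc}.

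Restrict attention to a plane $\Pi$ containing $p,b,c$ and normalise so that $b=(0,0)$, $p=(0,-d_p)$ and $c$ has non‑negative first coordinate; note $d(p,b)=d_p$ by \cref{Lem:d_p_closest} (applied with $a\to b$ and $p\in(a,b)$), whence $d(b,c)\le d(a,b)\le d(a,p)+d(p,b)=2d_p$. Let $x,x'\in\Pi$ be the two points with $d(x,p)=d(x,b)=d_p/\epsilon$, with $x$ having positive and $x'$ negative first coordinate. Put
\[
N=\bigl\{\,q\in\Pi\ :\ d(b,q)\le 2d_p,\ \angle pbq\le 117.3^\circ,\ q\notin B_x(d_p/\epsilon)\cup B_{x'}(d_p/\epsilon)\,\bigr\};
\]
a short computation shows $N\subseteq B_p(d_p/\epsilon)$, and $N$ sits in a small neighbourhood of the part of the ray from $b$ through $p$ lying beyond $p$. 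By the definition of $N$, the lemma follows once we show (i) $c\notin B_x(d_p/\epsilon)$ (and, by the symmetric argument, $c\notin B_{x'}(d_p/\epsilon)$) and (ii) $c\notin N$: granting (i) and (ii), and using $d(b,c)\le 2d_p$, we must have $\angle pbc>117.3^\circ$; more explicitly, writing $c'$ for the point of $\partial B_x(d_p/\epsilon)$ at distance $d(b,c)$ from $b$ on the side of $x$ away from $p$, the computation from the proof of \cref{Cor:141} gives $\angle pbc\ge\angle pbc'=\angle pbx+\angle xbc'\ge\arccos(\epsilon/2)+\arccos(\epsilon\,d(b,c)/(2d_p))\ge\arccos(0.33)+\arccos(0.66)>117.3^\circ$.

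For (i): by \cref{lem:unique2} (with its ``$a$'' taken to be $p$) we have $\cl(x)\in(p,b)$; write $w=\cl(x)$. Suppose $c\in B_x(d_p/\epsilon)$ and suppose for the moment that $\overline{xc}\cap\M=\emptyset$. Then $\cl$ is continuous on $\overline{xc}$, so $\cl(\overline{xc})$ is a connected subset of $\C$ containing $w$ and $c=\cl(c)$; hence $w$ and $c$ lie in one component, and since $w\in(a,b)$ while $c\notin[a,b]$, a connected subset of that component containing both must contain $b$, or must contain $p$ (one of the two arcs from $w$ to $c$ runs through $b$, the other through $a$ and hence through $p\in[w,a]$). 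So there is $z\in\overline{xc}$ with $\cl(z)\in\{b,p\}$, and $z\neq x$ because $\cl(z)\neq\cl(x)=w$. Now $d(x,z)+d(z,c)=d(x,c)\le d_p/\epsilon=d(x,\cl(z))$ and $d(z,c)\ge d(z,\cl(z))=d(z,\C)$, so the triangle inequality $d(x,z)+d(z,\cl(z))\ge d(x,\cl(z))$ is forced to be an equality: $z$ lies on $\overline{x\,\cl(z)}$, all the displayed inequalities are tight, and $c$ lies on the ray from $x$ through $z$ at distance $d_p/\epsilon$ — that is, $c=\cl(z)\in[a,b]$, contradicting $c\notin[a,b]$.

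The remaining cases — $\overline{xc}\cap\M\neq\emptyset$ in (i), and (ii) — are the main obstacle: both amount to excluding configurations in which $c$ is Euclidean‑close to $b$ but lies far from $b$ along $\C$ (or on another component). I expect to handle them by the local‑feature‑size bookkeeping used in the proofs of \cref{Prop:main} and \cref{Lem:abDisc}. The $\epsilon$‑sampling condition gives $\lfs(p)>d_p/\epsilon$, hence $\lfs(b)\ge\lfs(p)-d_p>d_p(1-\epsilon)/\epsilon$. For (ii), since $N\subseteq B_p(d_p/\epsilon)\subseteq B_p^{\circ}(\lfs(p))$, membership $c\in N$ would put $c$ in the component of $p$ with $\overline{pc}\subseteq B_p^{\circ}(\lfs(p))$ disjoint from $\M$; running the connectedness argument above on $\overline{pc}$, together with \cref{Cor:141} applied to the arcs of $\C$ leaving $b$ and $a$, should yield a contradiction or force $\angle pbc>117.3^\circ$ directly. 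For the $\M$‑crossing case in (i), I would truncate $\overline{xc}$ at its first intersection with $\M$ and analyse the closest points of that intersection, or replace $c$ by the midpoint of $\overline{bc}$ as in \cref{Prop:main}; the crux is to show, using $d(b,c)\le 2d_p$ and the feature‑size bounds, that the relevant segment in fact stays in a region where $\cl$ is defined, reducing back to the continuity argument already given.
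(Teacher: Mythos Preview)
Your proposal is explicitly incomplete, and the parts you leave open are exactly where the difficulty lies.

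\medskip
\textbf{The gap in (i).} Your connectedness argument on $\overline{xc}$ is fine when $\overline{xc}\cap\M=\emptyset$, but this hypothesis can genuinely fail. With $d_p=1$ and $\epsilon=0.66$, the point $x$ sits on $\partial B_p(d_p/\epsilon)$, while $c\in B_x(d_p/\epsilon)$ may lie at distance up to $2/\epsilon\approx 3.03$ from $p$; so $\overline{xc}$ can leave $B_p(d_p/\epsilon)$, and you have no feature--size control there. Your proposed fixes do not close this: truncating at the first point of $\overline{xc}\cap\M$ gives you a point where $\cl$ is undefined, and replacing $c$ by the midpoint of $\overline{bc}$ neither keeps you inside $B_x(d_p/\epsilon)$ nor guarantees the new segment avoids $\M$. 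The paper sidesteps this entirely by excluding only the \emph{smaller} disc $B_{q'}\bigl(d(q,q')-\epsilon^{-1}\bigr)$ (your $x$ is the paper's $q'$): since $(p,b)\subset B_q(\epsilon^{-1})$ and $\cl(q')\in(p,b)$, any point of $\C$ in that smaller disc would be strictly closer to $q'$ than $\cl(q')$ is---a one-line distance comparison, no path through $\R^d\setminus\M$ required.

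\medskip
\textbf{The gap in (ii).} Here your sketch invokes \cref{Cor:141} ``applied to the arcs of $\C$ leaving $b$ and $a$'', but that corollary concerns consecutive \emph{sample} triples $a\to b\to c$, whereas in the present lemma $c$ is an arbitrary point of $\C\setminus[a,b]$ (possibly many samples away, possibly on another component). So \cref{Cor:141} does not apply. Running the connectedness argument on $\overline{pc}$ also does not finish: you would need to rule out $\cl(z)=a$ for $z\in\overline{pc}$, but you have almost no information about where $a$ sits in $\Pi$ (indeed $a$ need not lie in $\Pi$ at all). The paper's proof of the corresponding region again uses segments ending at $q$ or $q'$ (points of $E(p,b)$, so \cref{lem:unique2} applies) and checks case by case that neither $a$ nor $b$ can be a closest point along those segments.

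\medskip
\textbf{Comparison with the paper's decomposition.} You try to exclude the full discs $B_x(d_p/\epsilon)$ and $B_{x'}(d_p/\epsilon)$; if this worked it would leave only the ``below $p$'' region $N\subset B_p(d_p/\epsilon)$. The paper instead excludes the smaller disc $B_{q'}(d(q,q')-\epsilon^{-1})$, which is easy, and then splits what remains into (a) the part inside $B_p(d_p/\epsilon)$---handled by segments to $q$ or $q'$ with a case split on whether $\overline{cq}$ crosses $\overline{ab}$---and (b) a tiny leftover region near the $117.3^\circ$ direction at radius $\approx 2d_p$, handled by introducing an auxiliary point $z=(1.244,-0.1351)$ and a numerical check that $d(z,c)<d(z,q)-\epsilon^{-1}$. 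Your larger-disc strategy would eliminate region (b), but at the cost of making the disc-exclusion step itself require a medial-axis-avoiding path, which you have not supplied.
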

\begin{proof}
\begin{figure}[t!]
\centering
\begin{subfigure}[t]{.42\textwidth}
\centering
\begin{tikzpicture}[scale=1.4]
\coordinate (p) at (0,-1){};
\node[left] at (p){$p$};
\coordinate (b) at (0,0){};
\node[left] at (b){$b$};
\coordinate (a) at (0,-2){};
\node[below] at (a){$a$};
\coordinate (c) at (1.788, 0.8968){};
\node[right] at (c){$x$};
\coordinate (q) at (-1.43,-0.5){};
\node[right] at (q){$q$};
\coordinate (qq) at (1.43,-0.5){};
\node[right] at (qq){$q'$};
\foreach \X in {a,p,b,c,q,qq}
\draw[color=black,fill=black] (\X) circle (.038);
\draw (p) to (b) to (c);
\draw (p) arc (-19.27:19.27:1.5152);
\draw (qq) circle (1.3454);
\draw (b) circle (2);
\draw (p) circle (1.51);
\begin{scope}
\clip(b) to (a) to (3,-2) to (c);
\clip(b) circle (2);
\draw[color=red,fill=red,opacity=0.4] (p) circle (1.51);
\draw[color=blue,fill=blue,opacity=0.4] (qq) circle (1.3454);
\end{scope}
\end{tikzpicture}
\caption{The discs $B_p\left(\frac{d(p,b)}{\epsilon}\right)$ (red) and $B_{q'}(d(q,q')-\epsilon^{-1})$ (blue) cover the relevant area except a small part close to $x$. \label{Fig:117a}}
\end{subfigure}\hspace{1.8cm}
\begin{subfigure}[t]{.42\textwidth}
\centering
\begin{tikzpicture}[scale=1.4]
\coordinate (p) at (0,-1){};
\node[left] at (p){$p$};
\coordinate (b) at (0,0){};
\node[left] at (b){$b$};
\coordinate (c) at (1.788, 0.8968){};
\node[right] at (1.74, 1){$x$};
\coordinate (q) at (-1.43,-0.5){};
\node[right] at (q){$q$};
\coordinate (qq) at (1.43,-0.5){};
\node[right] at (qq){$q'$};
\coordinate (z) at (1.244,-0.1351){};
\node[right] at (z){$z$};
\foreach \X in {p,b,c,q,qq,z}
\draw[color=black,fill=black] (\X) circle (.038);
\draw (p) to (b) to (c);
\draw (p) arc (-19.27:19.27:1.5152);
\draw (z) circle (1.18);
\draw (b) circle (2);
\draw (p) circle (1.51);
\end{tikzpicture}
\caption{The distance from $z$ to $B_q(\epsilon^{-1})$ is slightly smaller than $d(z,x)$. \label{Fig:117b}}
\end{subfigure}
\caption{}
\end{figure}

Assume $\angle pbc \leq 117.3^\circ$, and let us restrict ourselves to a plane containing $p,b,c$.
Without loss of generality, we can assume that $p=(0,-1)$, $b=(0,0)$, and that $c$ is not to the left of the $y$-axis.
Let $q$ be the point to the left of the $y$-axis such that $d(q,b)=d(q,p)=\epsilon^{-1}$, and let $q'$ be the reflection of $q$ across the $y$-axis.
By \cref{lem:unique2} (i), $\cl(q')\in (p,b)$, and by \cref{lem:unique2} (ii) (choose $a=p$ in the lemma), $(p,b)\subset B_q(\epsilon^{-1})$. It follows that $c\notin B_{q'}(d(q,q')-\epsilon^{-1})$.

We have two remaining possibilities under the assumptions $\angle pbc \leq 117.3^\circ$ and $d(b,c)\leq d(a,b)$:
\begin{itemize}
\item[(i)] $c\in B_b(d(a,b))\cap B_p\left(\frac{d(p,b)}{\epsilon}\right)$,
\item[(ii)] $c\in B_b(d(a,b))\setminus \left( B_p\left(\frac{d(p,b)}{\epsilon}\right)\cup B_{q'}(d(q,q')-\epsilon^{-1})\right)$,
\end{itemize}
To show that (i) is impossible, first assume that $c$ is below or on the line $l$ through $q$ and $q'$. If $\overline{cq}$ does not intersect $\overline{ab}$, let $I=\overline{cq}$.
Otherwise, let $I=\overline{cq'}$. $c$ is closer to any point on $I$ than $a$ is, so $a\notin \cl(I)$. Since no point on $I$ is above $l$, $b\notin \cl(I)$, as $p$ is always at least as close as $b$. But clearly, $\cl(c)=c$, and we have already observed that $\cl(q')\in (p,b)$, and $\cl(q)\in (p,b)$ holds for the same reason. Thus, $\cl(I)$ is disconnected, which contradicts the continuity of $\cl$.

If instead $c$ is above $l$, let $I=\overline{cq'}$ and use a similar argument with $a$ and $b$ exchanged.

Finally, we assume (ii), which is the case that requires the most care. Let $z=(1.244,-0.1351)$.
Some calculation shows that $z\in B_p(\epsilon^{-1})=B_p\left(\frac{d_p}{\epsilon}\right)$.
Let $I=\overline{zq'}$. Since $I\subset B_p\left(\frac{d_p}{\epsilon}\right)$, $I$ does not intersect the medial axis of $C$.

Let $x$ be the intersection of the ray from $b$ into the first quadrant with angle $\angle 117.3^\circ$ with the boundary of $B_b(2)$. As \cref{Fig:117a} illustrates, $c$ must be in an area close to $x$, and $x$ is the point in this area furthest away from $z$.
Some more calculation shows that $d(z,x)<1.18<d(z,q)-\epsilon^{-1}$; see \cref{Fig:117b}. This means that $z$ is closer to $c$ than to any point on $[p,b]$, since $[p,b]\subset B_q(\epsilon^{-1})$, as we have observed.
Thus, $\cl(z)\notin [p,b]$.
In addition, $\cl(q')\in (p,b)$ by \cref{lem:unique2} (i).
But all points on $I$ are closer to $c$ than to both $p$ and $b$ (it is enough to check the endpoints of $I$), so $p,b\notin \cl(I)$. Thus, $\cl(I)$ is disconnected, which is impossible, as $\cl$ is continuous.
\end{proof}

\begin{proposition}
\label{Prop:closest}
Let $\epsilon\leq 0.66$, and let $a$ be a sample point and $b$ a closest neighbor to $a$ among the other sample points. Then $a$ and $b$ are consecutive.
\end{proposition}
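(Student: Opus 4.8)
The plan is to argue by contradiction. Suppose $a$ and $b$ are not consecutive. Fix an orientation on the connected component of $\C$ containing $a$, and let $a_1$ and $a_2$ be the two sample points adjacent to $a$ in $G_\C(\Ss)$ (possibly equal), with $a_1\to a\to a_2$; let $p_1$ be a midpoint of $[a_1,a]$ and $p_2$ a midpoint of $[a,a_2]$. Since $a$ and $b$ are not consecutive, $b\notin\{a_1,a,a_2\}$, and since the open arcs $(a_1,a)$ and $(a,a_2)$ contain no sample points, $b$ lies on $\C\setminus[a_1,a]$ and on $\C\setminus[a,a_2]$ (this also holds trivially if $b$ is in a different component of $\C$ than $a$).

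First I would apply \cref{Lem:117} twice to bound angles at $a$ towards $b$. Because $b$ is a closest neighbor of $a$ among all other sample points, $d(a,b)\leq d(a,a_1)$ and $d(a,b)\leq d(a,a_2)$. Applying \cref{Lem:117} to the consecutive pair $a_1\to a$ with midpoint $p_1$ and with the point ``$c$'' taken to be $b$ — the required inequality $d(b,a)\leq d(a_1,a)$ being exactly $d(a,b)\leq d(a,a_1)$ — gives $\angle p_1ab>117.3^\circ$. Reversing the orientation so that $a_2\to a$, the same argument applied to the pair $a_2\to a$ with midpoint $p_2$ gives $\angle p_2ab>117.3^\circ$.

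Next I would combine these with a lower bound on $\angle p_1ap_2$. Using the elementary fact that the three pairwise angles between any three rays from a common point sum to at most $360^\circ$, applied to the rays from $a$ through $p_1$, $p_2$ and $b$,
\[
\angle p_1ap_2\leq 360^\circ-\angle p_1ab-\angle p_2ab<360^\circ-2\cdot117.3^\circ=125.4^\circ.
\]
On the other hand, \cref{Cor:141} applied to $a_1\to a\to a_2$ with ``$p$''$=p_1$ and ``$q$''$=p_2$ (or, if $d(a,p_2)>d(a,p_1)$, to the reversed triple $a_2\to a\to a_1$ with ``$p$''$=p_2$ and ``$q$''$=p_1$, so that the relevant distance ratio is at most $1$) gives $\angle p_1ap_2>141^\circ$, a contradiction. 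Hence $a$ and $b$ are consecutive.

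The main obstacle I anticipate is the orientation bookkeeping in the double application of \cref{Lem:117}: that lemma is phrased for an ordered pair $a\to b$, so one must reverse the orientation for the second neighbor of $a$ and check that the distance hypothesis of the lemma is still supplied by ``$b$ is a closest neighbor of $a$'' in that case. Once both angle bounds are in hand, the rest is just the elementary angle-sum inequality together with the already-proven \cref{Cor:141}, with a comfortable numerical margin ($141^\circ$ versus $125.4^\circ$).
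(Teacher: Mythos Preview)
Your proof is correct and follows essentially the same route as the paper: assume $b$ is not a neighbor of $a$, take the two actual neighbors with their arc midpoints, apply \cref{Lem:117} twice (using that $b$ is closest to $a$) to get two angles exceeding $117.3^\circ$, apply \cref{Cor:141} to get the third angle exceeding $141^\circ$, and obtain a contradiction with the $360^\circ$ bound on the sum of three pairwise angles at a point. Your handling of the orientation reversal for the second application of \cref{Lem:117} and of the WLOG symmetry needed for the ``in particular'' clause of \cref{Cor:141} is slightly more explicit than the paper's, but the argument is the same.
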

\begin{proof}
Suppose for a contradiction that $x$, $a$ and $y$ are consecutive and $b\notin \{x,y\}$. Let $p$ be the midpoint of $[x,a]$ and $q$ the midpoint of $[a,y]$. By \cref{Cor:141}, $\angle paq>141^\circ$, and by \cref{Lem:117}, both $\angle pab$ and $\angle qab$ are greater than $117.3^\circ$, as $d(x,a),d(y,a)\geq d(a,b)$. The sum of these angles is greater than $360^\circ$, which is impossible.
\end{proof}
\begin{proposition}
\label{Prop:finish}
Let $\epsilon\leq 0.66$. Let $b$ be a sample point, $a$ a closest sample point to $b$, and $c$ the closest sample point to $b$ such that $(a,b,c)$ is compatible. Then $a$, $b$ and $c$ are consecutive.
\end{proposition}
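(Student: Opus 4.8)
The plan is to use \cref{Prop:closest} to fix the edge from $b$ to its nearest neighbor $a$, and then to show that the ``closest compatible'' candidate $c$ is forced to coincide with the true successor of $b$.

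By \cref{Prop:closest}, $a$ and $b$ are consecutive; orient their connected component so that $a\to b$, and let $c_0$ denote the successor of $b$, so $a\to b\to c_0$. By \cref{Lem:abDisc}, $(a,b,c_0)$ is compatible, so $c_0$ is one of the points over which the closest compatible point is chosen; hence $d(b,c)\le d(b,c_0)$, and also $d(a,b)\le d(b,c)$ since $a$ is a nearest neighbor of $b$. It suffices to prove $c=c_0$, so suppose $c\ne c_0$. First I would locate $c$. Since $a$ and $b$ both lie on the bounding sphere of $B_x(d(x,b))$ for every $x\in X(a,b)$, and $X(a,b)\ne\emptyset$ when $\epsilon=0.66$, neither $(a,b,a)$ nor $(a,b,b)$ is compatible; thus $c\notin\{a,b\}$, and since $\Ss$ meets the arc from $a$ through $b$ to $c_0$ only in $\{a,b,c_0\}$, this forces $c\notin[a,c_0]$. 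Moreover $d(b,c)>d(a,b)$: a sample point at distance exactly $d(a,b)$ from $b$ would be a nearest neighbor of $b$, hence consecutive to $b$ by \cref{Prop:closest}, hence equal to $a$ or $c_0$, which is excluded. So we are reduced to the situation: $(a,b,c)$ is compatible, $c\notin[a,c_0]$, and $d(a,b)<d(b,c)\le d(b,c_0)$.

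As in the proof of \cref{Prop:closest}, the contradiction will come from exhibiting three directions at $b$ whose pairwise angles sum to more than $360^\circ$, which is impossible because they are the side lengths of a spherical triangle. Let $p$ be the midpoint of $[a,b]$ and $q$ the midpoint of $[b,c_0]$, and take the rays from $b$ through $p$, $q$ and $c$. By \cref{Cor:141}, applied after, if necessary, reversing the orientation of the component so that the relevant distance ratio is at most $1$, we get $\angle pbq>141^\circ$. Applying \cref{Lem:117} to the pair $c_0\to b$ (legitimately, since $c\in\C\setminus[b,c_0]$ and $d(b,c)\le d(b,c_0)$) gives $\angle qbc>117.3^\circ$.

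The remaining angle $\angle pbc$ is where the main work lies, and I expect it to be the chief obstacle. If we could apply \cref{Lem:117} to the pair $a\to b$ we would get $\angle pbc>117.3^\circ$ and hence $\angle pbq+\angle qbc+\angle pbc>141^\circ+2\cdot 117.3^\circ>360^\circ$; but that application requires $d(b,c)\le d(a,b)$, which fails here since $d(b,c)$ may be as large as $d(b,c_0)$. The real task is therefore a quantitative strengthening of \cref{Lem:117}: for $a\to b\to c_0$ with $p$ the midpoint of $[a,b]$, every $c\in\C\setminus[a,c_0]$ with $d(b,c)\le d(b,c_0)$ satisfies $\angle pbc>360^\circ-\angle pbq-\angle qbc$. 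I would prove this following the template of the proof of \cref{Lem:117}: assuming $\angle pbc$ too small, one constructs a line segment $I$ contained in a ball of the form $B_r(d_r/\epsilon)$ — so $\cl$ is defined along $I$ and $I$ misses $\M$ — such that $\cl(I)$ must contain a point of $(a,b)$ (forced in using \cref{lem:unique2}), must contain a point outside the arc $[a,b]$ on $c$'s side, and must avoid $\{a,b\}$; then $\cl(I)$ is disconnected, contradicting the continuity of $\cl$. Once this bound is in place, the three angles at $b$ sum to more than $360^\circ$, a contradiction; hence $c=c_0$, which proves \cref{Prop:finish} and completes the proof of \cref{Thm:main}.
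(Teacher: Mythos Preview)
Your setup is correct and you rightly isolate $\angle pbc$ as the crux, but the proposed resolution does not work. Your ``strengthening of \cref{Lem:117}'' makes no use of the compatibility of $(a,b,c)$; if it held, the same three-angle contradiction would rule out \emph{every} sample point $c'\notin\{a,b,c_0\}$ with $d(b,c')\le d(b,c_0)$, forcing $c_0$ to be the second-nearest sample point to $b$---but that is precisely what fails for $\epsilon$ in this range and is why the algorithm needs a compatibility test rather than simply taking the two nearest neighbors. Even granting compatibility, the available lemmas yield only $\angle abc>102.9^\circ$ (\cref{Cor:102}) together with $\angle abp<39^\circ$ (\cref{Lem:Xab}), hence $\angle pbc>63.9^\circ$, far short of the $101.7^\circ$ your triple $(p,q,c)$ requires. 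Nor can the template of \cref{Lem:117} be pushed this far: that proof lives inside $B_p(d_p/\epsilon)$, a ball of radius comparable to $d(a,b)$, whereas here $c$ may sit at distance up to $d(b,c_0)$, which is unbounded relative to $d(a,b)$, so $c$ escapes the region where the $\cl$-based construction has any grip.

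The paper's proof does use compatibility, and departs from your outline in two structural ways. First, it splits on the ratio $d(a,b)/d(c,b)$: when $d(a,b)\le 0.773\,d(c,b)$ it takes the rays through $q$, $a$, and $c$ (not $p$), so that \cref{Cor:102} bounds $\angle abc$ directly, \cref{Cor:141} bounds $\angle qba$, and \cref{Lem:117} bounds $\angle qbc$; these sum to more than $360^\circ$. Second, in the complementary case it abandons the midpoint and chooses $p\in(a,b)$ with $d(p,b)=0.4\,d(c,b)$; this sharpens $\angle qbp$ to $>145.4^\circ$ and, after a further sub-argument ruling out $d(p,a)\le d(p,b)$, confines $\angle pbc$ to the window $(63.9^\circ,97.3^\circ)$. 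The final contradiction then comes not from a three-angle sum but from a direct $\cl$-argument in the style of \cref{Lem:117}, tailored to that narrow window. Compatibility via \cref{Cor:102}, the case split, and the non-midpoint choice of $p$ are all load-bearing.
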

In particular, there is a unique closest point $c$ to $a$ such that $(a,b,c)$ is compatible.

The idea of the proof is as follows: We let $a$, $b$ and $c$ be as in the proposition, suppose there is a $c'\neq c$ such that $a$, $b$ and $c'$ are consecutive, let $q$ be the midpoint of $[b,c']$, and carefully pick a point $p\in [a,b]$. We get lower bounds on $\angle qbc$ and $\angle pbq$ by \cref{Lem:117} and \cref{Prop:main} depending on the distances from $q$, $c$ and $p$ to $b$. This gives an upper bound on $\angle cbp$, which leads to a contradiction by an argument similar to the one in the proof of \cref{Lem:117}. However, the proof is complicated by the degrees of freedom we have in choosing the distances from the various points to $b$. We give the details in \cref{Subsec:proof_finish}.

\section{Counterexample to curve reconstruction for $\epsilon=0.72$}
\label{Sec:counterexample}

In this section, we prove the following theorem, which says that correct curve reconstruction using $0.72$-samples is not in general possible, even in $\R^2$. Moreover, one cannot determine whether the (union of) curve(s) has more than one connected component, and the reconstruction problem remains impossible also under the assumption that the sample is taken from a single connected curve.
\begin{theorem}
\label{Thm:counterex}
There is a finite set $\Ss\subset \R^2$ that is a $0.72$-sample of $\C_1$, $\C_2$, $\C_3$ and $\C_4$, where $\C_1$ and $\C_2$ are connected closed curves and $\C_3$ and $\C_4$ are disconnected unions of closed curves, and $G_{\C_i}(\Ss)\neq G_{\C_j}(\Ss)$ for all $i\neq j$.
\end{theorem}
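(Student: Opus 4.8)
The plan is to construct an explicit finite point set $\Ss\subset\R^2$ together with four curve configurations $\C_1,\C_2,\C_3,\C_4$ that all pass through $\Ss$, are all $0.72$-sampled by $\Ss$, but induce four pairwise distinct graphs on $\Ss$. The natural starting point is the non-uniqueness example already known for $1$-samples (referenced in the introduction via \cite{amenta1998crust}, Observation 6): there, a small number of points admits two genuinely different ``reconstructions''. I would first recall/reconstruct that configuration, then show that by adding sample points and slightly perturbing, one can push the sampling parameter down from $1$ to $0.72$ while preserving the ambiguity — and in fact enriching it from two possible reconstructions to four.

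First I would build the basic ambiguous gadget. Take a point set that looks locally like two ``parallel'' arcs close together, with sample points arranged so that each point near the top arc has a near-neighbor both along its own arc and across the gap. Concretely, one wants a short stretch of the configuration where the points can be connected either ``straight across'' the ends (joining the two arcs into one long curve) or ``along each arc'' (keeping them as two curves), with both options realizable by smooth closed curves whose medial axis stays far enough away that the $0.72\lfs$ condition holds at every point. The key quantitative content is a \emph{local} computation: near the ambiguous region, place (say) four or six points whose mutual distances and the curvature/clearance of the connecting arcs are chosen so that $d(p,\Ss)<0.72\,\lfs(p)$ holds for all $p$ on every candidate curve. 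Away from the gadget, I would just sample densely enough that the $\epsilon$-condition is trivially satisfied, and close up each curve in a large, low-curvature loop where the medial axis is far away.

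To get \emph{four} distinct graphs rather than two, the idea is to use \emph{two} copies of the ambiguous gadget, placed far apart so they do not interfere (each one's medial axis contribution is local). At gadget $1$ I can independently choose ``along'' or ``across'', and likewise at gadget $2$; the four combinations give four graphs $G_{\C_i}(\Ss)$, and one checks they are pairwise distinct because they differ in which edges are present at one or both gadgets. Choosing the global topology appropriately — e.g., one gadget's ``across'' choice merges two loops and the other's doesn't — lets me arrange that $\C_1,\C_2$ come out connected and $\C_3,\C_4$ disconnected, matching the statement. The bulk of the verification is then just: (a) each $\C_i$ is a valid finite union of closed curves (injective images of circles), (b) $\Ss\subset\C_i$, (c) $\Ss$ is a $0.72$-sample of each $\C_i$ — which reduces to the local $\lfs$ estimate at the gadgets plus a triviality everywhere else — and (d) the four graphs are distinct, which is immediate by inspection of the gadget edges.

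The main obstacle I expect is step (c), the sampling estimate at the gadget: one must simultaneously control $\lfs$ (i.e., keep the medial axis at least a factor $1/0.72$ farther from each curve point than the nearest sample) while keeping the inter-sample spacing small enough, \emph{and} do this for all four curves through the \emph{same} point set — so the gadget geometry has little slack and the constant $0.72$ (rather than something closer to $1$) is exactly what makes the calculation tight but still feasible. This is where careful choice of the arc radii, the gap width, and the spacing of the handful of gadget points matters, and it is the one place a genuine (though elementary) computation is unavoidable; the rest of the argument is bookkeeping. I would present the gadget with explicit coordinates and then verify the $\lfs$ bound by exhibiting, for each curve point $p$, an explicit ball witnessing that the medial axis is far, and an explicit nearby sample point witnessing $d(p,\Ss)<0.72\,\lfs(p)$.
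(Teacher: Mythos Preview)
Your plan is essentially the paper's own approach: a local gadget admitting two smooth interpolations, duplicated to give $2\times 2=4$ global curves (two connected, two not), with the hard content being the explicit $\lfs$ check at the gadget and the rest closed off by large low-curvature loops. The paper's gadget is a pair of translated S-curves built from tangent circular arcs (not quite ``parallel arcs''), the $\lfs$ verification is reduced to checking only midpoints of consecutive samples (\cref{Lem:only_check_midpoints}), and the two ``copies'' are realized as two annuli with their ends tied together---but structurally this is exactly what you outlined.
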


As we will construct subsets of the curves before we construct the complete curves, we extend the definition of $\epsilon$-sampling to unions of closed curves in the obvious way.

Let $a=(0,-1)$, $b=(0,0)$, $c=(-1.008,0.614)$, $d=(-1.008,1.614)$.
Let $S_1$ and $S_2$ be the two tangent circles with the same radius such that $a,b\in S_1$, $c,d\in S_2$ and the tangent point is the midpoint $q$ between $b$ and $c$. Let $\C$ be the union of the part of $S_1$ running from $a$ to $q$ through $b$ and the part of $S_2$ running from $q$ to $d$ through $c$.

Next, let $a',b',c',d'$ be the points, $S'_1, S'_2$ the circles and $\C'$ the curve we get by translating the whole construction horizontally to the right so that $d(b,c)=d(b,c')$; see \cref{Fig:cex1}. Let $T$ be the set of midpoints of $[a,b]$, $[b,c]$, $[c,d]$, $[a',b']$, $[b',c']$ and $[c',d']$.
\begin{figure}
\centering
\begin{tikzpicture}[scale=1.5]
\coordinate (a) at (0,-1){};
\node[left] at (a){$a$};
\coordinate (b) at (0,0){};
\node[left] at (b){$b$};
\coordinate (c) at (-1.008,0.614){};
\node[left] at (c){$c$};
\coordinate (d) at (-1.008,1.614){};
\node[left] at (d){$d$};
\coordinate (a') at (2.015,-1){};
\node[left] at (a'){$a'$};
\coordinate (b') at (2.015,0){};
\node[left] at (b'){$b'$};
\coordinate (c') at (1.008,0.614){};
\node[left] at (c'){$c'$};
\coordinate (d') at (1.008,1.614){};
\node[left] at (d'){$d'$};
\coordinate (q) at (-0.504,0.307){};
\coordinate (q') at (1.511,0.307){};
\foreach \X in {a,b,c,d,a',b',c',d'}
\node at (\X){$\bullet$};
\node at (-1.7,-0.5){$S_1$};
\node at (0.3,1.9){$S_2$};
\node at (-1.4,1.2){$\C$};
\node at (0.8,-0.5){$S_1'$};
\node at (2.7,1.2){$S_2'$};
\node at (2.4,-0.5){$\C'$};
\foreach \X in {a,a'}
\draw[thick] (\X) arc (-37.57:79.73:0.82);
\foreach \X in {d,d'}
\draw[thick] (\X) arc (142.43:259.73:0.82);
\foreach \X in {q,q'}{
\draw[dashed] (\X) arc (-100.27:142.43:0.82);
\draw[dashed] (\X) arc (79.73:322.43:0.82);
}
\end{tikzpicture}
\caption{}
\label{Fig:cex1}
\end{figure}

\begin{lemma}
\label{Lem:only_check_midpoints}
If $\{a,b,c,d,a',b',c',d'\}$ is not a $0.72$-sample of $\C\cup\C'$, then there is a $t\in T$ such that $B_t\left(\frac{d_t}{0.72}\right)$ intersects the medial axis of $\C\cup\C'$.
\end{lemma}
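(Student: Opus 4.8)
The plan is to turn the (infinitary) $0.72$-sampling condition into a condition about the six points of $T$ by a short ball-containment argument. Suppose $\Ss_0\coloneqq\{a,b,c,d,a',b',c',d'\}$ is not a $0.72$-sample of $\C\cup\C'$; write $d_p=d(p,\Ss_0)$ and $\M$ for the medial axis of $\C\cup\C'$. Then there is a point $p\in\C\cup\C'$ with $d_p\geq 0.72\,\lfs(p)$, i.e.\ $d(p,\M)=\lfs(p)\leq d_p/0.72$. This $p$ lies on one of the six sub-arcs $[a,b],[b,c],[c,d],[a',b'],[b',c'],[c',d']$; let $t\in T$ be the midpoint of that sub-arc, let $v,u$ be its endpoints, and name them so that $p$ lies on the portion of the sub-arc from $u$ to $t$. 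The heart of the proof is the inequality
\[
0.72\,d(p,t)+d_p\leq d_t,
\]
which is exactly the statement $B_p\!\left(\tfrac{d_p}{0.72}\right)\subseteq B_t\!\left(\tfrac{d_t}{0.72}\right)$. Granting it, $d(t,\M)\leq d(t,p)+d(p,\M)\leq d(t,p)+\tfrac{d_p}{0.72}\leq\tfrac{d_t}{0.72}$, so $B_t\!\left(\tfrac{d_t}{0.72}\right)$ meets $\M$, which is the conclusion. (A point realizing $d(t,\M)$ may only lie in $\overline{\M}$; I would note that this is harmless here, or phrase the statement via $\overline{\M}$.)

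To prove the displayed inequality I would first record two facts that follow directly from the explicit coordinates. First, $d_t=d(t,u)=d(t,v)$: the equality $d(t,u)=d(t,v)$ holds because $t$ is a midpoint, and one checks for each of the six midpoints that every other sample point is strictly farther away (in particular the primed points are never closest to a midpoint of $\C$, and conversely). Second, for $p$ on the portion of the sub-arc from $u$ to $t$ one has $d_p=d(p,u)$: again every sample point other than $u,v$ is strictly farther, while $d(p,v)>d(p,u)$ by convexity of the circular sub-arc, with equality only at $t$. (Alternatively one could complete $\C$ and $\C'$ to closed curves and invoke \cref{Lem:d_p_closest}.) The key geometric observation is that in every case $u$, $p$, $t$ lie on a common circular arc of the shared radius $R$: either all three lie on one of $S_1,S_2,S_1',S_2'$, or the sub-arc is $[b,c]$ (or its primed copy), whose midpoint is the tangency point $q$, and then $p$ lies on $[b,q]\subset S_1$ (with $u=b$) or on $[q,c]\subset S_2$ (with $u=c$).

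Now let $2\beta$ and $2\gamma$ be the central angles subtended at the centre of that circle by the arcs from $u$ to $p$ and from $p$ to $t$, so $d_p=2R\sin\beta$, $d(p,t)=2R\sin\gamma$, $d_t=2R\sin(\beta+\gamma)$, and the displayed inequality becomes $\sin\beta+0.72\sin\gamma\leq\sin(\beta+\gamma)$. Here $\sigma\coloneqq\beta+\gamma$ is half the central angle of the arc from $u$ to $t$, and a short computation of the six sub-arcs shows $\sigma\leq 21.1^\circ<\arccos(0.72)$ in all cases. For fixed $\sigma$ the function $h(\gamma)=\sin\sigma-\sin(\sigma-\gamma)-0.72\sin\gamma$ satisfies $h(0)=0$, and its derivative $h'(\gamma)=\cos(\sigma-\gamma)-0.72\cos\gamma$ is nondecreasing on $[0,\sigma]$ with $h'(0)=\cos\sigma-0.72>0$; hence $h\geq 0$ on $[0,\sigma]$, which is precisely the inequality needed.

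The step I expect to cost the most effort is the verification of the two ``elementary'' distance facts from the coordinates — that on each sub-arc the nearest sample point to an interior point is the adjacent one and to a midpoint is one of its two neighbours, and in particular that points of $\C'$ never become closest to a point of $\C$ (and vice versa). This is finite but a little tedious, and it is also where the fact that $\C$ and $\C'$ are arcs rather than closed curves must be handled, since \cref{Lem:d_p_closest} is stated for closed curves. Everything after that — the containment and the trigonometric inequality — is routine.
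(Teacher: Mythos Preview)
Your proof is correct and follows the paper's overall strategy: establish the ball containment $B_p(d_p/0.72)\subseteq B_t(d_t/0.72)$ for the midpoint $t$ of the sub-arc containing $p$, exploiting that each half-sub-arc lies on a single circle and that the relevant half-angles are comfortably below $\arccos(0.72)$. The execution of the containment step differs. The paper argues infinitesimally: for $p'$ close to $p$ on the side of the nearer endpoint $y$, the ratio $(d_p-d_{p'})/d(p,p')$ tends to $\cos\angle p'py$, and one checks $\angle p'py<40^\circ$; hence the balls shrink monotonically as $p$ moves from $t$ toward $y$, and a (tacit) chaining of these local inclusions gives the global containment. You instead parametrize chord lengths by central angles on the common circle through $u,p,t$ and reduce directly to the inequality $\sin\beta+0.72\sin\gamma\le\sin(\beta+\gamma)$ for $\beta+\gamma\le 21.1^\circ$, which you dispatch by a short convexity argument. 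Your version is a bit more explicit and avoids the local-to-global passage; the paper's is more geometric and quicker to state. Your remark that \cref{Lem:d_p_closest} is formulated only for closed curves is well taken---the paper invokes it here as well, so both arguments ultimately rest on the finite distance checks you flag as the tedious part.
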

\begin{proof}
By definition, if $\{a,b,c,d,a',b',c',d'\}$ is not a $0.72$-sample of $\C\cup\C'$, then there is a $p\in\C\cup\C'$ such that $B_p\left(\frac{d_p}{0.72}\right)$ intersects the medial axis of $\C\cup\C'$.
Thus, it is enough to show that for every $p\in \C\cup\C'\setminus T$, there is a $t\in T$ such that $B_p\left(\frac{d_p}{0.72}\right)\subset B_t\left(\frac{d_t}{0.72}\right)$.

Let $p\in (x,y)\subset \C$ for some $x\to y$. We know that $d_p=d(p,x)$ or $d_p=d(p,y)$ by \cref{Lem:d_p_closest}. Suppose $d_p=d(p,y)$ ($d_p=d(p,x)$ is similar), and pick $p'\in (p,y)$.
Let $B=B_p\left(\frac{d_p}{0.72}\right)$ and $B'=B_{p'}\left(\frac{d_{p'}}{0.72}\right)$. If $B\nsubseteq B'$, there is a point on the ray from $p$ through $p'$ in $B'\setminus B$, which means that $\frac{d_p}{0.72} < d(p,p')+\frac{d_{p'}}{0.72}$, or equivalently
\[
\frac{d_p-d_{p'}}{d(p,p')} < 0.72.
\]
Observe that if we let $p'$ approach $p$, then $\frac{d_p-d_{p'}}{d(p,p')}$ approaches $\cos \angle p'py$; see \cref{Fig:approaching_cosine}.
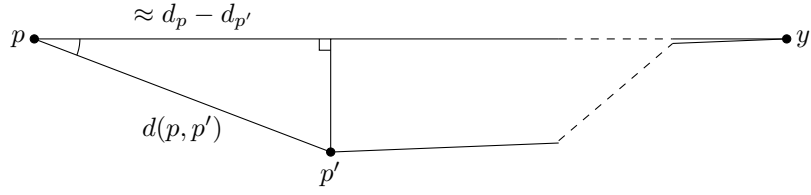
\begin{figure}
\centering
\begin{tikzpicture}[scale=3]
\coordinate (p) at (-1.3,0){};
\node[left] at (p){$p$};
\coordinate (pp) at (0,-0.5){};
\node[below] at (pp){$p'$};
\coordinate (y) at (2,0){};
\node[right] at (y){$y$};
\coordinate (o) at (0,0){};
\foreach \X in {p,y,pp}
\draw[color=black,fill=black] (\X) circle (.018);
\draw (o) to (pp) to (p) to (1,0);
\draw (pp) to (1,-0.46);
\draw[dashed] (1,-0.45) to (1.5,-0.02);
\draw[dashed] (1,0) to (1.5,0);
\draw (1.5,-0.02) to (y);
\draw (1.5,0) to (y);
\draw (-.05,0) to (-.05,-.05) to (0,-.05);
\node at (-0.6,0.1){$\approx d_p-d_{p'}$};
\node at (-0.65,-0.4){$d(p,p')$};
\centerarc[black](p)(-21.04:0:0.2);
\end{tikzpicture}
\caption{Assuming $d(p,p')\ll d(p,y)$, we have $\cos(\angle p'py)\approx \frac{d_p-d_{p'}}{d(p,p')}$. The dotted lines represent that we have collapsed a large part of the figure.}
\label{Fig:approaching_cosine}
\end{figure}
One can check that $\angle p'py<40^\circ$ for the possible $p$ and $p'$ in our example (by a large margin), while $\arccos(0.72)>43^\circ$. Thus,
\[
\arccos(0.72) > \arccos\left(\frac{d_p-d_{p'}}{d(p,p')}\right)
\]
for $p'$ sufficiently close to $p$,
so $0.72<\frac{d_p-d_{p'}}{d(p,p')}$, a contradiction. This proves that as $p$ moves along $\C$ or $\C'$ from a point in $T$ towards a point in $\{a,b,c,d\}$, the disc $B_p\left(\frac{d_p}{0.72}\right)$ decreases (in the sense that later discs are contained in earlier discs), proving the lemma.
\end{proof}

\begin{lemma}
\label{Lemma:8_points}
$\{a,b,c,d,a',b',c',d'\}$ is a $0.72$-sample of $\C\cup\C'$.
\end{lemma}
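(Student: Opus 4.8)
The plan is to combine \cref{Lem:only_check_midpoints} with an explicit description of the medial axis $\M$ of $\C\cup\C'$ near the six midpoints in $T$. By \cref{Lem:only_check_midpoints} it is enough to show that for every $t\in T$ the closed ball $B_t(d_t/0.72)$ misses $\M$, i.e.\ that $d_t<0.72\,\lfs(t)$.

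First I would invoke symmetry. The $180^\circ$ rotation $\sigma$ of $\R^2$ about $o:=\tfrac12(q+q')$ maps $\C$ onto $\C'$ and $\C'$ onto $\C$ (it swaps $S_1$ with $S_2'$ and $S_2$ with $S_1'$), hence it preserves $\M$ and permutes $T$, pairing the midpoint of $[a,b]$ with that of $[c',d']$, $q$ with $q'$, and the midpoint of $[c,d]$ with that of $[a',b']$; since $\sigma$ also preserves $\Ss$, it suffices to treat $t_1:=$ midpoint of $[a,b]$, $t_2:=q$, and $t_3:=$ midpoint of $[c,d]$. Writing $\rho\approx 0.8201$ for the common radius of the four circular arcs and $O_1,O_2$ for the centres of $S_1,S_2$, a direct computation from the given coordinates gives $d_{t_1}=d(t_1,a)=d(t_1,b)\approx 0.5281$, $d_{t_3}=d(t_3,c)=d(t_3,d)\approx 0.5281$ and $d_{t_2}=d(q,b)=d(q,c)=\tfrac12 d(b,c)\approx 0.5901$; in each case the other six sample points lie at distance well over $1$, so these are indeed the $d_{t_i}$.

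The heart of the matter is to show $\lfs(t_i)=\rho$ for $i=1,2,3$, which turns the required inequalities into the comfortable $0.5281<0.72\rho\approx 0.5905$ for $t_1,t_3$ and the razor-thin $\tfrac12 d(b,c)\approx 0.5901<0.72\rho\approx 0.5905$ for $t_2=q$. Here I would use that $\M(\C\cup\C')\subseteq\M(\C)\cup\M(\C')\cup E$, where $E$ is the set of points equidistant from $\C$ and $\C'$, and argue three things. (i) Since $\C$ is the union of two radius-$\rho$ circular arcs joined $C^1$ at $q$ and not otherwise approaching itself, $\M(\C)$ consists of $O_1$, $O_2$ and two rays that stay at distance $\ge\rho$ from the arcs; in particular the nearest point of $\M(\C)$ to each $t_i$ is $O_1$, $O_2$, or either of them (for $t_2=q$), all at distance exactly $\rho=d(t_i,O_j)$. (ii) $O_1,O_2$ genuinely lie in $\M(\C\cup\C')$ because $\C'$ is far from them ($d(O_1,\C')>\rho$, and likewise for $O_2$), and the mirror set $\M(\C')$ has all of its points ($O_1',O_2'$ and two rays) at distance $>\rho$ from every $t_i$. (iii) $E$ stays clear of the three midpoints: it is pinched against $\C$ (and $\C'$) most tightly at $o$, where $d(o,\C)\approx 0.59$, and since $\C$ is $y$-monotone and $\C'$ is merely its translate by $\approx 2.015$, the two curves diverge quickly enough away from $o$ that $E\cap B_{t_i}(d_{t_i}/0.72)=\emptyset$ for each $i$; concretely $d(t_1,o)\approx 0.87$, $d(q,o)\approx 1.01$ and $d(t_3,o)\approx 1.86$, each safely above the corresponding radius $d_{t_i}/0.72$ ($\approx 0.73$, $0.82$, $0.73$ respectively). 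Given (i)--(iii), $\lfs(t_i)=\rho$, and \cref{Lem:only_check_midpoints} finishes the proof.

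The main obstacle is (iii): controlling enough of the global shape of the bisector $E$ between $\C$ and $\C'$ to be certain it never reaches into the small balls $B_{t_i}(d_{t_i}/0.72)$ — the cheap estimate $d(t,E)\ge\tfrac12 d(\C,\C')\approx 0.59$ is not sufficient, so one must exploit that $\C$, $\C'$, and hence $E$, are built from explicit circular arcs. A second, purely numerical, hurdle is the tightness at $t_2=q$: the coordinates are evidently chosen so that $\tfrac12 d(b,c)$ falls just below $0.72\rho$ ($0.5901$ versus $0.5905$), so that inequality must be checked with enough arithmetic precision.
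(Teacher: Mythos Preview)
Your plan matches the paper's proof closely: both invoke \cref{Lem:only_check_midpoints}, use the $180^\circ$ symmetry about the midpoint of $b$ and $c'$ (which is the same point as your $o=\tfrac12(q+q')$) to reduce to three midpoints, and split the medial axis into the three pieces ``both closest points on $\C$'', ``both on $\C'$'', and ``one on each''. Your numerics are correct, including the razor-thin inequality at $q$.

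Two differences are worth noting. First, claiming $\lfs(t_i)=\rho$ is more than you need and forces the extra verification~(ii); the paper simply checks $d(t_i,\M)>d_{t_i}/0.72$. For the ``both on $\C'$'' case it does not bound $d(t_i,\M(\C'))$ at all, but observes that on the segment $\overline{mt}$ there must be a point $m'$ equidistant from $\C$ and $\C'$, reducing to the cross case.

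Second, and more importantly, the obstacle you flag in~(iii) is exactly where the paper supplies a concrete idea that your proposal lacks. Your estimates $d(t_i,o)>d_{t_i}/0.72$ only show that the single point $o$ lies outside the balls, not that the whole bisector $E$ does; the remark that ``$E$ is built from explicit circular arcs'' does not yet constitute an argument. The paper avoids describing $E$ altogether. Instead it takes the perpendicular bisector $l$ of the two \emph{centres} $s_1$ and $s_2'$ (of $S_1$ and $S_2'$) as a barrier: if $m$ has closest points $x\in\C$ and $y\in S_2'$, then because $S_1$ and $S_2'$ have equal radii, $m$ lies on $l$ or on the $\C'$-side of $l$, and a numerical check shows $l\cap B_t(d_t/0.72)=\emptyset$ for each $t\in\{p,q,r\}$. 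The remaining possibility $y\in S_1'$ is dispatched by the crude bound $d(t,S_1')>2d_t/0.72$. This perpendicular-bisector-of-centres trick is the missing ingredient in your outline.
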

\begin{proof}
By \cref{Lem:only_check_midpoints}, what we need to show is that for any $t\in T$, $B_t\left(\frac{d_t}{0.72}\right)$ does not intersect the medial axis.

To reduce the problem, observe that we have symmetry around the midpoint between $b$ and $c'$, as $\overrightarrow{bc}=-\overrightarrow{c'b'}$ and $\overrightarrow{cd}=-\overrightarrow{b'a'}$. Thus, we can restrict ourselves to the midpoints $p$, $q$ and $r$ of $[a,b]$, $[b,c]$ and $[c,d]$, respectively. For the rest of the proof, assume $t\in \{p,q,r\}$.

We extend $\cl$ to a set-valued map from $\R^2$ to $\C\cup\C'$ by letting $\cl(p)$ be the set of points in $\C\cup\C'$ that minimize the distance to $p$. Let $m$ be a point such that $\cl(m)$ contains at least two points in $\C\cup\C'$, and let $x$ and $y$ be distinct points in $\cl(m)$.
We will show that for $t\in \{p,q,r\}$, $m\notin B_t\left(\frac{d_t}{0.72}\right)$.

There are the following cases to consider:
\begin{itemize}
\item $x,y\in \C$,
\item $x,y\in \C'$,
\item $x\in\C$ and $y\in \C'$.
\end{itemize}
In the first case, $m$ is on the medial axis of $\C$. This has two connected components: one is a curve starting at the center $s_1$ of $S_1$ and going leftwards and downwards from there, and the other is the mirror image through $q$ of the first one.
Because of symmetry, we only have to consider the first component. On this curve, $s_1$ minimizes the distance to $p$ and $q$, and $r$ is far away from the whole curve.
One can check that the radius of $S_1$ is greater than $0.82$, that $d_q = d(q,b) <0.59$ and that $d_q>d_p$. Thus,
\[
d(s_1,p)=d(s_1,q)>0.82 >\frac{d_q}{0.72}>\frac{d_p}{0.72},
\]
so we conclude that $B_t\left(\frac{d_t}{0.72}\right)$ does not intersect the medial axis of $\C$.

Next, we assume that $x,y\in \C'$. Then there is a point $m'$ on the line segment $\overline{mt}$ such that $\cl(m')$ intersects both $\C$ and $\C'$, so $m'$ is on the medial axis. If $m\in B_t\left(\frac{d_t}{0.72}\right)$, then $m'\in B_t\left(\frac{d_t}{0.72}\right)$, so we have reduced the second case to the third case.

\begin{figure}
\centering
\begin{tikzpicture}[scale=1.5]
\coordinate (a) at (0,-1){};
\coordinate (b) at (0,0){};
\coordinate (c) at (-1.008,0.614){};
\coordinate (d) at (-1.008,1.614){};
\coordinate (a') at (2.015,-1){};
\coordinate (d') at (1.008,1.614){};
\coordinate (p) at (0.17,-0.5){};
\node[left] at (p){$p$};
\coordinate (q) at (-0.504,0.307){};
\node[above] at (q){$q$};
\coordinate (r) at (-1.178,1.114){};
\node[left] at (r){$r$};
\coordinate (q') at (1.511,0.307){};
\foreach \X in {a,b,c,d,p,q,r}
\node at (\X){$\bullet$};
\node at (-1.3,1.5){$\C$};
\node at (2.4,-0.5){$\C'$};
\node at (-0.4,1.9){$l$};
\foreach \X in {a,a'}
\draw[thick] (\X) arc (-37.57:79.73:0.82);
\foreach \X in {d,d'}
\draw[thick] (\X) arc (142.43:259.73:0.82);
\draw (p) circle (0.731);
\draw (q) circle (0.82);
\draw (r) circle (0.731);
\draw (-0.606,1.894) to (1.61,-1.275);
\end{tikzpicture}
\caption{}
\label{Fig:cex2}
\end{figure}

Lastly, assume that $x\in\C$ and $y\in \C'$; see \cref{Fig:cex2}.
Let $l$ be the perpendicular bisector of $s_1$ and the center $s'_2$ of $S'_2$. If $y\in S'_2$, then $m$ is either on $l$ or to the right of $l$ (the latter can only happen if $x\in S_2$).
By numerical calculation, one can check that $l\cap B_t\left(\frac{d_t}{0.72}\right)=\emptyset$, so in this case, $m\notin B_t\left(\frac{d_t}{0.72}\right)$.
At the same time, if $y\in S_1'$, then $d(t,y)>\frac{2d_t}{0.72}$, so if $m\in B_t\left(\frac{d_t}{0.72}\right)$, then $y\notin S_1'$, as $t$ is closer to $m$ than $S_1'$ is.
\end{proof}

\begin{figure}
\centering
\begin{tikzpicture}[scale=2]
\node[left] at (-1.008,0.614){$c$};
\node[right] at (-1.008,1.614){$d$};
\node[left] at (-1,2.75){$e$};
\node[left] at (-1.008,3.614){$f$};
\coordinate (r) at (-1.178,1.114){};
\node[left] at (r){$r$};
\coordinate (s) at (-0.86,2.182){};
\node[left] at (s){$s$};
\begin{scope}[yscale=-2+1,xscale=-2+1,xshift=1cm,yshift=-0.614cm]
\coordinate (a) at (0,-1){};
\coordinate (b) at (0,0){};
\coordinate (ad) at (0,-2){};
\coordinate (add) at (0,-3){};
\coordinate (infl) at (0.005,-2.203){};
\coordinate (infl2) at (0,-2.293){};
\foreach \X in {a,b,ad,add,r,s}
\node at (\X){$\bullet$};
\draw (a) arc (-37.57:79.73:0.82);
\draw (a) arc (142.43:207.22:0.734);
\draw (infl) arc (-6.59:27.22:0.734);
\draw (infl) arc (173.41:180.55:0.734);
\draw (infl2) to (add);
\end{scope}
\begin{scope}[yscale=-1,xscale=-1,xshift=1cm,yshift=-0.614cm]
\draw (0.581,-1.447) circle (0.734);
\draw (-0.724,-2.119) circle (0.734);
\node at (0.581,-1.447){$S_3$};
\node at (-0.724,-2.119){$S_4$};
\end{scope}
\end{tikzpicture}
\caption{$\C$ is extended from $d$ through $e$ to $f$, where the tangent is vertical. \label{Fig:cexz}}
\end{figure}
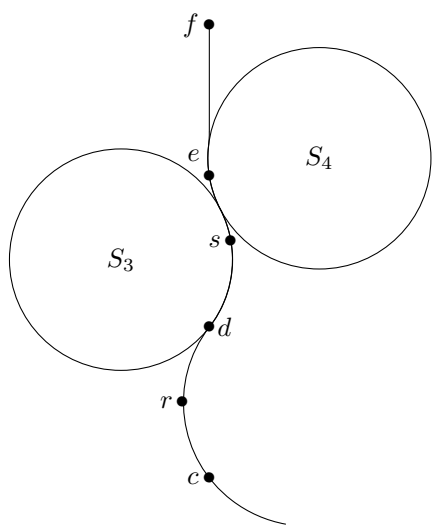

\begin{figure}
\centering
\begin{subfigure}{.5\textwidth}
\centering
\begin{tikzpicture}[scale=.7]
\node[left] at (0,-1){$a$};
\node[right] at (0,0.05){$b$};
\node[left] at (-1.008,0.614){$c$};
\node[right] at (-1.008,1.614){$d$};
\node[left] at (-0.93,2.914){$e$};
\node[left] at (-1.008,3.614){$f$};
\node[left] at (0,-3){$f'$};
\foreach \a in {0,1}{
\begin{scope}[xshift=\a*2.03cm]
\foreach \b in {0,1}{
\begin{scope}[yscale=-\b*2+1,xscale=-\b*2+1,xshift=\b*1cm,yshift=\b*-0.614cm]
\coordinate (a) at (0,-1){};
\coordinate (b) at (0,0){};
\coordinate (ad) at (0,-2){};
\coordinate (add) at (0,-3){};
\coordinate (infl) at (0.005,-2.203){};
\coordinate (infl2) at (0,-2.293){};
\foreach \X in {a,b,ad,add}
\node at (\X){$\bullet$};
\draw (a) arc (-37.57:79.73:0.82);
\draw (a) arc (142.43:207.22:0.734);
\draw (infl) arc (-6.59:27.22:0.734);
\draw (infl) arc (173.41:180.55:0.734);
\draw (infl2) to (add);
\end{scope}
}
\end{scope}
}
\begin{scope}[yscale=-1,xscale=-1,xshift=1cm,yshift=-0.614cm]
\end{scope}
\draw (add) arc (0:180:1.008);
\node at (0.712,4.326){$\bullet$};
\node at (0,4.621){$\bullet$};
\node at (-0.712,4.326){$\bullet$};
\draw[color=white] (0,-3.9) circle (0.01);
\end{tikzpicture}
\caption{$\C$ and $\C'$ are extended and tied together. \\ At $f$ and $f'$, the tangents of the curve are \\ vertical.}
\label{Fig:cex3}
\end{subfigure}%
\begin{subfigure}{.5\textwidth}
\centering
\begin{tikzpicture}[scale=.7]
\foreach \a in {0,1,2}{
\begin{scope}[xshift=\a*4.03cm]
\node at (0.712,4.326){$\bullet$};
\node at (0,4.621){$\bullet$};
\node at (-0.712,4.326){$\bullet$};
\draw[thick] (1.008,3.614) arc (0:180:1.008);
\end{scope}
}
\begin{scope}[xscale=1,yscale=-1,xshift=-1.008cm,yshift=-0.614cm]
\coordinate (a) at (0,-1){};
\coordinate (add) at (0,-3){};
\coordinate (infl) at (0.005,-2.203){};
\coordinate (infl2) at (0,-2.293){};
\draw[thick,color=red] (a) arc (-37.57:79.73:0.82);
\draw[thick,color=red] (a) arc (142.43:207.22:0.734);
\draw[thick,color=red] (infl) arc (-6.59:27.22:0.734);
\draw[thick,color=red] (infl) arc (173.41:180.55:0.734);
\draw[thick,color=red] (infl2) to (add);
\end{scope}
\begin{scope}[yscale=-1,xscale=-1,xshift=-9.06cm,yshift=-0.614cm]
\coordinate (a) at (0,-1){};
\coordinate (add) at (0,-3){};
\coordinate (infl) at (0.005,-2.203){};
\coordinate (infl2) at (0,-2.293){};
\draw[thick,color=blue] (a) arc (-37.57:79.73:0.82);
\draw[thick,color=blue] (a) arc (142.43:207.22:0.734);
\draw[thick,color=blue] (infl) arc (-6.59:27.22:0.734);
\draw[thick,color=blue] (infl) arc (173.41:180.55:0.734);
\draw[thick,color=blue] (infl2) to (add);
\end{scope}
\foreach \a in {0,1,...,4}{
\begin{scope}[xshift=\a*2.015 cm]
\foreach \b in {0,1}{
\begin{scope}[yscale=-\b*2+1,xscale=-\b*2+1,xshift=\b*1cm,yshift=\b*-0.614cm]
\coordinate (a) at (0,-1){};
\coordinate (b) at (0,0){};
\coordinate (ad) at (0,-2){};
\coordinate (add) at (0,-3){};
\coordinate (infl) at (0.005,-2.203){};
\coordinate (infl2) at (0,-2.293){};
\foreach \X in {a,b,ad,add}
\node at (\X){$\bullet$};
\draw[thick,color=blue] (a) arc (-37.57:79.73:0.82);
\draw[thick,color=blue] (a) arc (142.43:207.22:0.734);
\draw[thick,color=blue] (infl) arc (-6.59:27.22:0.734);
\draw[thick,color=blue] (infl) arc (173.41:180.55:0.734);
\draw[thick,color=blue] (infl2) to (add);
\end{scope}
}
\foreach \b in {0,1}{
\begin{scope}[xscale=\b*2-1,yscale=-\b*2+1,xshift=\b*1cm,yshift=\b*-0.614cm]
\coordinate (a) at (0,-1){};
\coordinate (add) at (0,-3){};
\coordinate (infl) at (0.005,-2.203){};
\coordinate (infl2) at (0,-2.293){};
\draw[thick,color=red] (a) arc (-37.57:79.73:0.82);
\draw[thick,color=red] (a) arc (142.43:207.22:0.734);
\draw[thick,color=red] (infl) arc (-6.59:27.22:0.734);
\draw[thick,color=red] (infl) arc (173.41:180.55:0.734);
\draw[thick,color=red] (infl2) to (add);
\end{scope}
}
\end{scope}
}
\begin{scope}[xshift=9.075 cm,yscale=-1,yshift=-0.614cm]
\node at (0,-1){$\bullet$};
\node at (0,0){$\bullet$};
\node at (0,-2){$\bullet$};
\node at (0,-3){$\bullet$};
\end{scope}
\end{tikzpicture}
\caption{Together with the black semicircles, the blue and red curves both give a valid reconstruction under the $0.72$-sampling condition.}
\label{Fig:cex4}
\end{subfigure}
\caption{\label{Fig:cex}}
\end{figure}
Now we want to extend this construction. See \cref{Fig:cexz} for what follows. We add a point $e$ such that $d$ is the midpoint between $c$ and $e$.
Next, we put a circle $S_3$ with radius $\frac{d_r}{0.72}$ so that it is tangent to $S_2$ at $d$, and a circle $S_4$ with the same radius as $S_3$ tangent to $S_3$ such that $e$ lies on $S_4$.
If we extend $\C$ such that it contains $[d,e]$ along $S_3$ and $S_4$ in the obvious way, then $\{a,b,c,d,e\}$ is a $0.72$-sample of $\C$.
To see this, note that if $s$ is the midpoint of $[d,e]$, then the difference in $x$-coordinate between $s$ and $d$ is less than that between $r$ and $d$, so $d_s<d_r$. The closest points on the medial axis to $s$ are the centers of $S_3$ and $S_4$, which have a distance of $\frac{d_r}{0.72}>\frac{d_s}{0.72}$ to $s$.

The tangent of $\C$ at $d$ is much closer to being vertical than the tangent at $e$, and if we add another point $f$ such that $e$ is the midpoint between $d$ and $f$, then we can extend $\C$ to $f$ similarly to how we extended $\C$ from $d$ to $e$ in such a way that $\{a,b,c,d,e,f\}$ is a $0.72$-sample, and such that the tangent of $\C$ at $f$ is vertical.

We can do the same below $a$, adding two points such that $\C$ can be extended downwards and the tangent of $\C$ at the lowest point is vertical. Now do the same for $\C'$, and add a sequence of points densely sampling a semicircle to connect $\C$ and $\C'$ as shown in \cref{Fig:cex3}. Again, the points shown make up a $0.72$-sample of the curve.
Next, we put many copies of this construction next to each other as shown in \cref{Fig:cex4}. Each copy is translated horizontally such that $d(b,c')$ is equal to the distance between $b'$ in one copy and $c$ in the copy on its right. If we ignore what happens to the far right or left, there are two ways to draw a set of curves with endpoints among the bottom points such that the set of points is a $0.72$-sample of the union of curves.

We now take this long strip of points and curves and bend it slightly upwards such that they are contained in an annulus and the ends meet; see \cref{Fig:cex5}.
As the length of this strip goes to infinity, the distances from points on the curve to the closest sample point and the medial axis are distorted by a factor that approaches $1$ when we bend it into the annulus.
Our arguments for the the set of points being a $0.72$-sample works equally well for an $\epsilon>0.72$ sufficiently close to $0.72$, so after turning the (sufficiently long) strip into an annulus, the point set stays a $\delta$-sample for some $\delta>0.72$.

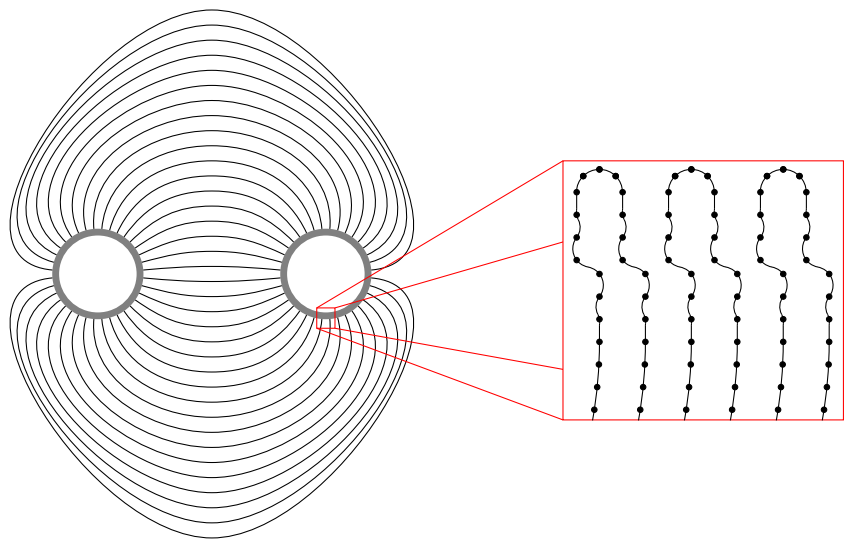
\begin{figure}
\centering
\begin{tikzpicture}[scale=.3]
\fill [gray,even odd rule] (0,0) circle[radius=2cm] circle[radius=1.7cm];
\fill [gray,even odd rule] (10,0) circle[radius=2cm] circle[radius=1.7cm];
\foreach \a in {5,15,...,175}{
\coordinate (x) at ({-2*cos(\a)+10},{2*sin(\a)}){};
\draw ({2*cos(\a)},{2*sin(\a)}) to [out=\a,in=180] (5,\a/15) to [out=0,in=180-\a] (x);
}
\foreach \a in {185,195,...,355}{
\coordinate (x) at ({-2*cos(\a)+10},{2*sin(\a)}){};
\draw ({2*cos(\a)},{2*sin(\a)}) to [out=\a,in=180] (5,\a/15-24) to [out=0,in=180-\a] (x);
}
\draw[color=red] (9.6,-2.4) rectangle (10.4,-1.5);
\draw[color=red] (9.6,-1.5) to (20.4,5);
\draw[color=red] (9.6,-2.4) to (20.4,-6.45);
\begin{scope}
\clip(10.4,-6.45) rectangle (20.4,5);
\draw[color=red] (10.4,-1.5) to (32.7,5);
\draw[color=red] (10.4,-2.4) to (32.7,-6.45);
\end{scope}
\begin{scope}[xshift=22cm]
\draw[color=red] (-1.6,-6.45) rectangle (10.7,5);
\def\radius{0.12};
\foreach \a in {0,1,2}{
\begin{scope}[xshift=\a*4.03cm]
\draw[color=black,fill=black] (0.712,4.326) circle (\radius);
\draw[color=black,fill=black] (0,4.621) circle (\radius);
\draw[color=black,fill=black] (-0.712,4.326) circle (\radius);
\draw[color=black,fill=black] (0.712,4.326) circle (\radius);
\draw[color=black,fill=black] (0,4.621) circle (\radius);
\draw[color=black,fill=black] (-0.712,4.326) circle (\radius);
\draw (1.008,3.614) arc (0:180:1.008);
\end{scope}
}
\foreach \a in {0,1,...,5}{
\begin{scope}[xshift=\a*2.015 cm]
\draw (0,-3) arc (0:-10:20);
\draw[color=black,fill=black] (-0.025,-4) circle (\radius);
\draw[color=black,fill=black] (-0.1,-5) circle (\radius);
\draw[color=black,fill=black] (-0.226,-6) circle (\radius);
\foreach \b in {0,1}{
\begin{scope}[yscale=-\b*2+1,xscale=-\b*2+1,xshift=\b*1cm,yshift=\b*-0.614cm]
\coordinate (a) at (0,-1){};
\coordinate (b) at (0,0){};
\coordinate (ad) at (0,-2){};
\coordinate (add) at (0,-3){};
\coordinate (infl) at (0.005,-2.203){};
\coordinate (infl2) at (0,-2.293){};
\foreach \X in {a,b,ad,add}
\draw[color=black,fill=black] (\X) circle (\radius);
\draw (a) arc (-37.57:79.73:0.82);
\draw (a) arc (142.43:207.22:0.734);
\draw (infl) arc (-6.59:27.22:0.734);
\draw (infl) arc (173.41:180.55:0.734);
\draw (infl2) to (add);
\end{scope}
}
\end{scope}
}
\end{scope}
\end{tikzpicture}
\caption{A reconstruction of the whole point set with the two annuli in grey. One can make sure that the curve is connected, and the point set is a $0.72$-sample of it.}
\label{Fig:cex5}
\end{figure}

Finally, we consider two such annuli with ``the ends tied together'', meaning that we draw curves between endpoints in the first annulus and endpoints in the second annulus, and sample the curves densely; see \cref{Fig:cex5}. In each of the two annuli, we have two choices of how to draw the curve, as illustrated in \cref{Fig:cex4}, which gives four different choices.
Exactly two of these choices result in a connected curve, and in all four cases, the set of points is a $0.72$-sample of the curve or union of curves. Summing up, we get \cref{Thm:counterex}.

\section{Counterexample to hypersurface reconstruction for $\epsilon=0.72$}
\label{Sec:hypersurface}

We have not defined what ``correct reconstruction'' means in higher dimensions. But assuming that preserving the number of connected components is required, we show that correct reconstruction of hypersurfaces in $\R^d$ using $0.72$-samples is impossible for any $d\geq 2$.
\begin{theorem}
For any $d\geq 2$, there is a finite point set $\Ss\subset \R^d$ that is a $0.72$-sample of two manifolds $\C$ and $\C'$ without boundary of dimension $d-1$ with a different number of connected components.
\end{theorem}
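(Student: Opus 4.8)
For $d=2$ the statement is exactly \cref{Thm:counterex}, with $\C$ and $\C'$ taken to be $\C_1$ and $\C_3$, so assume $d\geq 3$. The plan is to \emph{revolve} the planar construction of \cref{Sec:counterexample} around a flat of codimension $2$. Write $\R^d=\R\times\R^{d-1}$ with coordinates $(t,v)$, let $SO(d-1)$ act on $\R^d$ by rotating the $v$-coordinate, and let $L=\{v=0\}$ be its fixed locus. Take the curves $\C_1,\C_3\subset\R^2$ and the point set $\Ss$ from \cref{Thm:counterex}, and place copies $\gamma,\gamma'$ of $\C_1,\C_3$ in the half-plane $\Pi^{+}=\{(t,(v_1,0,\dots,0)):v_1>0\}$, translated so far from $L$ that $v_1>R$ everywhere on $\gamma\cup\gamma'$, where $R$ is a large parameter to be fixed at the end; let $S\subset\gamma\cap\gamma'$ be the corresponding copy of $\Ss$.

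Set $\hat{\C}=SO(d-1)\cdot\gamma$ and $\hat{\C'}=SO(d-1)\cdot\gamma'$. Since $\gamma,\gamma'$ avoid $L$, the $SO(d-1)$-orbit of each of their points is a $(d-2)$-sphere, so $\hat{\C}$ and $\hat{\C'}$ are compact $(d-1)$-manifolds without boundary ($S^{d-2}$-bundles over $\gamma$ and $\gamma'$). Disjoint subsets of $\Pi^{+}$ have disjoint orbits and $SO(d-1)$ is connected, so $\hat{\C}$ and $\hat{\C'}$ have the same numbers of connected components as $\C_1$ and $\C_3$, which differ. As sample set take $\hat{\Ss}=\bigcup_{s\in S}N_s$, where $N_s$ is a finite $\eta$-dense subset of the orbit $SO(d-1)\cdot s$ and $\eta>0$ is a small constant. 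Then $\hat{\Ss}$ is finite, and since $S\subset\gamma\cap\gamma'$ we have $\hat{\Ss}\subset\hat{\C}\cap\hat{\C'}$.

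The heart of the argument is that, for $R$ large, revolution preserves the local feature size exactly at the ``footpoints'': $\lfs_{\hat{\C}}(p)=\lfs_\gamma(p)$ for every $p\in\gamma$ (where $\gamma$ and $\Pi^{+}$ are viewed inside $\R^d$, on which $\R^2$-distance agrees with $\R^d$-distance). A direct computation shows that for $z\in\Pi^{+}$ the points of $\hat{\C}$ nearest to $z$ are exactly the points of $\gamma$ nearest to $z$; hence any point of the medial axis $\M_{\hat{\C}}$ lying near $\gamma$ becomes, after being rotated into $\Pi^{+}$, a point of $\M_\gamma$, so such points have the form $gz$ with $g\in SO(d-1)$ and $z\in\M_\gamma$. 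For such a point one computes $d(p,gz)\geq d_{\R^2}(p,z)\geq\lfs_\gamma(p)$, with equality for a suitable $g$ and $z$; and every other point of $\M_{\hat{\C}}$ — those governed by $L$ or by the reflection of $\gamma$ across $L$ — is $\Omega(R)$-far from $\gamma$. Since $\lfs_\gamma$ is bounded on the compact set $\gamma$, taking $R$ large enough gives $\lfs_{\hat{\C}}(p)=\lfs_\gamma(p)$ for all $p\in\gamma$, and then $\lfs_{\hat{\C}}(gp)=\lfs_\gamma(p)$ for all $g,p$ by $SO(d-1)$-invariance of $\hat{\C}$. The same holds for $\hat{\C'}$.

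Granting this, the sampling check is routine. As in \cref{Sec:counterexample}, the planar construction can be taken with uniform slack: $S$ is an $\epsilon^{*}$-sample of both $\gamma$ and $\gamma'$ for some $\epsilon^{*}<0.72$, and $\lfs_\gamma,\lfs_{\gamma'}\geq c_0>0$. Choose $\eta<(0.72-\epsilon^{*})c_0$. Given $y=gp\in\hat{\C}$, pick $s\in S$ with $d_{\R^2}(p,s)<\epsilon^{*}\lfs_\gamma(p)$; then $gs$ lies on the orbit of $s$, so $d(y,gs)=d(p,s)<\epsilon^{*}\lfs_\gamma(p)$, and some $\hat{s}\in N_s$ satisfies $d(y,\hat{s})<\epsilon^{*}\lfs_\gamma(p)+\eta\leq 0.72\,\lfs_\gamma(p)=0.72\,\lfs_{\hat{\C}}(y)$. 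Thus $\hat{\Ss}$ is a $0.72$-sample of $\hat{\C}$, and the identical argument applies to $\hat{\C'}$; fixing $R$ large enough for the previous paragraph finishes the construction. The main obstacle is the local feature size computation: one must rule out that some unexpected branch of $\M_{\hat{\C}}$ approaches $\hat{\C}$ more closely than the revolved copy of $\M_\gamma$ does — once that is established, everything else is bookkeeping.
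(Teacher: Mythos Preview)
Your proof is correct and follows essentially the same approach as the paper: revolve the planar counterexample from \cref{Thm:counterex} around an axis placed far from the curve, and take a dense finite subset of the revolved sample orbits. The only differences are cosmetic --- the paper revolves once to pass from dimension $d$ to $d+1$ and then iterates, appealing by analogy to the earlier annulus-bending argument for the lfs claim, whereas you handle all $d\geq 3$ in a single $SO(d-1)$-revolution and give an explicit (and sharper) argument that $\lfs_{\hat\C}(p)=\lfs_\gamma(p)$ once $R$ exceeds $\max_{p\in\gamma}\lfs_\gamma(p)$.
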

\begin{proof}
The case $d=2$ follows immediately from \cref{Thm:counterex}. For any point $p=(x,y)\in (0,\infty)\times \R$, let $p^\circ$ be the circle centered at $(0,y)$ containing $p$.
For any set $X\subset (0,\infty)\times \R$, let $X^\circ = \bigcup_{p\in X}p^\circ$.
Let $\C_i$ be as in \cref{Thm:counterex} for $1\leq i \leq 4$, and let $\Ss_\text{curve}$ be the $0.72$-sample as constructed in the previous section.
Pick a constant $R$ and translate $\C_i$ so that it is contained in $(R,\infty)\times \R$.
Similarly to how we bent a strip into a large annulus earlier, by choosing $R$ large, we can make sure that a sufficiently dense subset $\Ss$ of $\Ss_\text{curve}^\circ$ is a $\delta$-sample of $\C_i$ for some $\delta>0.72$.
Choosing $i=1$ and $i=3$, the theorem for $d=3$ follows.
To get the theorem for larger $d$, one can iterate the construction we used to get from $d=2$ to $d=3$.
\end{proof}

\section{Discussion}
\label{Sec:discussion}

We have only considered unions of closed curves. An obvious question is if our work generalizes to open curves. We expect that this can be dealt with by a slight tweak of the algorithms when the endpoints are far apart:
Instead of immediately connecting a point to its ``correct'' neighbors (i.e., its closest and closest ``compatible'' neighbors), one should add an edge between two points only when both points consider the other as a ``correct'' neighbor. However, we have not tried to turn this intuition into a precise statement.

Though this paper is mainly about curve reconstruction, we hope that it can also be a step towards improving the sampling conditions for surface reconstruction. Our arguments are valid for samples in any ambient dimension, and we expect many of our intermediate results to carry over to points on surfaces instead of curves. We consider generalizing our approach to surface reconstruction to be a promising direction of future research.

\bibliography{curve2}

\appendix

\section{Technical proofs}

\subsection{Proof of \cref{Lem:Xab}}
\label{Subsec:proof_Xab}

Recall the lemma:
\begin{Xab}
Let $a\to b$.
\begin{itemize}
\item[(i)] Let $p'\in E(a,b)\cap \partial U(a,b)$, and let $x$ be the point in $X(a,b)$ maximizing the distance to $p'$. Then $d(p',a)=\epsilon d(p',x)$, $d(p',x)=d(a,x)$ and $2\angle axp' = \angle axb$.
\item[(ii)] Let $p$ be the midpoint of $[a,b]$. Then $X(a,b)\subset B_p\left(\frac{d_p}{\epsilon} \right)$.
\item[(iii)] $(a,b)\subset U(a,b)$.
\end{itemize}
\end{Xab}
\begin{proof}
(i): See \cref{Fig:pythagoras} for the constructions that follow.
Let $\Pi$ be the plane spanned by $a,b,p'$. We use the notation $\Pi^+$ for the closed upper half plane and $\Pi^-$ for the closed lower half plane.
Without loss of generality, we can assume $a=(-1,0)$, $b=(1,0)$ in this plane, and that $p'\in \Pi^-$.
For all $y\in X(a,b)$, $d((0,0),y)$ is fixed.
Since $d(p,y)\leq d(p,(0,0))+d((0,0),y)$ and this inequality is only strict if $y$ lies on the nonnegative part of the $y$-axis, we get that $x\in X(a,b)\cap \Pi^+$ for $x$ as defined in the lemma.
Thus, $\overline{xp'}$ contains the midpoint of $\overline{ab}$, so $2\angle axp' = \angle axb$.
Since $p'$ is on the boundary of $U(a,b)$, $d(p',x)=d(a,x)$. Elementary calculation using the Pythagorean theorem shows that $\epsilon d(p',x)=d(p',a)$, proving (i).
\begin{figure}
\centering
\begin{tikzpicture}[scale=3]
\coordinate (a) at (-1.257,0){};
\node[left] at (a){$a$};
\coordinate (b) at (1.257,0){};
\node[left] at (b){$b$};
\coordinate (x) at (0,1.555){};
\node[right] at (x){$x$};
\coordinate (p) at (0,-0.444){};
\node[right] at (p){$p'$};
\coordinate (o) at (0,0){};
\foreach \X in {a,b,x,p,o}
\node at (\X){$\bullet$};
\draw (p) to (x) to (a) to (p);
\draw (a) to (o);
\node at (-0.8,0.8){$2$};
\node at (-0.7,-0.3){$2\epsilon$};
\node at (0.1,-0.22){$\epsilon^2$};
\node at (0.2,0.78){$2-\epsilon^2$};
\node at (-0.6,0.1){$\epsilon\sqrt{4-\epsilon^2}$};
\draw (-0.05,0) to (-0.05,0.05) to (0,0.05);
\end{tikzpicture}
\caption{Distances in $\Pi$ scaled by a factor of $\epsilon\sqrt{4-\epsilon^2}$. All the given values follow from $x\in X(a,b)$ and $d(a,x)=d(p',x)$, and we see that $d(a,p')=\epsilon d(a,x)$. \label{Fig:pythagoras}}
\end{figure}
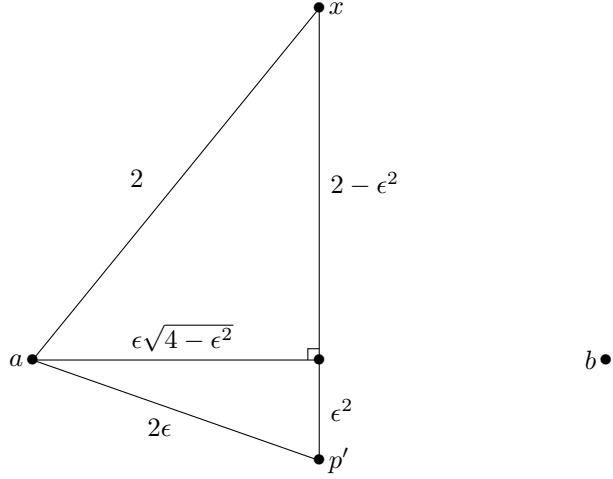

(ii): Let $\Pi$ be a plane spanned by $a,b,p$. Without loss of generality, we can assume $a=(-1,0)$ and $b=(1,0)$, and that $p\in \Pi^-$.
Then, for the same reasons as in the proof of (i), $x\in X(a,b)\cap \Pi^+$ maximizes the distance to $p$ among all points in $X(a,b)$.
It suffices to show that $x\in B_p\left(\frac{d_p}{\epsilon}\right)$. Let also $p'\in \Pi^-$ be as in the proof of (i).

Consider the case when $p$ is below or equal to $p'$. Let $x'$ be the point on the $y$-axis such that
$d(p,x')=d(a,x')$. Since $\triangle px'a$ and $\triangle p'xa$ are both isosceles and $\angle xp'a >\angle x'pa$, we have 
\[
\frac{d(p,x')}{d(p,a)}\leq \frac{d(p',x)}{d(p',a)}= \epsilon,
\]
so $x'\in B_p\left(\frac{d(p,a)}{\epsilon} \right)
=B_p\left(\frac{d_p}{\epsilon} \right)$.
Since $d(x',a)\leq d(x',p)$, we have a contradiction by \cref{lem:unique2} (ii).

Thus, we can assume that $p$ is above $p'$. Let $a'\in\overline{ax}$ be such that $\overline{p'a}$ is parallel to $\overline{pa'}$. Then \[\epsilon=\frac{d(p',a)}{d(p',x)}=\frac{d(p,a')}{d(p,x)}\leq \frac{d(p,a)}{d(p,x)},\] which proves (ii).

Since we know that $X(a,b)\subset B_p\left(\frac{d_p}{\epsilon} \right)$ from (ii), (iii) follows from \cref{lem:unique2} (ii).
\end{proof}

\subsection{Proof of \cref{Prop:main}}
\label{Subsec:proof_main}

Recall the statement of the proposition:
\begin{propmain}
Let $\epsilon\leq 0.765$, and let $a\to b\to c$ with $p\in (a,b)$ and $d(p,b)\leq d(p,a)$. Then for any $x$ such that $d(x,p)=d(x,b)=\frac{d_p}{\epsilon}$, $(b,c]\cap B_x\left(\frac{d_p}{\epsilon}\right) =\emptyset$.
\end{propmain}

\begin{proof}
\begin{figure}
\centering
\begin{tikzpicture}[scale=4]
\draw (0,0) arc (180:60:1.5);
\coordinate (p) at (0,0){};
\node[left] at (p){$p$};
\coordinate (b) at (0.3333,0.9428){};
\node[left] at (b){$b$};
\coordinate (x) at (1.5,0){};
\node[right] at (x){$x$};
\coordinate (n) at (0.333,0){};
\node[below] at (n){$n$};
\coordinate (c) at (1.655,1.492){};
\node[below] at (c){$c'$};
\coordinate (cc) at (2,1.5){};
\node[above] at (cc){$c$};
\coordinate (m) at (0.994,1.217){};
\node[above] at (m){$m$};
\coordinate (y) at (1.334,0.4){};
\node[right] at (y){$y$};
\coordinate (pp) at (0.9,1.444){};
\node[above] at (pp){$p'$};
\coordinate (C) at (2.2,1.55){};
\node[right] at (C){$\C$};
\foreach \X in {p,b,x,n,c,cc,m,y,pp}
\node at (\X){$\bullet$};
\draw (p) to (x);
\draw (p) to (y);
\draw (b) to (n);
\draw (b) to (x);
\draw (b) to (y);
\draw (b) to (c);
\draw (x) to (m);
\draw (0.948,1.198) to (0.967,1.152) to (1.0134,1.171);
\draw (0.383,0) to (0.383,0.05) to (0.333,0.05);
\draw[very thick] (p) to [out=75,in=240] (b) to [out=60,in=200] (pp) to [out=20,in=174.06] (c) to [out=-5.94,in=190] (cc) to [out=10,in=200] (C);
\end{tikzpicture}
\caption{Illustration of constructions in the proof of \cref{Prop:main}. \label{Fig:with_curve}}
\end{figure}
Let $Z$ be the set of points $x$ with $d(x,p) = d(x,b)=\frac{d(p,b)}{\epsilon}$, which is equal to $\frac{d_p}{\epsilon}$ by \cref{Lem:d_p_closest}.
We begin by observing that $\cl$ is well-defined on $Z$, as $Z\subset B_p\left(\frac{d_p}{\epsilon}\right)$, and that $\cl(x)\in (p,b)$ by \cref{lem:unique2} (i).

We now assume that there is a $c'\in (b,c]$ with $d(c',Z)\leq \frac{d_p}{\epsilon}$. We will derive a contradiction, which proves the proposition.
Now, $d(c',Z)$ varies continuously with $c'$, so either there is a $c'\in (b,c]$ with $d(c',Z) = \frac{d_p}{\epsilon}$, or $d(c',Z) < \frac{d_p}{\epsilon}$ for all $c'\in (b,c]$.
Note that the point in $Z$ closest to $c'$ lies in the plane spanned by $p,b,c'$. (The points $p,b,c'$ cannot lie on a line.)

In the case where $d(c',Z) < \frac{d_p}{\epsilon}$ for all $c'\in (b,c]$, pick a $c'$ close enough to $b$ that it lies in $B_p\left(\frac{d_p}{\epsilon}\right)$, and pick $x\in Z$ such that $d(x,c')<\frac{d_p}{\epsilon}$.
Since $x\in B_p\left(\frac{d_p}{\epsilon}\right)$, we get $\overline{xc'}\in B_p\left(\frac{d_p}{\epsilon}\right)$. Thus $\cl$ is defined on $\overline{xc'}$.
Clearly, $\cl(c')=c'$, and we have already observed that $\cl(x)\in (p,b)$.
For all $y\in \overline{xc'}$, we have both $d(b,y)> d(c',y)$ and $d(p,y)> d(c',y)$, so $b,p\notin \cl(\overline{xc'})$.
But this means that $\cl(\overline{xc'})$ is disconnected, which by continuity of $\cl$ is a contradiction.

Thus, we assume that there is a $c'\in (b,c]$ with $d(c',Z)=\frac{d_p}{\epsilon}$. As observed, we can pick $x\in Z$ such that $d(c',x)=\frac{d_p}{\epsilon}$ and $x$ lies in the plane spanned by $p,b,c'$. We recommend keeping an eye on \cref{Fig:with_curve} to follow the constructions ahead.

Draw the normal from $x$ to $\overline{bc'}$ and call the intersection $m$. Draw the normal from $b$ to $\overline{px}$ and call the intersection $n$.
Suppose $d(b,m)< d(b,n)$. Since the right triangles $\triangle nxb$ and $\triangle mbx$ share the hypotenuse $\overline{xb}$, the Pythagorean theorem tells us that $d(x,n)< d(x,m)$. Thus, there is a $y$ on $\overline{xm}$ such that
\[\frac{d(x,n)}{d(b,n)} =\frac{d(y,m)}{d(b,m)}.\]
If instead $d(b,m)\geq d(b,n)$, let $y=x$.

Let $p'$ be the midpoint on $[b,c']$. We claim that $d_{p'}=d(p',b)$, which is equivalent to $d(p',b)\leq d(p',c)$ by \cref{Lem:d_p_closest}. If $d(p',b)> d(p',c)$, the line segment $I$ from $p'$ to $c'$ is contained in $B_{p'}(d_{p'})\subset B_{p'}\left(\frac{d_{p'}}{\epsilon}\right)$, so $I$ does not intersect the medial axis. By \cref{Lem:I}, $[p',c)\subset B_{p'}(d(p',c))$, so $d(p',c)\geq d(p',c')=d(p',b)$, a contradiction.

We now state two claims, show how the proposition follows from them, and then we prove the claims.
\begin{claim}
\label{Cl:i}
$y\in B_p\left(\frac{d_p}{\epsilon}\right)$.
\end{claim}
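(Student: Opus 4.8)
The plan is to reduce the claim to an elementary computation in the plane $\Pi$ spanned by $p,b,c'$, which contains $x$, $n$, $m$ and $y$. First the trivial case: if $d(b,m)\ge d(b,n)$, then $y=x$ by construction, and since $x\in Z$ we have $d(p,x)=\frac{d(p,b)}{\epsilon}=\frac{d_p}{\epsilon}$ (the last equality by \cref{Lem:d_p_closest}), so $y\in B_p\!\left(\frac{d_p}{\epsilon}\right)$. From now on I would assume $d(b,m)<d(b,n)$, so that $y$ lies strictly between $x$ and $m$ on $\overline{xm}$.

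Next I would fix coordinates in $\Pi$, normalizing so that $d_p=1$: put $b$ at the origin, take the line $bc'$ as the $x$-axis and place $x$ below it, so that $m=(\mu,0)$ with $\mu=d(b,m)>0$ and $x=(\mu,-k)$ with $k=\sqrt{\epsilon^{-2}-\mu^2}$; write $S=\sqrt{4-\epsilon^2}$. A short computation in the isosceles triangle $\triangle pbx$ gives $d(x,n)=\frac{2-\epsilon^2}{2\epsilon}$ and $d(b,n)=\frac S2$. The conditions $d(p,b)=1$ and $d(p,x)=\epsilon^{-1}$ place $p$ on the line $\mu p_1-kp_2=\tfrac12$, which meets the unit circle in the two points symmetric about the line $bx$; the relevant configuration (dictated by $a\to b\to c$ together with $\cl(x)\in(p,b)$) singles out
\[
p=\Bigl(\tfrac{\epsilon^2\mu}{2}-\tfrac{\epsilon S k}{2},\ -\tfrac{\epsilon^2 k}{2}-\tfrac{\epsilon S\mu}{2}\Bigr).
\]
The defining proportion $\frac{d(y,m)}{d(b,m)}=\frac{d(x,n)}{d(b,n)}=\frac{2-\epsilon^2}{\epsilon S}$ together with $y\in\overline{xm}$ then gives $y=\bigl(\mu,\,-\mu\tfrac{2-\epsilon^2}{\epsilon S}\bigr)$, and the case hypothesis $\mu=d(b,m)<d(b,n)=\tfrac S2$ yields $\mu\tfrac{2-\epsilon^2}{\epsilon S}<k$, so this point indeed lies on $\overline{xm}$.

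Since $x$ and $y$ share the first coordinate $\mu$, we have $d(p,y)^2-d(p,x)^2=(y_2-x_2)(y_2+x_2-2p_2)$, and $y_2-x_2=k-\mu\tfrac{2-\epsilon^2}{\epsilon S}>0$ by the previous line, so the claim $d(p,y)\le\frac{d_p}{\epsilon}=d(p,x)$ is equivalent to $y_2+x_2-2p_2\le0$, i.e.\ to
\[
\mu\cdot\frac{5\epsilon^2-\epsilon^4-2}{\epsilon S}\ \le\ (1-\epsilon^2)\sqrt{\epsilon^{-2}-\mu^2}.
\]
The right-hand side is nonnegative, so this is automatic unless $5\epsilon^2-\epsilon^4-2>0$; in that case both sides are nonnegative, and after squaring and using $\mu^2<\tfrac{S^2}{4}$ (valid by the case hypothesis) to eliminate $\mu$, the inequality reduces to $\bigl(5\epsilon^2-\epsilon^4-2\bigr)^2\le\bigl((1-\epsilon^2)(2-\epsilon^2)\bigr)^2$, hence --- both bases being positive here --- to $\epsilon^4-4\epsilon^2+2\ge0$. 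This holds because $\epsilon\le0.765<\sqrt{2-\sqrt2}$.

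I expect the only genuinely delicate point to be pinning down the correct of the two candidate positions for $p$: this is exactly where the hypotheses $a\to b\to c$ and $\cl(x)\in(p,b)$ are used, to exclude the configuration reflected across the line $bx$. Once that is settled the whole claim becomes a single-variable estimate in $\mu\in(0,\tfrac S2)$ whose extreme case $\mu\to\tfrac S2$ is precisely what forces the threshold $\epsilon\le0.765$ (note $\sqrt{2-\sqrt2}\approx0.7654$).
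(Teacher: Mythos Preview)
Your argument is correct and, interestingly, arrives at the exact threshold $\epsilon\le\sqrt{2-\sqrt2}$ that the paper's proof also needs, but by a genuinely different route. The paper does not compute $d(p,y)$ at all: it first reduces to showing $d(n,y)\le d(n,x)$ (valid since $n\in\overline{px}$), and then gives a synthetic argument --- $b,x,m,n$ lie on a common circle $C$ by Thales, and as $m$ varies on $C$ the point $y$ traces a second circle $C'$ through $b$ and $x$; a third circle $C''$ of midpoints contains $n$, the midpoint $q$ of $\overline{xy}$, and the midpoint $r$ of $\overline{xb}$, and the inscribed-angle theorem gives $\angle xqn=\angle xrn\ge 90^\circ$ precisely when $d(x,n)\ge d(b,n)$, i.e.\ $\frac{2-\epsilon^2}{2\epsilon}\ge\frac{\sqrt{4-\epsilon^2}}{2}$, which squares to your $\epsilon^4-4\epsilon^2+2\ge0$. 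Your coordinate computation is more pedestrian but makes the dependence on $\mu$ and the emergence of the threshold completely transparent; the paper's argument is coordinate-free and sidesteps placing $p$ explicitly.

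On your closing caveat: the worry about selecting the correct one of the two candidate positions for $p$ is in fact moot. For the reflected candidate $p_+=\bigl(\tfrac{\epsilon^2\mu}{2}+\tfrac{\epsilon Sk}{2},\,-\tfrac{\epsilon^2k}{2}+\tfrac{\epsilon S\mu}{2}\bigr)$ one gets
\[
y_2+x_2-2(p_+)_2=-\mu\Bigl(\epsilon S+\tfrac{2-\epsilon^2}{\epsilon S}\Bigr)-(1-\epsilon^2)k<0,
\]
so $d(p_+,y)<d(p_+,x)$ trivially. The case you analyse is therefore the only nontrivial one, and no appeal to $a\to b\to c$ or to $\cl(x)\in(p,b)$ is actually needed for this claim.
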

\begin{claim}
\label{Cl:ii}
$y\in B_{p'}\left(\frac{d_{p'}}{\epsilon}\right)$.
\end{claim}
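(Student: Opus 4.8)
The plan is to identify $y$ as a point of $X(b,c')$ — or, in one of the two cases, as a point on the segment joining the midpoint of $\overline{bc'}$ to $X(b,c')$ — and then invoke \cref{Lem:Xab} (ii) to conclude $y\in B_{p'}\left(\frac{d_{p'}}{\epsilon}\right)$. The starting point is that $x$ is equidistant from $p$, $b$ and $c'$, all three at distance $\frac{d_p}{\epsilon}$: $x\in Z$ gives $d(x,p)=d(x,b)=\frac{d_p}{\epsilon}$, and $x$ was chosen with $d(x,c')=\frac{d_p}{\epsilon}$. Consequently $x$ — hence $m$ and every point of $\overline{xm}$, in particular $y$ — lies in $E(b,c')$, and $m$ (the foot of the normal from $x$ to $\overline{bc'}$, with $x$ equidistant from $b$ and $c'$) is the midpoint of the segment $\overline{bc'}$.

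The key computation is that of $d(b,y)$. The relation defining $y$, $\frac{d(x,n)}{d(b,n)}=\frac{d(y,m)}{d(b,m)}$, says $\tan\angle xbn=\tan\angle ybm$ (in the right triangles $\triangle xnb$ and $\triangle ymb$), so $\angle xbn=\angle ybm$. From the isosceles triangle $\triangle pbx$ one reads $\cos\angle xpb=\frac{d_p/2}{d_p/\epsilon}=\frac{\epsilon}{2}$, hence $d(b,n)=d(b,p)\sin\angle xpb=\frac{d_p\sqrt{4-\epsilon^2}}{2}$ and $\cos\angle ybm=\cos\angle xbn=\frac{d(b,n)}{d(b,x)}=\frac{\epsilon\sqrt{4-\epsilon^2}}{2}$. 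In the case $d(b,m)<d(b,n)$ this gives
\[
d(b,y)=\frac{d(b,m)}{\cos\angle ybm}=\frac{d(b,c')/2}{\epsilon\sqrt{4-\epsilon^2}/2}=\frac{d(b,c')}{\epsilon\sqrt{4-\epsilon^2}},
\]
so $y\in X(b,c')$. In the case $d(b,m)\geq d(b,n)$ we have $y=x$, and the inequality rearranges to $d(b,c')\geq d_p\sqrt{4-\epsilon^2}$, so $d(b,y)=\frac{d_p}{\epsilon}\leq\frac{d(b,c')}{\epsilon\sqrt{4-\epsilon^2}}$. Writing $R=\frac{d(b,c')}{\epsilon\sqrt{4-\epsilon^2}}$, in both cases $y$ is equidistant from $b$ and $c'$ at distance at most $R$.

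To finish, note that $y$ lies in the plane $\Pi$ spanned by $p,b,c'$ and on the ray from $m$ through $x$, so it lies on the segment from $m$ to the point $y''$ at which that ray meets $X(b,c')$. Now $m\in B_{p'}\left(\frac{d_{p'}}{\epsilon}\right)$: since $p'\in E(b,c')$ (because $d(p',b)=d(p',c')$) and the foot of the perpendicular from $b$ to $E(b,c')$ is $m$, the Pythagorean theorem gives $d(m,p')^2=d(b,p')^2-d(b,m)^2\leq d(b,p')^2=d_{p'}^2\leq\left(\frac{d_{p'}}{\epsilon}\right)^2$, using $d(b,p')=d_{p'}$ (already shown) and $\epsilon\leq 1$. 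And $X(b,c')\subset B_{p'}\left(\frac{d_{p'}}{\epsilon}\right)$ by \cref{Lem:Xab} (ii); this lemma is stated for $a\to b$, but the proof of part (ii) uses only elementary planar geometry and \cref{lem:unique2}, whose hypotheses hold here ($b\in\Ss$; $c'$ is in the same connected component of $\C$ as $b$; $p'$ is the midpoint of $[b,c']$; and $d_{p'}=d(p',b)$). Since $B_{p'}\left(\frac{d_{p'}}{\epsilon}\right)$ is convex and contains both $m$ and $y''$, it contains $y$, which is \cref{Cl:ii}.

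The main obstacle is bookkeeping rather than any conceptual difficulty: one must verify that $n$ really lies on the segment $\overline{px}$ — equivalently that $\triangle pbx$ is acute at $p$ and $x$, which uses $0<\epsilon<\sqrt2$ — so that the cosine and sine formulas above are valid, that $m$ is indeed the midpoint of $\overline{bc'}$, and that in both cases the point $y$ lands on the segment from $m$ to the relevant point $y''$ of $X(b,c')$ (this is exactly what the two-case definition of $y$ is set up to guarantee). The degenerate configurations ($p,b,c'$ collinear, or $x=m$) are either excluded in the surrounding proof or disposed of immediately.
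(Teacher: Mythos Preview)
Your argument is correct and follows essentially the same route as the paper: both proofs place $y$ inside the convex hull of $X(b,c')$ and then invoke \cref{Lem:Xab}~(ii). The paper phrases the key step as an angle comparison ($\angle bym\geq\angle bxn=\rho=\angle bzm$ for $z\in X(b,c')$, hence $d(m,y)\leq d(m,z)$), while you compute $d(b,y)$ directly; these are the same calculation, and your observation that in the main case $y$ lands \emph{exactly} on $X(b,c')$ is the equality case the paper records parenthetically. Your separate verification that $m\in B_{p'}\!\left(\frac{d_{p'}}{\epsilon}\right)$ is harmless but unnecessary, since $m$ already lies in the convex hull of $X(b,c')$.
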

Let $B=B_p\left(\frac{d_p}{\epsilon}\right)$. We know that $x\in B$, so it follows from \cref{Cl:i} that $\overline{xy}\subset B$. In addition, $\overline{yp'}\subset B_{p'}\left(\frac{d_{p'}}{\epsilon}\right)$ by \cref{Cl:ii}. Thus, $\cl$ is defined on $I \coloneqq \overline{xy}\cup \overline{yp'}$.
We have observed that $\cl(x)\in (p,b)$, and clearly, $\cl(p')=p'$. Thus, by continuity of $\cl$, $\cl(I)$ contains either $b$ or $c'$. But for all $z\in I$, $d(b,z)=d(b,c')$, so $b,c'\notin \cl(I)$, a contradiction.

\begin{claimproof}[Proof of \cref{Cl:i}]
For the constructions ahead, see \cref{Fig:without_curve}. We already know that $x\in B_p\left(\frac{d_p}{\epsilon} \right)$, so it suffices to show that $d(p,y)\leq d(p,x)$. In fact, since $n$ is on the line segment $\overline{px}$, it is enough to show $d(n,y)\leq d(n,x)$, which will be our goal. This is a tautology in the case $d(b,m)\geq d(b,n)$, in which we defined $x=y$, so assume $d(b,m)< d(b,n)$.

Since $\angle xnb$ and $\angle xmb$ are right, $b$, $x$, $m$ and $n$ all lie on a circle $C$ where $\overline{xb}$ is a diameter. Fix $b$, $n$ and $x$, and let $m$ vary along $C$. By definition of $y$, $\angle myb = \angle nxb$ and thus $\angle xyb$ are fixed, so as $m$ moves along $C$, $y$ moves along a circle $C'$ containing $b$ and $x$.
Let $C''$ be the circle containing the midpoint between $x$ and $x'$ for all $x'\in C'$; in particular, the midpoint $q$ on $\overline{xy}$ and the midpoint $r$ on $\overline{xb}$ both lie on $C''$.

Letting $m$ approach $b$, so that $y$ approaches $b$ along $C'$, we see that $C$ and $C'$ meet at an angle of $\angle mby=\angle nbx$. Since $\overline{xb}$ is perpendicular to $C$ at $b$, this implies that $\overline{nb}$ is perpendicular to $C'$ at $b$.
Since $\angle xnb$ is right, $n$ is the midpoint between $x$ and another point on $C'$, so $n\in C''$. Thus $\angle xzn$ is constant for all $z$ on the arc of $C''$ between $x$ and $n$ containing $q$.
In \cref{Fig:pythagoras}, an isosceles triangle with the base length $\epsilon$ times the leg length is illustrated, which is thus similar to $\triangle pxb$.
Some calculation using the lengths given in \cref{Fig:pythagoras} shows that for $\epsilon\leq 0.765$, $d(x,n)\geq d(n,b)$. It follows that $\angle xqn=\angle xrn\geq 90^\circ$. This implies $d(n,y)\leq d(n,x)$, which was what we wanted to show.
\end{claimproof}
\begin{figure}
\centering
\begin{tikzpicture}[scale=5]
\coordinate (p) at (0,0){};
\node[left] at (p){$p$};
\coordinate (b) at (0.3333,0.9428){};
\node[left] at (b){$b$};
\coordinate (r) at (0.9167,0.4714){};
\node[below] at (r){$r$};
\coordinate (x) at (1.5,0){};
\node[right] at (x){$x$};
\coordinate (n) at (0.3333,0){};
\node[below] at (n){$n$};
\coordinate (m) at (0.994,1.217){};
\node[above] at (m){$m$};
\coordinate (y) at (1.334,0.4){};
\node[right] at (y){$y$};
\coordinate (q) at (1.417,0.2){};
\node[right] at (q){$q$};
\foreach \X in {p,b,r,x,n,m,y,q}
\node at (\X){$\bullet$};
\draw (p) to (x) to (m) to (b) to (x);
\draw (n) to (b) to (y);
\centerarc[black](y)(151:112.5:0.12);
\draw (1.38,0) arc (180:141.06:0.12);
\draw (0.948,1.198) to (0.967,1.152) to (1.0134,1.171);
\draw (0.383,0) to (0.383,0.05) to (0.333,0.05);
\draw (r) circle (0.75);
\centerarc[black](0.9171,-0.1247)(12.07:167.93:0.5961);
\node at (1.6,0.9){$C$};
\node at (0.56,0.42){$C''$};
\end{tikzpicture}
\caption{Illustration for the proof of \cref{Cl:i}. The circle $C'$ through $x$, $y$ and $b$ is left out to avoid cluttering the figure. \label{Fig:without_curve}}
\end{figure}
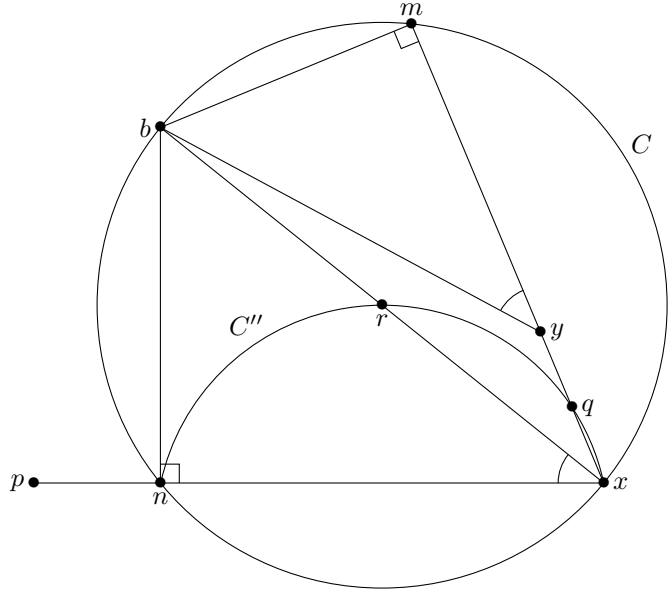

\begin{claimproof}[Proof of \cref{Cl:ii}]
Observe that $d(m,z)$ is the same for any $z\in X(b,c')$, since $m$ is the midpoint of $\overline{bc'}$. We want to show that $d(m,y)\leq d(m,z)$ for such a $z$. This implies that $y$ is in the convex hull of $X(b,c')$, so since $X(b,c')\subset B_{p'}\left(\frac{d(p',b)}{\epsilon}\right)$ by \cref{Lem:Xab} (ii), this will prove the claim.

For any $s\neq t\in \R^d$ and $u\in X(s,t)$, the angle $\rho\coloneqq\frac{\angle sut}{2}$ depends only on $\epsilon$. We have $\rho = \angle bzm$.
From \cref{Lem:Xab} (i), it follows that in an isosceles triangle where the ratio between the base and a leg is $\epsilon$, the angle between the legs is $\rho$.
In particular, $\angle bxn=\angle bxp=\rho$. But $\angle bxn\leq \angle bym$ (with equality if $x\neq y$), so $\angle bzm\leq \angle bym$, and it follows that $d(m,y)\leq d(m,z)$.
\end{claimproof}
\end{proof}

\subsection{Proof of \cref{Lem:abDisc}}
\label{Subsec:proof_abdisc}

\begin{abdisc}
Let $\epsilon\leq 0.765$, and let $a\to b\to c$. Then $(b,c]\cap B_x(d(x,a))=\emptyset$ for all $x\in X(a,b)$.
\end{abdisc}
\begin{proof}
Suppose there is an $x\in X(a,b)$ and $c'\in (b,c]$ such that $c'\in B_x(d(x,a))$, and let $\Pi$ be the plane spanned by $a,b,c'$. We can assume that $x\in \Pi$, since the point in $X(a,b)$ minimizing the distance to $c'$ is in $\Pi$.

First assume $d(x,c')=d(x,b)$. Let $n$ be the midpoint on $\overline{ab}$, and define $y$ and $p'$ as in the proof of \cref{Prop:main} using $n$, $x$, $b$ and $c'$.
Let $p$ be the midpoint on $[a,b]$. If $y\in B\coloneqq B_p\left(\frac{d_p}{\epsilon}\right)$, we get a contradiction by the same argument as in the proof of \cref{Prop:main}.
Observe that if $d(p,b)$ is fixed, $d(p,y)$ is maximized if $p$ is on the line through $n$ and $x$ on the side of $n$ furthest from $x$, so it is sufficient to consider this case.
\cref{Lem:Xab} (ii) gives us $x\in B$, and now $y\in B$ can be proved in exactly the same way as \cref{Cl:i} in the proof of \cref{Prop:main}. (Again, inserting $\epsilon=0.765$ in \cref{Fig:pythagoras} gives $d(n,x)\geq d(n,b)$.)

If $(b,c]\cap B_x(d(x,a))\neq\emptyset$ for some $x\in X(a,b)$, but there is no $c'\in (b,c]$ such that $d(c',X(a,b))=d(b,X(a,b))$, then every $c'\in (b,c]$ is contained in the interior of $B_x(d(x,a))$ for some $x\in X(a,b)$; i.e., $d(x,c')<d(x,b)$.
In particular, we can pick $c'\in B$.
Now $\cl$ is defined on $\overline{xc'}$, as $x,c'\in B$, but $\cl$ never takes $a$ or $b$ as a value on $\overline{xc'}$, as $c'$ is always closer than $a$ and $b$.
But \cref{lem:unique2} (i) implies that $\cl(x)\in (a,b)$, so $\cl(\overline{xc'})$ is disconnected. This is a contradiction, as $\overline{xc'}$ is connected and $\cl$ continuous.
\end{proof}

\subsection{Proof of \cref{Prop:finish}}
\label{Subsec:proof_finish}

Recall the statement of the proposition:
\begin{propfinish}
Let $\epsilon\leq 0.66$. Let $b$ be a sample point, $a$ a closest sample point to $b$, and $c$ the closest sample point to $b$ such that $(a,b,c)$ is compatible. Then $a$, $b$ and $c$ are consecutive.
\end{propfinish}

\begin{proof}
We know that $a$ and $b$ are consecutive by \cref{Prop:closest}. Suppose that $d(b,c)\leq d(b,c')$, that $(a,b,c)$ are compatible, and that $a$, $b$ and $c'$ are consecutive for some $c'\neq c$. By \cref{Lem:abDisc}, $(a,b,c')$ is compatible.
We will show that this leads to a contradiction, proves the lemma by process of elimination.
This will involve finding a series of bounds on various angles. Consult \cref{Fig:finish} for some geometric intuition.

\begin{figure}
\centering
\begin{tikzpicture}[scale=1.7]
\coordinate (q) at (0,-1){};
\node[left] at (q){$q$};
\coordinate (b) at (0,0){};
\node[below left=.04cm of b] {$b$};
\coordinate (c) at (1.777, 0.9173){};
\node[right] at (c){$c$};
\coordinate (p) at (-0.455, 0.66){};
\node[above] at (p){$p$};
\coordinate (a) at (-1.184, 0.995){};
\node[above right=.04cm of a] {$a$};
\foreach \X in {p,b,c,q,a}
\draw[color=black,fill=black] (\X) circle (.036);
\draw (q) to (b) to (c);
\draw (q) arc (-19.27:90:1.5152);
\draw (c) arc (-24.16:-170:1.6057);
\node at (0.33, -0.5){$\partial U$};
\node at (1.2, 0){$\partial U'$};
\end{tikzpicture}
\caption{Some restrictions on the point set $\{a,b,c,p,q\}$. We have $a,p\notin U$, where $U$ is the union of balls $B_x\left(\frac{d_q}{\epsilon}\right)$ as in \cref{Prop:main}, and $a\notin U' =\bigcup_{x\in X(b,c)}B_x(d(x,b))$. \cref{Cor:141} and \cref{Cor:102} express these restrictions in formulas. \label{Fig:finish}}
\end{figure}
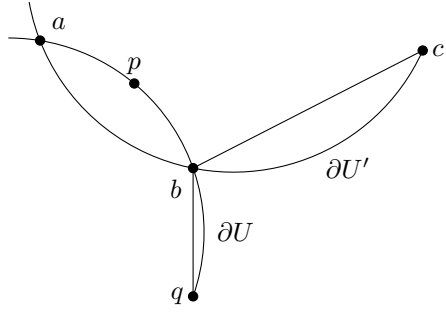

Let $q$ be the midpoint of $[b,c']$. Note that we get $d(c,b)\leq 2d(q,b)$ from $d(c,b)\leq d(c',b)$. Suppose that $d(a,b)\leq 0.773d(c,b)$, so $d(a,b)\leq 1.546d(q,b)$. Then by \cref{Cor:141},
\[
\angle qba > 70.73^\circ + \arccos(0.33\cdot 1.546) > 130.05^\circ,
\]
and by \cref{Cor:102},
\[
\angle cba > 51.45^\circ + \arccos(0.6231\cdot 0.773) > 112.65^\circ.
\]
From \cref{Lem:117}, we know that $\angle qbc> 117.3^\circ$. Thus,
\[
\angle qba + \angle cba + \angle qbc > 130.05^\circ + 112.65^\circ + 117.3^\circ =360^\circ,
\]
which is impossible. We conclude that $d(a,b)> 0.773d(c,b)$. This implies that there is a point $p\in (a,b)$ with $d(p,b)=0.4d(c,b)$, and $d(p,b)\leq 0.8d(q,b)$ then follows from $d(c,b)\leq 2d(q,b)$. By \cref{Cor:141},
\[
\angle qbp > 70.73^\circ + \arccos(0.33\cdot 0.8) > 145.42^\circ.
\]

\begin{figure}
\centering
\begin{tikzpicture}[scale=3]
\coordinate (n) at (0,0){};
\node[below] at (n){$n$};
\coordinate (p) at (0,.5){};
\node[left] at (p){$p$};
\coordinate (b) at (1.6,0){};
\node[right] at (b){$b$};
\coordinate (a) at (-1.4,0){};
\node[above] at (a){$a$};
\foreach \X in {n,p,b,a}
\draw[color=black,fill=black] (\X) circle (.02);
\draw (n) to (p) to (b) to (a);
\node at (.9,0.35){$d(p,b)$};
\node at (.8,-.15){$\geq \frac{d(a,b)}{2}$};
\draw (0,.07) to (.07,.07) to (.07,0);
\end{tikzpicture}
\caption{If $d(p,a)\leq d(p,b)$ and $n$ is where the normal from $p$ meets $\overline{ab}$, then $2d(n,b)\geq d(a,b)$ and $\cos(\angle pba) = \frac{d(n,b)}{d(p,b)} \geq \frac{d(a,b)}{2 d(p,b)}$. \label{Fig:cosine}}
\end{figure}
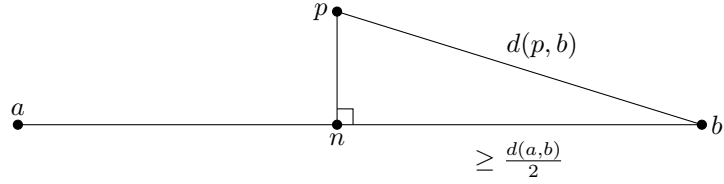

Suppose $d(p,a)\leq d(p,b)$. Then \cref{Fig:cosine} gives a simple geometric proof that
\[
\cos(\angle pba) \geq \frac{d(a,b)}{2d(p,b)}=\frac{d(a,b)}{0.8d(c,b)},
\]
which is equivalent to
\begin{equation}
\label{Eq:pba}
\angle pba \leq \arccos\left(1.25\frac{d(a,b)}{d(c,b)}\right).
\end{equation}
In addition, we have
\begin{align*}
\angle pba &\geq \angle cba - \angle cbp\\
&\geq \angle cba - (360^\circ - \angle qbp - \angle qbc)\\
&> 51.45^\circ + \arccos\left(0.6231\frac{d(a,b)}{d(c,b)}\right) + 145.42^\circ + 117.3^\circ - 360^\circ,
\end{align*}
where we have again used \cref{Cor:102} and \cref{Lem:117} to get bounds on $\angle cba$ and $\angle qbc$, respectively. Putting this together with \cref{Eq:pba}, we get
\[
\arccos\left(1.25\frac{d(a,b)}{d(c,b)}\right) > \arccos\left(0.6231\frac{d(a,b)}{d(c,b)}\right) - 45.83^\circ.
\]
This is false for $d(a,b)= 0.773d(c,b)$, and therefore false whenever $d(a,b)\geq 0.773d(c,b)$, as the derivative of the left hand side is less than the derivative of the right hand side. Thus, we have a contradiction, so $d(p,a)> d(p,b)$.

Observe that
\[
\angle pbc\leq 360^\circ -\angle qbp - \angle qbc < 360^\circ -145.4^\circ -117.3^\circ = 97.3^\circ.
\]

Let $x$ be an element in $X(a,b)$ maximizing the distance from $p$. By \cref{Lem:Xab} (iii), we then have (following \cref{Fig:pythagoras})
\[
\angle abp = \angle xbp - \angle xba < 90^\circ - \arcsin\left(\frac{2-0.66^2}{2}\right) < 39^\circ,
\]
so by \cref{Cor:102},
\[
\angle pbc \geq \angle cba - \angle abp > 102.9^\circ - 39^\circ = 63.9^\circ.
\]
The remainder of the proof is similar to the proof of \cref{Lem:117}, but with the simplifications that we have the equality $d(p,b)=0.4d(c,b)$ instead of an inequality, and that we only have to consider $\angle pbc$ in the narrow range $[63.9^\circ,97.3^\circ]$, which saves us some work.
We restrict ourselves to the plane spanned by $p$, $b$ and $c$. Let $p=(0,-1)$, $b=(0,0)$, let $c$ be in the right half-plane, and let $s$ and $s'$ be the centers of the circles through $p$ and $b$ with radius $\frac{1}{0.66}$ such that $s$ is in the left half-plane and $s'$ is in the right.
One can calculate that if $\angle pbc$ is $63.9^\circ$ or $97.3^\circ$, then $d(s',c)< d(s',s)-\frac{1}{0.66}$. This means that $\cl(s')\notin [b,p]$, since by \cref{lem:unique2} (ii), $[p,b]\subset B_s\left(\frac{1}{0.66}\right)$.
But by \cref{lem:unique2} (i), $\cl(s')\in (p,b)$, so we get a contradiction. It follows that we get the same contradiction for all $\angle pbc\in [63.9^\circ,97.3^\circ]$, and we are done.
\end{proof}

\section{The relationship between $\epsilon$-sampling and $\rho$-sampling}
\label{Sec:rho-sampling}

In this section, we discuss two examples showing relative strengths and weaknesses of $\epsilon$-sampling and the $\rho$-sampling condition introduced in \cite{ohrhallinger2016curve}. To avoid ambiguity, we will refer to $\rho$-sampling as \emph{$\rho$-reach-sampling}.
\begin{definition}
For $a,b$ in the same connected component of $\C$, let
\[
\reach([a,b]) = \inf_{p\in [a,b]} \lfs(p).
\]
For $\rho>0$, a curve $\C$ is $\rho$-reach-sampled if for all $a\to b$ and $p\in [a,b]$, $d(p,\Ss)< \rho\reach([a,b])$.
\end{definition}
In \cite[Theorem 2]{ohrhallinger2016curve}, it is proved that for $\rho<0.9$, one can reconstruct $\rho$-reach-sampled curves in the plane. To judge the strength of the $\rho$-reach-sampling condition, the authors consider sampling parallel lines, among other examples.
In this case, $\lfs(p)$ is constant, so $\Ss$ is an $r$-sample if and only if it is an $r$-reach-sample.
Thus, in this case, requiring that $\Ss$ is a $0.66$-sample (or even a $0.72$-sample) is stricter than requiring that it is a $0.9$-reach-sample, which results in denser sampling.
Thus, $\rho$-reach-sampling is arguably a better sampling condition than $\epsilon$-sampling in this case.

\begin{figure}
\centering
\begin{tikzpicture}[scale=2]
\draw[thick] (-2,0) to (0,0);
\draw[thick] (1,0) to (2,0);
\centerarc[black, thick](2,-.5)(50:90:.5);
\draw[color=blue, thick] (2,-.5) to [out=250,in=90] (1.95,-1);
\draw[color=blue, thick] (0,-.02) to (0,-1);
\draw[color=red] (-.05,-.05) rectangle (.05,.05);
\draw[color=red] (-.05,.05) to (1,1);
\draw[color=red] (-.05,-.05) to (1,-1);
\begin{scope}
\clip(-1,-1) rectangle (1,1);
\draw[color=red] (.05,.05) to (3,1);
\draw[color=red] (.05,-.05) to (3,-1);
\end{scope}
\begin{scope}
\draw[color=red] (1,-1) rectangle (3,1);
\end{scope}
\end{tikzpicture}
\caption{The hook example. $\C$ is black; the medial axis is blue. \label{Fig:hook}}
\end{figure}
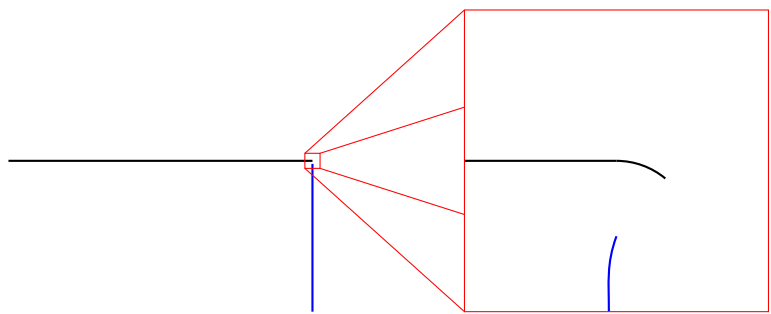
On the other hand, consider the following example, where we take the liberty of skipping some details and working with an open curve for the purpose of simplicity. This curve $\C$ is the union of the line segment $I$ from $a=(0,0)$ to $b=(1,0)$ and a tiny ``hook'' attached at $b$; see \cref{Fig:hook}.
We demand that $a\in \Ss$, and try to sample $I$ as sparsely as possible, first requiring that $\Ss$ is a $0.66$-sample, and then requiring that it is a $0.9$-reach-sample.

The medial axis of $\C$ is very close to the ray from $b$ going straight down. In particular, for $p\in \C$ that are not very close to $b$, $\lfs(p)\approx d(p,b)$.
We begin by deciding where to put the $p=(x,0)\in \Ss$ that is closest to $a$. Some calculation shows that if $x\leq 0.795$ and $q=\left(\frac{x}{2},0\right)$, then $0.66 d(q,b)>d(q,p)$, and it follows that the condition for $0.66$-sampling is satisfied for all $q\in [a,b]$ (assuming that $\lfs(q)$ is sufficiently close to $d(q,b)$).
However, if $x\geq 0.643$, then $d(q,p)>0.9d(p,b)\approx 0.9\lfs(p)$, so the $0.9$-reach-sampling condition is not satisfied.

The argument can be repeated for the next sample point after $p$ and so on until the distance to $b$ is not a good approximation for the local feature size. The conclusion is that we need more sample points for $\Ss$ to be a $0.9$-reach-sample than for $\Ss$ to be a $0.66$-sample.

The message we suggest to take home from the examples of parallel lines and the ``hook'' is that $\rho$-reach-sampling does well when the curvature and local feature size are fairly constant, while $\epsilon$-sampling does well when the curvature changes rapidly.
Which of the sampling conditions does better in practical applications is an open question that we will not attempt to answer here.

\end{document}